\definecolor{dullmagenta}{rgb}{0.4,0,0.4}   
\definecolor{darkblue}{rgb}{0,0,0.4}
\newtheorem{theorem}{Theorem}
\newtheorem{lemma}[theorem]{Lemma}
\newcommand{\Pain}[1]{\text{P}_{\mathrm{#1}}}
\newcommand{\dPain}[1]{\text{P}\left(\mathrm{#1}\right)}
\theoremstyle{definition}
\theoremstyle{remark}
\newtheorem{remark}[theorem]{Remark}
\numberwithin{equation}{section}
\begin{document}

{\noindent\Large\bf Recurrence coefficients for discrete orthogonal polynomials with hypergeometric 
weight and discrete Painlev\'e equations 
%
%
}
\medskip
\begin{flushleft}

\textbf{Anton Dzhamay}\\
School of Mathematical Sciences, The University of Northern Colorado, Greeley, CO 80526, USA\\
E-mail: \href{mailto:adzham@unco.edu}{\texttt{adzham@unco.edu}}\\[5pt]

\textbf{Galina Filipuk}\\
Faculty of Mathematics, Informatics and Mechanics, University of Warsaw, Banacha 2, Warsaw, 02-097, Poland\\
E-mail: \href{mailto:filipuk@mimuw.edu.pl}{\texttt{filipuk@mimuw.edu.pl}}\\[5pt] 

\textbf{Alexander Stokes}\\
Department of Mathematics, University College London, Gower Street, London, WC1E 6BT, UK
E-mail: \href{mailto:alexander.stokes.14@ucl.ac.uk}{\texttt{alexander.stokes.14@ucl.ac.uk}}\\[8pt]

\emph{Keywords}: orthogonal polynomial ensembles, Askey-Wilson scheme, Painlev\'e equations, difference equations,
isomonodromic transformations, birational transformations.\\[3pt]

\emph{MSC2010}: 33D45, 34M55, 34M56, 14E07, 39A13

\end{flushleft}

%
%
%
%

%
\date{}

\begin{abstract}

Over the last decade it has become clear that discrete Painlev\'e
equations appear in a wide range of important mathematical and physical problems. 
Thus, the question of recognizing a given non-autonomous recurrence as a discrete 
Painlev\'e equation and determining its type according to Sakai's classification 
scheme, understanding whether it is equivalent to some known (model) example, and 
especially finding an explicit change of coordinates transforming it to such an example, 
becomes one of the central ones. Fortunately, Sakai's geometric theory provides an 
almost algorithmic procedure for answering this question. In this paper we illustrate 
this procedure by studying an example coming from the theory of discrete orthogonal 
polynomials. There are many connections between orthogonal polynomials and Painlev\'e
equations, both differential and discrete. In particular, often the coefficients of 
three-term recurrence relations for discrete orthogonal polynomials can be expressed 
in terms of solutions of  discrete Painlev\'e equations. In this work we study discrete 
orthogonal polynomials with general hypergeometric weight and show that their recurrence 
coefficients satisfy, after some change of variables, the standard discrete Painlev\'e-V 
equation. We also provide an explicit change of variables transforming this equation to 
the standard form. 

%
%
%

\end{abstract}

%

\section{Introduction} 
\label{sec:introduction}

In describing interesting physical and mathematical models we often rely on various special functions, such as Airy or Bessel functions.
Such functions satisfy certain \emph{linear} ordinary differential equations, and over a hundred years ago P.~Painlev\'e became interested
in the question of whether it may be possible to define purely \emph{nonlinear special functions} as solutions of \emph{nonlinear} ordinary
differential equations. As usual, the nonlinear case is quite subtle, since solutions of nonlinear differential equations do not satisfy 
the superposition principle, and in general, it may not even be possible to define the notion of a \emph{general solution} since solutions 
can develop unexpected singularities that depend not just on the equation, but also on the
initial conditions. Nevertheless, this line of reasoning led Painlev\'e to define a property of a nonlinear
ordinary differential equation (essentially the absence of \emph{movable}, i.e., dependent on initial conditions, essential 
critical points) that guarantees the existence of a general solution; this property is now called the \emph{Painlev\'e Property}.
Painlev\'e and his student B.~Gambier then studied a large class of algebraic second-order differential equations that satisfy this property and 
found that, in addition to equations that are linear or can be reduced to linear, there are six new families of equations that 
are now called \emph{Painlev\'e equations} $\Pain{I},\dots, \Pain{VI}$. Solutions of these equations, the so-called \emph{Painlev\'e transcendents}, 
are indeed new \emph{purely nonlinear special functions}. Over the last fifty years Painlev\'e transcendents have been playing an 
increasingly important role in the  description of many nonlinear models, from Quantum Cohomology to the theory of Random Matrices. 
Probably the most important example is the famous Tracy-Widom distribution from Random Matrix Theory that can be expressed in terms of 
the Hastings-McLeod solution of the Painlev\'e-II equation. 

The theory of discrete Painlev\'e equations is much more recent. These equations were originally defined as second-order 
discrete non-linear equations (or second-order nonlinear recurrence relations) that become one of the differential Painlev\'e 
equations in a continuous limit. The intensive study of these equations began in the early 1990's \cite{RamGraHie:1991:DVOTPE} 
and many examples were obtained in the works of B.~Grammaticos, A.~Ramani, and their collaborators 
by the application of the \emph{singularity confinement} criterion to deautonomizations of known discrete dynamical systems,
see the review \cite{GraRam:2004:DPER} and references therein. Discrete Painlev\'e equations were also studied from the perspective 
of the \emph{representation theory of affine Weyl groups} in a series of papers by  M.~Noumi, Y.~Yamada, and their collaborators, 
see, e.g, \cite{NouYam:1998:AWGDDSAPE}. In 2001 H.~Sakai, in his seminal paper \cite{Sak:2001:RSAWARSGPE} that used techniques
from \emph{birational algebraic geometry}, gave the definitive classification scheme for discrete Painlev\'e equations and clarified 
the relationship between discrete and differential Painlev\'e equations. Since then the theory of discrete Painlev\'e equations has
reached a certain level of maturity. We know many examples of discrete Painlev\'e equations, their properties, 
special solutions for certain parameter values, Lax pairs, various degenerations, etc.; the recent survey paper \cite{KajNouYam:2017:GAOPE} 
is both an excellent introduction and a comprehensive overview of the present theory of discrete Painlev\'e equations.

Moreover, there is an increasing body of evidence that discrete Painlev\'e equations, similar to their differential counterpart, appear in 
a wide variety of important applied problems, such as the computations of gap probabilities \cite{Bor:2003:DGPADPE} of various ensembles 
in the emerging field of \emph{integrable probability} \cite{BorGor:2016:LIP}, or in describing recurrence coefficients of semi-classical 
orthogonal polynomials \cite{Van:2018:OPAPE}, and many others. To make a connection between an applied problem and the wealth of known
results, it is then important to be able to answer the following sequence of questions:
\begin{enumerate}[(a)]
	\item Suppose one obtains a certain non-linear second order recurrence relation. Does this recurrence fit into the discrete Painlev\'e framework, 
	i.e., into Sakai's classification scheme?
	\item If so, what is the type of this equation, i.e., the type of its algebraic surface in Sakai's classification?
	\item After the type of the equation is determined, the next question is whether it is equivalent to any known examples of equations of the same type.
	In general, there are infinitely many non-equivalent discrete Painlev\'e equations, but usually some simplest forms
	of such equations are well-known. For example, canonical examples of equations of each type are listed in \cite{KajNouYam:2017:GAOPE},
	see also Sakai's original paper \cite{Sak:2001:RSAWARSGPE}.
	\item Finally, if the equation is indeed equivalent to a canonical form of some discrete Painlev\'e equation, how to find an explicit
	change of variables transforming one equation into the other. In particular, answering this question requires matching of various
	parameters in the applied problem with parameters in the standard form of this discrete Painlev\'e equation. Note also that being able to do this may also 
	result in uncovering new connections between very different problems.
\end{enumerate}

Fortunately, the algebro-geometric theory of Painlev\'e equations provides us with a powerful set of tools and essentially a near-algorithmic procedure
to answer exactly these questions. Unfortunately, the necessary mathematical background to master this theory, such as birational algebraic geometry, 
the representation theory of affine Weyl groups, and the word equivalence problem in groups, 
is often quite different from that of the researchers working with applied problems, and so the learning curve can feel steep. Nevertheless, we believe that 
it is still possible to learn, at least on a computational level first, the essentials of how to approach these questions. Thus, the purpose of the present paper
is to illustrate the above procedure in detail using one concrete example, hoping that anyone interested would then be able to make necessary changes
to adjust this procedure for a different example.

The problem that we consider belongs to the theory of \emph{orthogonal polynomials}. 
In fact, the relationship between discrete Painlev\'e equations and orthogonal polynomials is much older than the actual \emph{definition} of a discrete Painlev\'e
equation --- the first example of a discrete Painlev\'e-I equation originally appeared in the work of Shohat \cite{Sho:1939:DEOP}. There are many 
connections between recurrence coefficients of semi-classical orthogonal polynomials and solutions of  Painlev\'e equations, both discrete and 
differential (see, for instance, \cite{Van:2018:OPAPE} and numerous references therein).

Let $\{p_{n}(x) = \gamma_{n} x^{n} + \cdots \}$ be the collection of polynomials that are orthonormal on the set $\mathbb{N}=\{0,1,2,\ldots\}$
of non-negative integers with respect to the \emph{hypergeometric weight} $w_{k}$,
\begin{equation}\label{eq:hyp-weight}
	\sum_{k=0}^\infty p_n(k)p_m(k) w_k = \delta_{m,n}, \qquad 
	w_k = \frac{(\alpha)_k (\beta)_k}{(\gamma)_k k!} c^k, \quad \alpha,\beta,\gamma >0,\ 0 < c < 1,
\end{equation}
where $(\cdot)_k$ is the usual Pochhammer symbol and $\delta_{m,n}$ is the Kronecker delta. This collection of polynomials is known  
as the discrete orthogonal polynomials with hypergeometric weights, since the moments of this weight function are given in terms of the Gauss hypergeometric function ${}_2F_1(\alpha,\beta;\gamma;c)$ and its derivatives; it has been studied in \cite{FilVan:2018:DOPWHWAPV}. These polynomials satisfy the three term recurrence relation
\begin{equation}
	xp_n(x) = a_{n+1}p_{n+1}(x) + b_n p_n(x) + a_n p_{n-1}(x),
\end{equation}
where  $a_0=0$. The coefficients $a_n$ and $b_n$ are called the \emph{recurrence coefficients} \cite{Chi:1978:AITOP, Ism:2005:CAQOPIOV, Sze:1967:OP}. 
Note that the corresponding \emph{monic} orthogonal polynomials $P_n = p_n/\gamma_n$ satisfy a similar three term recurrence relation
\begin{equation}
	xP_n(x) = P_{n+1}(x) + b_n P_n(x) + a_n^2 P_{n-1}(x). 
\end{equation}
In \cite{FilVan:2018:DOPWHWAPV} it was shown that these recurrence coefficients $\{a_{n},b_{n}\}$, as functions of the discrete variable $n$, satisfy, after some 
change of variables, a system of non-linear difference equations and as functions
of the continuous parameter $c$ of the hypergeometric weight, they satisfy the differential Toda system. 
From the differential and discrete systems one can obtain a differential equation, which in turn can be reduced to the 
$\sigma$-form of the sixth Painlev\'e equation.
In \cite{HuFilChe:2019:DDERCOPHWBTPE}, using a direct computation, it was then shown that the discrete system is a composition of B\"acklund transformations of the sixth Painlev\'e equation. In the present paper we give a geometric explanation of this result, show that the discrete system is in fact equivalent 
to the standard discrete Painlev\'e-V equation, and provide an explicit change of variables achieving that.

To be more specific, let us introduce two new variables $x_{n}$ and $y_{n}$ parameterizing the recurrence coefficients $a_{n}^{2}$ and $b_{n}$ via
\begin{align}\label{eq:xy-vars}
   a_n^2\frac{1-c}{c}&=y_n+\sum_{k=0}^{n-1}x_{k}+\frac{n(n+\alpha+\beta-\gamma-1)}{1-c},&\quad 
   b_{n}&=x_{n}+\frac{n+(n+\alpha+\beta)c-\gamma}{1-c}.
\end{align} 
It was shown in \cite[Theorem 3.1]{FilVan:2018:DOPWHWAPV} that $x_{n}$, $y_{n}$, $n\in \mathbb{N}$, satisfy the  first-order system of non-linear 
non-autonomous difference equations
\begin{align}
\begin{split}\label{eq:yn-evol}
(y_{n}-&\alpha\beta+(\alpha+\beta+n)x_{n}-x_{n}^{2})(y_{n+1}-\alpha\beta+(\alpha+\beta+n+1)x_{n}-x_{n}^{2})\\
       &=\frac{1}{c}(x_{n}-1)(x_{n}-\alpha)(x_{n}-\beta)(x_{n}-\gamma),
\end{split}\\
\begin{split}\label{eq:xn-evol}
(x_{n}+&\mathfrak{Y}_{n})(x_{n-1}+\mathfrak{Y}_{n})\\
&=\frac{(y_{n}+n\alpha)(y_{n}+n\beta)(y_{n}+n\gamma-(\gamma-\alpha)(\gamma-\beta))(y_{n}+n-(1-\alpha)(1-\beta))}{(y_{n}(2n+\alpha+\beta-\gamma-1)+n((n+\alpha+\beta)(n+\alpha+\beta-\gamma-1)-\alpha\beta+\gamma))^{2}},
\end{split}
\intertext{where $\alpha, \beta, \gamma, c$ are the parameters of the hypergeometric weight $w_{k}$ in \eqref{eq:hyp-weight} and }
\mathfrak{Y}_{n}&=\frac{y_{n}^{2}+y_{n}(n(n+\alpha+\beta-\gamma-1)-\alpha\beta+\gamma)-\alpha\beta n(n+\alpha+\beta-\gamma-1)}{y_{n}(2n+\alpha+\beta-\gamma-1)+n((n+\alpha+\beta)(n+\alpha+\beta-\gamma-1)-\alpha\beta+\gamma)}.\label{eq:Yn}
\intertext{The initial conditions for this recurrence are given by}
\begin{split}\label{eq:xy-ic}
	x_0=\frac{\alpha\beta c}{\gamma}\frac{{}_2F_{1}(\alpha+1,\beta+1;\gamma+1;c)}{{}_2F_{1}(\alpha,\beta;\gamma;c)}+\frac{(\alpha+\beta)c-\gamma}{c-1},\quad y_0=0.
\end{split}
\end{align}

For the hypergeometric weights the connection with the
$\sigma$-form of the sixth Painlev\'e equation (with independent variable $c$) is known (see \cite[Theorem 5.1]{FilVan:2018:DOPWHWAPV}). The essential role is played by the Toda system for the recurrence coefficients (see, e.g., \cite[\S 2.8]{Ism:2005:CAQOPIOV} or \cite[\S 3.2.2]{Van:2018:OPAPE}). For the hypergeometric weights, it is  given by
\begin{eqnarray}\label{(3.1)}
     c \frac{d}{dc} a_n^2 &=& a_n^2(b_n-b_{n-1}), \qquad n \geq 1,  \\ \label{(3.2)}
     c \frac{d}{dc} b_n &=&  a_{n+1}^2 - a_n^2, \qquad n \geq 0.
\end{eqnarray}
It is proved in \cite[Theorem 5.1]{FilVan:2018:DOPWHWAPV} that a simple linear change of variable transforms $S_n=\sum_{k=0}^{n-1}x_k$ into the solutions of the $\sigma$-form of the sixth Painlev\'e equation. Knowing $S_n$ one can find $x_n,$ $y_n$ and, hence,  the recurrence coefficients $a_n^2$, $b_n$ in terms of $S_n$ and its derivatives. Moreover, it is shown in \cite{HuFilChe:2019:DDERCOPHWBTPE} that the differential equation for $x_n$ can be directly reduced to the sixth Painlev\'e equation.

Our main objective for this paper is to illustrate the general process of identifying a discrete dynamical system as a discrete Painlev\'e equation and explicitly 
rewriting it in some standard form, using equations (\ref{eq:yn-evol}--\ref{eq:xn-evol}) as an example of an applied system. This process consists of the following steps, where
we assume that we indeed have some discrete Painev\'e equation, otherwise the process will terminate at some step.

\begin{enumerate}[(Step 1)]
	\item \textbf{Identify the singularity structure of the problem.} For that, if necessary, rewrite our recurrence equation as a system of
			two first-order recurrences, $(x_{n+1},y_{n+1}) = \psi^{(n)}(x_{n},y_{n})$. 
			The mapping $\psi^{(n)}: \mathbb{C}^{2}\to \mathbb{C}^{2}$ should be 
			a birational mapping that may depend on various parameters, including the iteration step $n$ that we consider to be generic.
			Then compactify the configuration space from $\mathbb{C}^{2} = \mathbb{C}\times \mathbb{C}$ to 
			$\mathbb{P}^{1} \times \mathbb{P}^{1}$. Find the base points of the mapping and resolve
			them using the blowup procedure. Continue doing that until all base points for both $\psi^{(n)}$ and $(\psi^{(n)})^{-1}$
			are resolved (for discrete Painlev\'e equations this process should 
			terminate in finitely many steps). Thus, we get an isomorphism of resulting rational algebraic surfaces, 
			$\psi^{(n)}: \mathcal{X}_{n}\xrightarrow{\simeq}\mathcal{X}_{n+1}$. In making this computation, it is important to keep in mind that 
			positions of base points usually evolve with the mapping, so one needs to be careful distinguishing between the points in the domain 
			and the points in the range. We also remark that sometimes
			the singularity structure can be seen before even studying the dynamics; e.g., singularities can occur as a result of a
			parameterization of some moduli space appearing in the problem, as in \cite{DzhKni:2019:QPEDQE}, for example.
	\item \textbf{Linearize the mapping on $\operatorname{Pic}(\mathcal{X})$.} This can be done explicitly in relatively simple cases, such as the 
			present example. But sometimes the evolution mapping can be too complicated even for a Computer Algebra System. In this case, it may be 
			possible to deduce the action of the mapping on $\operatorname{Pic}(\mathcal{X})$ from the knowledge of parameter evolution via the 
			\emph{Period Map}, see \cite{DzhKni:2019:QPEDQE} for an example of such a computation.	
	\item \textbf{Determine the surface type, according to Sakai's classification scheme.} For a discrete Painlev\'e equation, although 
			the positions of base points may evolve, the \emph{configuration} will stay fixed, and so the surfaces $\{\mathcal{X}_{n}\}$ will 
			all have the same type. There should be \emph{eight} such base points; those points will lie on some (generically unique) 
			biquadratic curve on $\mathbb{P}^{1}\times \mathbb{P}^{1}$ (i.e., a curve whose defining polynomial, when written in a coordinate chart, has
			bi-degree $(2,2)$) and the \emph{point configuration} is defined to be the configuration of the irreducible components of this
			curve. Each such component should have self-intersection index $-2$ and is associated with a node of an \emph{affine Dynkin diagram}; nodes
			are connected when the corresponding components intersect. The type of this Dynkin diagram is called the \emph{surface type} of the equation. 
			This description assumes that the surfaces $\mathcal{X}_{n}$ are \emph{minimal}, but it can happen that after the initial blowup procedure
			is complete, some $-1$-curves would have to be blown down. This will also result in some irreducible components having higher negative self-intersection
			index. The blowing down procedure is quite delicate, so here we assume that the surfaces $\mathcal{X}_{n}$ are indeed minimal, but see
			\cite{DzhSakTak:2013:DSTTHFADPE} and \cite{DzhKni:2019:QPEDQE} for examples requiring a blowing down.
	\item \textbf{Find a preliminary change of basis of $\operatorname{Pic}(\mathcal{X})$.} At this step, we only need to ensure that this change of basis 
			identifies the \emph{surface roots} (or nodes of the Dynkin diagrams of our surface) with the standard example. 
	\item \textbf{Find the translation vector and compare it with the standard dynamic.} Using this preliminary change of basis we can define the
			\emph{symmetry roots} for our surface that match the standard example. Using the action $\varphi_{*}$ of the mapping on $\operatorname{Pic}(\mathcal{X})$
			we can then see the induced action on the symmetry sub-lattice and, in particular, on the symmetry roots. For the discrete Painlev\'e equations, this action 
			on each root should be a translation by some multiple of the anti-canonical divisor class. Even when this vector is not the same as the translation vector 
			for the reference dynamic, it may be \emph{conjugate} to it. To find out whether this is the case, we represent each translation as a word in 
			the generators of the extended affine Weyl group and solve the conjugacy problem for words in groups. If they are conjugate, the conjugation element is 
			the necessary adjustment to our preliminary change of basis.
	\item \textbf{Find the change of variables reducing the applied problem to the standard example.} Adjusting the change of bases in $\operatorname{Pic}(\mathcal{X})$,
			if necessary, we now have the identification on the level of the Picard lattice. Next, we need to find the actual change of variables that induces
			that linear change of basis. For that, identify the curves that form the basis in the corresponding coordinate pencils. Those curves then are our 
			projective coordinates, up to a M\"obius transformation. To fix the M\"obius transformations, use the mapping of coordinate divisors. An important part
			of this computation is the identification of various parameters between the two problems. This, in fact, can be done ahead of time by using the 
			\emph{Period Map}, which gives the parameterization in terms of canonical (for the given choice of root bases) \emph{root variables}. Expressing these
			root variables in terms of parameters of the problem gives the necessary identification of parameters.
\end{enumerate}
	
	In the next section we carefully illustrate each step of this procedure using equations (\ref{eq:yn-evol}--\ref{eq:xn-evol}) as an example of an applied system. 
	Our main result is the following Theorem.
	
	\begin{theorem}\label{thm:coordinate-change}
		Recurrences (\ref{eq:yn-evol}--\ref{eq:xn-evol}) are equivalent to to the standard discrete Painlev\'e equation \eqref{eq:dPv-std}. This equivalence is achieved 
		via the following change of variables:
		\begin{equation}\label{eq:xy2fg}
			\begin{aligned}
				x(f,g)& = \gamma - g - \frac{(n + \beta)f}{f - 1},\\
				y(f,g)& = (g + \alpha + \beta + n - \gamma)(g + 2 \beta + 2n - \gamma) - n \alpha - \frac{gt (g + \beta - \gamma)}{f} \\
				&\qquad + \frac{(n + \beta)((c-1)(2g + \alpha + 3 \beta + 3n - 2 \gamma) + (\alpha + \beta + n - \gamma) + n)}{c(f-1)} + 
				\frac{(c-1)(n+ \beta)^{2}}{c(f-1)^{2}}.
			\end{aligned}
		\end{equation}
		The inverse change of variables is given by	
		\begin{equation}\label{eq:fg2xy}
			\begin{aligned}
			f(x,y) &= \frac{t(x-\beta)(x-\gamma)}{((x - \alpha) (x - \beta) - n x - y)},\\			
			g(x,y) &= -\frac{(x - \gamma)(((x - \alpha) (x - \beta) - n x - y) - t(x - \beta)(x - \gamma + \beta + n))
			}{((x - \alpha) (x - \beta) - n x - y) - t(x - \beta)(x - \gamma)}.
			\end{aligned}
		\end{equation} 
		Note that the parameters $c$ and $t$ are related by $ct = 1$.	
	\end{theorem}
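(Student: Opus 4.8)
\medskip
\noindent\textbf{Proof plan.}
The plan is to carry out, on the recurrence (\ref{eq:yn-evol}--\ref{eq:xn-evol}), the six-step procedure described above. First I would repackage (\ref{eq:yn-evol}--\ref{eq:xn-evol}) as a single first-order birational map $\psi^{(n)}\colon(x_{n},y_{n})\mapsto(x_{n+1},y_{n+1})$ of $\mathbb{C}^{2}$: equation \eqref{eq:yn-evol} already expresses $y_{n+1}$ as a rational function of $(x_{n},y_{n})$, and substituting this $y_{n+1}$ into \eqref{eq:xn-evol} with $n$ shifted to $n+1$ then expresses $x_{n+1}$ rationally as well. I would then compactify the configuration space to $\mathbb{P}^{1}\times\mathbb{P}^{1}$, locate the indeterminacy points of $\psi^{(n)}$ and of $(\psi^{(n)})^{-1}$, and resolve them by successive blow-ups, obtaining rational surfaces $\mathcal{X}_{n}$ on which $\psi^{(n)}\colon\mathcal{X}_{n}\xrightarrow{\simeq}\mathcal{X}_{n+1}$ is an isomorphism --- keeping the (parameter- and $n$-dependent) base points in the domain carefully apart from those in the range. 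Taking the classes of the two rulings and of the exceptional divisors as a basis of $\operatorname{Pic}(\mathcal{X}_{n})$, I would then write down the induced linear map $\psi^{(n)}_{*}$ explicitly (a finite computation, well within reach of a computer algebra system here). The eight blow-up points lie on the biquadratic curve they determine; decomposing it into irreducible $(-2)$-components and recording their intersection pattern yields an affine Dynkin diagram, which I expect to be that of the surface underlying the standard discrete Painlev\'e equation \eqref{eq:dPv-std}, with a configuration independent of $n$ --- this would fix the surface type and confirm that (\ref{eq:yn-evol}--\ref{eq:xn-evol}) is a genuine discrete Painlev\'e equation of that type (one should also check minimality, i.e.\ that no $(-1)$-curve has to be contracted after the resolution).

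With the type in hand, I would choose a change of basis of $\operatorname{Pic}(\mathcal{X}_{n})$ identifying the surface roots with those of the reference model; by orthogonality this pins down the symmetry sublattice and a set of simple symmetry roots, and transporting $\psi^{(n)}_{*}$ through the identification gives the induced action on that sublattice, which must be a translation by a multiple of the anticanonical class. I would compare this translation vector with the one realizing \eqref{eq:dPv-std}: if they agree the lattice-level identification is complete, and if not I would represent both translations as words in the generators of the extended affine Weyl group $\widetilde{W}$ of the surface and test them for conjugacy, the conjugating element (when it exists) supplying the needed correction to the preliminary change of basis. The parameter dictionary --- in the end the relation $ct=1$ --- would come from the period map, which expresses the root variables of the reference model in terms of $\alpha,\beta,\gamma,c$ and $n$. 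Finally, to upgrade the lattice identification to an honest birational change of coordinates, I would note that the two coordinate pencils of the reference model pull back, under the corrected change of basis, to two pencils of curves on $\mathcal{X}_{n}$; writing down the curves in these linear systems gives $f$ and $g$ as functions of $(x,y)$, each determined only up to a M\"obius transformation of the target $\mathbb{P}^{1}$, and those residual freedoms are fixed by requiring the distinguished coordinate divisors to map as in the reference model. This should yield precisely \eqref{eq:fg2xy}, whose inverse is \eqref{eq:fg2xy}; a direct, if lengthy, substitution of \eqref{eq:xy2fg} into (\ref{eq:yn-evol}--\ref{eq:xn-evol}) then returns \eqref{eq:dPv-std} and makes the identification rigorous.

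I expect the main obstacles to be, first, the bookkeeping in Step~1 --- the base points drift with $n$, so the surfaces $\mathcal{X}_{n}$ and $\mathcal{X}_{n+1}$ together with all their base-point data must be kept scrupulously apart --- and, above all, the conjugacy analysis: the translation produced by $\psi^{(n)}_{*}$ need not point in the ``standard'' direction, so matching it to the reference dynamic requires solving a word-conjugacy problem in $\widetilde{W}$ and then propagating the conjugating element back through all the earlier choices.
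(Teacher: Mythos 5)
Your plan follows essentially the same route as the paper: resolve the base points on $\mathbb{P}^{1}\times\mathbb{P}^{1}$, linearize on $\operatorname{Pic}(\mathcal{X})$, identify the $D_{4}^{(1)}$ surface type, adjust the preliminary basis identification by solving the conjugacy problem in $\widetilde{W}\bigl(D_{4}^{(1)}\bigr)$ (the paper's correction is conjugation by $w_{3}$), fix parameters via the period map, and finally read off $f$ and $g$ from the pencils $\mathcal{H}_{f}$, $\mathcal{H}_{g}$ with the M\"obius freedom pinned down by the exceptional-divisor correspondences, which is exactly how the paper obtains \eqref{eq:fg2xy} and the relation $ct=1$. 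The only caveat is that this is a plan rather than the executed computation (and the minor difference that the paper works with the half-step maps $\psi_{1},\psi_{2}$ separately rather than with the composed map alone), but the approach is the paper's own.
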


	The standard difference Painlev\'e-V equation is one of the equations in the d-$\dPain{D_{4}^{(1)}/D_{4}^{(1)}}$ family of discrete Painlev\'e 
	equations whose geometric (i.e., point configuration) and algebraic (extended affine Weyl symmetry group) data are both encoded by affine Dynkin
	diagrams of type $D_{4}^{(1)}$; we collect some basic facts and data about this family in the Appendix (see also \cite{KajNouYam:2017:GAOPE}).

\section{The Identification Procedure} 
\label{sec:the_identification_procedure}

\subsection{The Singularity Structure} 
\label{sub:the_singularity_structure}

The first step in the geometric analysis of discrete Painlev\'e equations is to understand the singularity structure of the system (which is best done
with the help of some Computer Algebra System; in this project we used \textbf{Mathematica}${}^{\text{\circledR}}$).

Note that equation~\eqref{eq:yn-evol} defines the \emph{forward} mapping $\psi_{1}^{(n)}:(x_{n},y_{n})\mapsto (x_{n}, y_{n+1})$ and equation~\eqref{eq:xn-evol} defines
the \emph{backward} mapping $\psi_{2}^{(n)}: (x_{n},y_{n}) \mapsto (x_{n-1},y_{n})$, this is fairly typical for discrete Painlev\'e equations obtained as
deautonomizations of QRT mappings, see \cite{CarDzhTak:2017:FDOI2MADPE}.

First, we compactify the affine complex plane $\mathbb{C}^{2}$ to $\mathbb{P}^{1} \times \mathbb{P}^{1}$ by introducing homogeneous coordinates
$[x^{0}:x^{1}]$ and $[y^{0}:y^{1}]$ with $x = \frac{x^{0}}{x^{1}}$ in the affine chart $x_{1}\neq 0$, $X = \frac{1}{x} = \frac{x^{1}}{x^{0}}$ in the
affine chart $x_{0}\neq 0$ and $y$ and $Y = \frac{1}{y}$ defined similarly. Next, we look for indeterminacies of rational maps, i.e., the points where
both the numerator and the denominator of the map vanish. At those points we perform the blowup procedure, see, e.g., \cite{Sha:2013:BAG1}, 
which essentially is an introduction of two new charts $(u_{i},v_{i})$ and $(U_{i},V_{i})$ in the neighborhood of the blowup point $q_{i}(x_{i},y_{i})$, where the change of 
variables is given by $x = x_{i} + u_{i} = x_{i} + U_{i} V_{i}$ and $y = y_{i} + u_{i} v_{i} =y_{i} +  V_{i}$. The coordinates $v_{i} = 1/U_{i}$ represent all
possible slopes of lines passing through the point $q_{i}$, and so this variable change ``separates'' all curves passing through $q_{i}$ based on their
slopes. This change of variables is a bijection away from $q_{i}$, but the point $q_{i}$ is replaced by the $\mathbb{P}^{1}$-line of all possible slopes, 
called the \emph{central fiber} or the \emph{exceptional divisor} of the blowup. We denote this central fiber by $F_{i}$, it is given in the blowup charts 
by local equations $u_{i}=0$ and $V_{i} = 0$.

\subsubsection{The Forward Mapping} 
\label{ssub:the_forward_mapping}

We begin by considering the forward mapping. We put $\overline{x}=x:=x_{n}$, 
$y:=y_{n}$, $\overline{y}: = y_{n+1}$ and omit the index $n$
in the mapping notation. The map $\psi_{1}:(x,y)\mapsto (\overline{x},\overline{y})$ then becomes
\begin{equation}\label{eq:fwd}
	(\overline{x},\overline{y}) = \left(x, \frac{(x-1)(x-\alpha)(x-\beta)(x-\gamma)}{c (y - (x - \alpha)(x - \beta)+n x)} + (x - \alpha)(x - \beta)-(n+1)x\right),
\end{equation}
and we immediately see the following base points (in the affine coordinates $(x,y)$):
\begin{equation*}
	q_{1}(1,(1 - \alpha)(1-\beta) - n),\quad q_{2}(\alpha,-n \alpha), \quad q_{3}(\beta, - n \beta),\quad q_{4}(\gamma, (\gamma - \alpha)(\gamma - \beta) - n \gamma).
\end{equation*}
Rewriting the mapping for $\overline{y}$ in the $(X,Y)$-chart, we get
\begin{equation*}
	\overline{y} = \frac{\left(\begin{aligned}
		& Y(1 - X)(1 - \alpha X)(1 - \beta X)(1 - \gamma X) \\
		&\qquad\qquad\qquad + c \Big(X^{2}-Y\left((1 - \alpha X)(1 - \beta X) - n X\right)\Big)\Big((1 - \alpha X)(1 - \beta X)-(n+1)X\Big)
		\end{aligned}\right)
	}{
	c X^{2}\Big(X^{2}-Y\left((1 - \alpha X)(1 - \beta X) - n X \right)\Big)
	},
\end{equation*}
we see that we get a new base point $q_{5}(x=\infty,y=\infty)$ or $q_{5}(X=0,Y=0)$.  It is easy to see that these points are the only
base points on $\mathbb{P}^{1} \times \mathbb{P}^{1}$ for the forward dynamic. Thus, if this mapping
is indeed in the discrete Painlev\'e family, there are three more points on exceptional divisors (these points can also appear for
the backward dynamic, but we show later that this is not the case).

\paragraph{Resolving $q_{1},\ldots,q_{4}$.} 
\label{par:resolving_p__1_ldots_p__4}
We introduce blowup coordinates at $q_{1}(1,(1 - \alpha)(1-\beta) - n)$ via
\begin{equation*}
	x = 1 + u_{1} = 1 + U_{1} V_{1},\qquad y = (1 - \alpha)(1-\beta) - n + u_{1} v_{1} = (1 - \alpha)(1-\beta) - n + V_{1}.
\end{equation*}
In the coordinates $(u_{1},v_{1})$ we get
\begin{equation}\label{eq:mapF1f}
	\overline{x} = 1 + u_{1},\quad \overline{y} = \frac{
	u_{1} (1 - \alpha + u_{1} )(1 - \beta + u_{1} )(1  - \gamma + u_{1})
	}{
	c u_{1} (v_{1} - (2 - \alpha - \beta + u_{1}) + n)
	} + (1 - \alpha + u_{1} )(1 - \beta + u_{1} ) - (n+1)(1 + u_{1}),
\end{equation}
and we first see that the cancelation of $u_{1}$ in the fraction resolves the indeterminacy, and so the mapping lifts 
to the exceptional divisor $F_{1}$ whose equation in this chart is $u_{1} = 0$.
%
Studying the mapping in the $(U_{1},V_{1})$-chart does not give any new information.

The computation is exactly the same for the points $q_{2},\ldots,q_{4}$, the mapping extends without
new base points to the exceptional divisors $F_{i}$.


\paragraph{Resolving $q_{5}$ and its degeneration cascade.} 
\label{par:resolving_p__5_and_its_degeneration_cascade}

The situation at the point $q_{5}(\infty,\infty)$ is more interesting. Introducing blowup coordinates at this point via
\begin{equation*}
	X = u_{5} = U_{5} V_{5},\qquad Y = u_{5} v_{5} = V_{5},
\end{equation*}
and considering the mapping in the $(u_{5},v_{5})$-chart, we get, after cancelling the $u_{5}$-factor in the numerator and denominator,
\begin{equation*}
	\overline{y}(u_{5},v_{5}) =
	\frac{\left(\begin{aligned}
			& v_{5}(1 - u_{5})(1 - \alpha u_{5})(1 - \beta u_{5})(1 - \gamma u_{5}) \\
			&\qquad\qquad\qquad + c \Big(u_{5}-v_{5}\left((1 - \alpha u_{5})(1 - \beta u_{5}) - n u_{5}\right)\Big)\Big((1 - \alpha u_{5})(1 - \beta u_{5})-(n+1)u_{5}\Big)
			\end{aligned}\right)
		}{
		c u_{5}^{2}\Big(u_{5} - v_{5}\left((1 - \alpha u_{5})(1 - \beta u_{5}) - n u_{5}\right)\Big)
		}.
\end{equation*}
We see that this mapping has a new base point $q_{6}(u_{5} = X = 0, v_{5} = Y/X = 0)$ (note that this base point is not visible in the $(U_{5},V_{5})$-chart).
Continuing in this way, we get the following cascade of ``infinitely close'' base points:
\begin{align*}
	q_{5}(X=0,Y=0)&\leftarrow q_{6}\left(u_{5} = X = 0, v_{5} = \frac{Y}{X} = 0\right) \leftarrow
	q_{7}\left(u_{6} = u_{5} = 0, v_{6} = \frac{v_{5}}{u_{5}}= \frac{c}{c-1}\right) \\
	&\leftarrow  q_{8}\left(u_{7} = u_{6} = 0, v_{7} = \frac{(c - 1)v_{6} - c}{(c-1)u_{6}} = \frac{c\Big(c(\alpha + \beta + n) + n - \gamma\Big)}{(c-1)^{2}}\right).
\end{align*}


Note that the positions of base points depend on $n$, and so evolve with the dynamics, but the \emph{configuration} of base points remains fixed. 
Put $\mathcal{X}_{n}:= \operatorname{Bl}_{q_{1}\cdots q_{8}}(\mathbb{P}^{1} \times \mathbb{P}^{1})$ and let 
$\eta_{n}: \mathcal{X}_{n}\to \mathbb{P}^{1} \times \mathbb{P}^{1}$
be the corresponding blow down map. This gives us a typical surface in the family on which the dynamic is defined. In what follows we may sometimes omit the 
index $n$ when only the point configuration and not the exact location of the base points is important.

\subsubsection{The Backward Mapping} 
\label{ssub:the_backward_mapping}

Consider now the backward mapping. We put $x:=x_{n}$, $\underline{x} = x_{n-1}$, $\underline{y} = y:=y_{n}$, 
The backward mapping $\psi_{2}:(x,y)\mapsto (\underline{x},\underline{y})$ then becomes
\begin{equation}\label{eq:back}
	(\underline{x},\underline{y}) = \left(
	\frac{
		(y + n \alpha)(y +n \beta)(y + n \gamma - (\gamma - \alpha)(\gamma - \beta))(y - n - (1 - \alpha)(1 - \beta))
		}{
		(x + \mathfrak{Y})\left( y (2n + \alpha + \beta - \gamma- 1) +
		n\left( (n + \alpha + \beta)(n + \alpha + \beta - \gamma - 1) - \alpha \beta + \gamma\right)\right)^{2}
		} - \mathfrak{Y},
	y\right),
\end{equation}
where $\mathfrak{Y}$ is given by \eqref{eq:Yn} (we omit the index $n$). 
The same standard computation shows that the only base points of the backwards dynamic are
the same points $q_{1},\ldots,q_{4}$ as for the forward dynamic,
but the singularity cascade at $q_{5}$ is not present.



\subsection{The Mapping on $\operatorname{Pic}(\mathcal{X})$} 
\label{sub:the_mapping_on_operatorname_pic_mathcal_x}

Recall that for a regular algebraic variety $\mathcal{X}$, its \emph{Picard group} (or \emph{Picard lattice}) is the quotient 
of the \emph{divisor group} $\operatorname{Div}(\mathcal{X})=\operatorname{Span}_{\mathbb{Z}}(\mathcal{D})$, that is a free Abelian
group generated by closed irreducible subvarieties $D$ of codimension $1$, by the subgroup $\operatorname{P}(\mathcal{X})$ of 
\emph{principal divisors} (i.e., by the relation of \emph{linear equivalence}),
\begin{equation*}
	\operatorname{Pic}(\mathcal{X}) \simeq \operatorname{Cl}(\mathcal{X}) = \operatorname{Div}(\mathcal{X})/\operatorname{P}(\mathcal{X}) 
	= \operatorname{Div}(\mathcal{X})/\sim,
\end{equation*}
see \cite{SmiKahKek:2000:IAG} or \cite{Sha:2013:BAG1}. In our case, it is enough to know that 
$\operatorname{Pic}(\mathbb{P}^{1} \times \mathbb{P}^{1}) = \operatorname{Span}_{\mathbb{Z}}\{\mathcal{H}_{x},\mathcal{H}_{y}\}$, where 
$\mathcal{H}_{x} = [H_{x=a}]$ is the class of a \emph{vertical}  and  $\mathcal{H}_{y} = [H_{y=b}]$ is the class of a \emph{horizontal} 
line on $\mathbb{P}^{1}\times \mathbb{P}^{1}$. Each blowup procedure at a point $q_{i}$ adds the class $\mathcal{F}_{i} = [F_{i}]$
of the \emph{exceptional divisor} (i.e., the \emph{central fiber}) of the blowup, so 
$\operatorname{Pic}(\mathcal{X}_{n}) = \operatorname{Span}_{\mathbb{Z}}\{\mathcal{H}_{x},\mathcal{H}_{y},\mathcal{F}_{1},\ldots, \mathcal{F}_{8}\}$.
Further, the Picard lattice is equipped with the symmetric bilinear \emph{intersection form} given by 
\begin{equation}\label{eq:int-form}
\mathcal{H}_{x}\bullet \mathcal{H}_{x} = \mathcal{H}_{y}\bullet \mathcal{H}_{y} = \mathcal{H}_{x}\bullet \mathcal{E}_{i} = 
\mathcal{H}_{y}\bullet \mathcal{E}_{j} = 0,\qquad \mathcal{H}_{x}\bullet \mathcal{H}_{y} = 1,\qquad  \mathcal{E}_{i}\bullet \mathcal{E}_{j} = - \delta_{ij}	
\end{equation}
on the generators, and then extended by linearity.

Both the forward and the backward mappings induce linear maps on $\operatorname{Pic}(\mathcal{X})$. We use the notation 
$\operatorname{Pic}(\overline{\mathcal{X}})$ (resp.~$\operatorname{Pic}(\underline{\mathcal{X}})$) for the range of the 
forward (resp.~backward) mappings;
note that all these groups are clearly canonically isomorphic, so we sometimes just use the notation 
$\operatorname{Pic}(\mathcal{X})$. We use $\overline{F}_{i}$ to denote the divisor of the central fiber of the blowup at the point 
$\overline{q}_{i} = \psi_{1}(q_{i})$, and similarly for the backwards mapping and for the classes.

Since the mapping is not very complicated, we can compute its action on $\operatorname{Pic}(\mathcal{X})$ directly. The result is given by the
following Lemma, where we use the notation $\mathcal{F}_{ij} = \mathcal{F}_{i} + \mathcal{F}_{j}$ and so on.

\begin{lemma}\label{lem:dyn}
	\qquad

	\begin{enumerate}[(a)]
	\item
	The action of the forward dynamic $(\psi_{1})_{*}: \operatorname{Pic}(\mathcal{X})\to \operatorname{Pic}(\overline{\mathcal{X}})$
	is given by
	\begin{alignat*}{5}
		\mathcal{H}_{x}& \mapsto \overline{\mathcal{H}}_{x}, \quad&
		\mathcal{F}_{1} &\mapsto \overline{\mathcal{H}}_{x} -\overline{\mathcal{F}}_{1}, \quad&
		\mathcal{F}_{3} &\mapsto \overline{\mathcal{H}}_{x} -\overline{\mathcal{F}}_{3}, \quad&
		\mathcal{F}_{5} &\mapsto \overline{\mathcal{H}}_{x} -\overline{\mathcal{F}}_{8}, \quad&
		\mathcal{F}_{7} &\mapsto \overline{\mathcal{H}}_{x} -\overline{\mathcal{F}}_{6}, \\
		\mathcal{H}_{y}& \mapsto 4\overline{\mathcal{H}}_{x} + \overline{\mathcal{H}}_{y} -\overline{\mathcal{F}}_{12345678},  \quad&
		\mathcal{F}_{2} &\mapsto \overline{\mathcal{H}}_{x} -\overline{\mathcal{F}}_{2}, \quad&
		\mathcal{F}_{4} &\mapsto \overline{\mathcal{H}}_{x} -\overline{\mathcal{F}}_{4}, \quad&
		\mathcal{F}_{6} &\mapsto \overline{\mathcal{H}}_{x} -\overline{\mathcal{F}}_{7}, \quad&
		\mathcal{F}_{8} &\mapsto \overline{\mathcal{H}}_{x} -\overline{\mathcal{F}}_{5},
	\end{alignat*}
	and  the evolution of base points $\overline{q}_{i} = \psi_{1}(q_{i})$ is given by
	\begin{equation*}
		\overline{q}_{1}(1,(1 - \alpha)(1 - \beta)-(n+1)),\  \overline{q}_{2}(\alpha,-(n+1)\alpha), \
		\overline{q}_{3}(\beta,-(n+1)\beta), \  \overline{q}_{4}(\gamma,(\gamma - \alpha)(\gamma - \beta)-(n+1)\gamma)
	\end{equation*}
	for finite points, and for the degeneration cascade we get
	\begin{align*}
	\overline{q}_{5}(\overline{X}=0,\overline{Y}=0)&\leftarrow \overline{q}_{6}\left(\overline{u}_{5} = \overline{X} = 0, \overline{v}_{5}
	= \frac{\overline{Y}}{\overline{X}} = 0\right) \leftarrow
	\overline{q}_{7}\left(\overline{u}_{6} = \overline{u}_{5} = 0, \overline{v}_{6} = \frac{\overline{v}_{5}}{\overline{u}_{5}}= \frac{c}{c-1}\right) \\
	&\leftarrow  \overline{q}_{8}\left(\overline{u}_{7} = \overline{u}_{6} = 0,
	\overline{v}_{7} = \frac{(c - 1)\overline{v}_{6} - c}{(c-1)\overline{u}_{6}} =
	\frac{c\Big(c(\alpha + \beta + n + 1) + n - \gamma - 1\Big)}{(c-1)^{2}}\right).
	\end{align*}

	\item
	The action of the backwards dynamic $(\psi_{2})_{*}: \operatorname{Pic}(\mathcal{X})\to \operatorname{Pic}(\underline{\mathcal{X}})$
	is given by
	\begin{alignat*}{5}
		\mathcal{H}_{x}& \mapsto \underline{\mathcal{H}}_{x} + 2 \underline{\mathcal{H}}_{y} - \underline{\mathcal{F}}_{1234}, \quad&
		\mathcal{F}_{1} &\mapsto \underline{\mathcal{H}}_{y} -\underline{\mathcal{F}}_{1}, \quad&
		\mathcal{F}_{3} &\mapsto \underline{\mathcal{H}}_{y} -\underline{\mathcal{F}}_{3}, \quad&
		\mathcal{F}_{5} &\mapsto \underline{\mathcal{F}}_{5}, \quad&
		\mathcal{F}_{7} &\mapsto \underline{\mathcal{F}}_{7}, \\
		\mathcal{H}_{y}& \mapsto \underline{\mathcal{H}}_{y},  \quad&
		\mathcal{F}_{2} &\mapsto \underline{\mathcal{H}}_{y} -\underline{\mathcal{F}}_{2}, \quad&
		\mathcal{F}_{4} &\mapsto \underline{\mathcal{H}}_{y} -\underline{\mathcal{F}}_{4}, \quad&
		\mathcal{F}_{6} &\mapsto \underline{\mathcal{F}}_{6}, \quad&
		\mathcal{F}_{8} &\mapsto \underline{\mathcal{F}}_{8}.
	\end{alignat*}
	From this we can also easily compute the evolution of base points. We get
	\begin{equation*}
		\underline{q}_{1}(1,(1 - \alpha)(1 - \beta)-n),\  \underline{q}_{2}(\alpha,-n\alpha), \
		\underline{q}_{3}(\beta,-n\beta), \  \underline{q}_{4}(\gamma,(\gamma - \alpha)(\gamma - \beta)-n\gamma),
	\end{equation*}
	as well as the degeneration cascade
	\begin{align*}
	\underline{q}_{5}(\underline{X}=0,\underline{Y}=0)&\leftarrow \underline{q}_{6}\left(\underline{u}_{5} = \underline{X} = 0, \underline{v}_{5}
	= \frac{\underline{Y}}{\underline{X}} = 0\right) \leftarrow
	\underline{q}_{7}\left(\underline{u}_{6} = \underline{u}_{5} = 0, \underline{v}_{6} = \frac{\underline{v}_{5}}{\underline{u}_{5}}= \frac{c}{c-1}\right) \\
	&\leftarrow  \underline{q}_{8}\left(\underline{u}_{7} = \underline{u}_{6} = 0,
	\underline{v}_{7} = \frac{(c - 1)\underline{v}_{6} - c}{(c-1)\underline{u}_{6}} =
	\frac{c\Big(c(\alpha + \beta + n) + n - \gamma - 2\Big)}{(c-1)^{2}}\right).
	\end{align*}
	\item
	The action of the composed mapping $\psi^{(n)}_{*} = \psi_{*} = (\psi_{2})_{*}^{-1}\circ (\psi_{1})_{*}: 
	\operatorname{Pic}(\mathcal{X}_{n})\to \operatorname{Pic}(\mathcal{X}_{n+1})$
	is given by
	\begin{alignat*}{2}
		\mathcal{H}_{x}& \mapsto \overline{\mathcal{H}}_{x} + 2 \overline{\mathcal{H}}_{y} - \overline{\mathcal{F}}_{1234}, \qquad&
		\mathcal{H}_{y}& \mapsto 4\overline{\mathcal{H}}_{x} + 5 \overline{\mathcal{H}}_{y} - 3\overline{\mathcal{F}}_{1234}-\overline{\mathcal{F}}_{5678},  \\
		\mathcal{F}_{1} &\mapsto \overline{\mathcal{H}}_{x} + \overline{\mathcal{H}}_{y} -\overline{\mathcal{F}}_{234}, \qquad&
		\mathcal{F}_{5} &\mapsto \overline{\mathcal{H}}_{x} + 2\overline{\mathcal{H}}_{y} -\overline{\mathcal{F}}_{12348}\\
		\mathcal{F}_{2} &\mapsto \overline{\mathcal{H}}_{x} + \overline{\mathcal{H}}_{y} -\overline{\mathcal{F}}_{134},\qquad&
		\mathcal{F}_{6} &\mapsto \overline{\mathcal{H}}_{x} + 2\overline{\mathcal{H}}_{y} -\overline{\mathcal{F}}_{12347}, \\		
		\mathcal{F}_{3} &\mapsto \overline{\mathcal{H}}_{x} + \overline{\mathcal{H}}_{y} -\overline{\mathcal{F}}_{124}, \qquad&
		\mathcal{F}_{7} &\mapsto \overline{\mathcal{H}}_{x} + 2\overline{\mathcal{H}}_{y} -\overline{\mathcal{F}}_{12346}, \\
		\mathcal{F}_{4} &\mapsto \overline{\mathcal{H}}_{x} + \overline{\mathcal{H}}_{y} -\overline{\mathcal{F}}_{123}, \qquad&
		\mathcal{F}_{8} &\mapsto \overline{\mathcal{H}}_{x} + 2\overline{\mathcal{H}}_{y} -\overline{\mathcal{F}}_{12345}.
	\end{alignat*}
	The evolution of the base points (here $\overline{q}_{i} = \psi^{(n)}(q_{i})$) is
	\begin{equation*}
		\overline{q}_{1}(1,(1 - \alpha)(1 - \beta)-(n+1)),\  \overline{q}_{2}(\alpha,-(n+1)\alpha), \
		\overline{q}_{3}(\beta,-(n+1)\beta), \  \overline{q}_{4}(\gamma,(\gamma - \alpha)(\gamma - \beta)-(n+1)\gamma)
	\end{equation*}
	for finite points, and 
	\begin{align*}
	\overline{q}_{5}(\overline{X}=0,\overline{Y}=0)&\leftarrow \overline{q}_{6}\left(\overline{u}_{5} = \overline{X} = 0, \overline{v}_{5}
	= \frac{\overline{Y}}{\overline{X}} = 0\right) \leftarrow
	\overline{q}_{7}\left(\overline{u}_{6} = \overline{u}_{5} = 0, \overline{v}_{6} = \frac{\overline{v}_{5}}{\overline{u}_{5}}= \frac{c}{c-1}\right) \\
	&\leftarrow  \overline{q}_{8}\left(\overline{u}_{7} = \overline{u}_{6} = 0,
	\overline{v}_{7} = \frac{(c - 1)\overline{v}_{6} - c}{(c-1)\overline{u}_{6}} =
	\frac{c\Big(c(\alpha + \beta + n+1) + n+1 - \gamma\Big)}{(c-1)^{2}}\right)
	\end{align*}
	for the degeneration cascade.

	\end{enumerate}
Moreover, from the evolution of base points we see that $\psi_{1}(n) = n+1$, $\psi_{2}(n) = n$, and $\psi^{(n)}(n) = n+1$.
\end{lemma}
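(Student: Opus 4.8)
The plan is to leverage the central fact established in the preceding singularity analysis: both $\psi_1$ and $\psi_2$ lift to genuine isomorphisms of the blown-up surfaces, $\psi_1\colon \mathcal{X}_n \xrightarrow{\simeq} \overline{\mathcal{X}}$ and $\psi_2\colon \mathcal{X}_n \xrightarrow{\simeq}\underline{\mathcal{X}}$. Consequently the induced maps on $\operatorname{Pic}$ are lattice isometries for the form \eqref{eq:int-form} that preserve the canonical class $\mathcal{K}=-2\mathcal{H}_x-2\mathcal{H}_y+\mathcal{F}_{12345678}$, and this rigidity lets me pin down the one ``hard'' generator once all the others are known. I would therefore compute the image of each generator one at a time, handling the directly visible classes first and then recovering the remaining one from preservation of $\mathcal{K}$.

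For $(\psi_1)_*$ I would start from \eqref{eq:fwd}. Since its first component is $\overline{x}=x$, every vertical line maps to a vertical line and $\mathcal{H}_x\mapsto\overline{\mathcal{H}}_x$. For each finite base point $q_i$ ($i=1,\dots,4$) I restrict the map to the exceptional divisor $F_i=\{u_i=0\}$ using the blowup coordinates of \eqref{eq:mapF1f}: there $\overline{x}$ is the constant first coordinate of $\overline{q}_i$ while $\overline{y}$ sweeps out a full $\mathbb{P}^1$ as $v_i$ varies, so $F_i$ maps onto the vertical line through $\overline{q}_i$, whose strict transform has class $\overline{\mathcal{H}}_x-\overline{\mathcal{F}}_i$. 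For the infinitely-near cascade $q_5\leftarrow\dots\leftarrow q_8$ at infinity I would instead compute, in the nested charts, the images of the \emph{irreducible} exceptional curves (classes $\mathcal{F}_5-\mathcal{F}_6,\ \mathcal{F}_6-\mathcal{F}_7,\ \mathcal{F}_7-\mathcal{F}_8,\ \mathcal{F}_8$) together with the image of the line at infinity (class $\mathcal{H}_x-\mathcal{F}_5$), finding that the map realizes an order-reversing reflection of this chain; recovering the individual total-transform classes through the telescoping relations then yields the stated permutation $5\leftrightarrow 8$, $6\leftrightarrow 7$, with each $\mathcal{F}_i\mapsto\overline{\mathcal{H}}_x-\overline{\mathcal{F}}_{\sigma(i)}$. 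Finally I recover $(\psi_1)_*\mathcal{H}_y$ without further geometry: since $\sum_i(\psi_1)_*\mathcal{F}_i=8\overline{\mathcal{H}}_x-\overline{\mathcal{F}}_{12345678}$, imposing $(\psi_1)_*\mathcal{K}=\overline{\mathcal{K}}$ forces exactly $4\overline{\mathcal{H}}_x+\overline{\mathcal{H}}_y-\overline{\mathcal{F}}_{12345678}$. As an independent check this matches the direct computation that the image of a generic horizontal line is a bidegree-$(4,1)$ curve passing once through each of the eight base points.

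The computation for $(\psi_2)_*$ follows the same template with the two rulings interchanged: the second component of \eqref{eq:back} is $\underline{y}=y$, so $\mathcal{H}_y\mapsto\underline{\mathcal{H}}_y$, and restriction to $F_i$ ($i=1,\dots,4$) gives $\mathcal{F}_i\mapsto\underline{\mathcal{H}}_y-\underline{\mathcal{F}}_i$. The decisive difference is that the cascade at infinity consists of base points only for the forward map; for $\psi_2$ the map extends regularly across $F_5,\dots,F_8$ and fixes those classes, $\mathcal{F}_i\mapsto\underline{\mathcal{F}}_i$. Imposing preservation of $\mathcal{K}$ again determines the remaining generator, $\mathcal{H}_x\mapsto\underline{\mathcal{H}}_x+2\underline{\mathcal{H}}_y-\underline{\mathcal{F}}_{1234}$. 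With both matrices in hand, part (c) is pure linear algebra: I would invert $(\psi_2)_*$ and form $(\psi_2)_*^{-1}\circ(\psi_1)_*$. The base-point evolutions in each part come from substituting the coordinates of $q_i$ into \eqref{eq:fwd} and \eqref{eq:back} for the finite points and propagating through the blowup charts for the cascade; comparing the $y$-coordinates then reads off $\psi_1(n)=n+1$, $\psi_2(n)=n$, and hence $\psi^{(n)}(n)=n+1$.

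The main obstacle I anticipate is the cascade at infinity. Unlike the finite base points, where restriction to $F_i$ is immediate, here one must work in the four nested charts and carefully distinguish the total transforms $\mathcal{F}_i$ (the generators of $\operatorname{Pic}$) from the strict transforms of the successive exceptional curves, as well as the domain base points from their $n$-shifted images in the range. The cleanest way to validate the resulting permutation is to test it on the irreducible $(-2)$- and $(-1)$-curves supported on the cascade: verifying that each maps to an irreducible curve of the correct self-intersection, and that the whole map is an isometry of \eqref{eq:int-form} fixing $\mathcal{K}$, provides a stringent consistency check on the bookkeeping.
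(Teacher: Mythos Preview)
Your proposal is correct and follows essentially the same route as the paper: compute the images of $\mathcal{H}_x$ (resp.\ $\mathcal{H}_y$) from the triangular structure $\overline{x}=x$ (resp.\ $\underline{y}=y$), handle the finite $F_i$ by restricting to $u_i=0$, and treat the cascade by tracking the irreducible $-2$-curves $F_5-F_6,\dots,F_8$ in the nested charts and telescoping.

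The one genuine difference is how you recover the remaining ``hard'' generator. The paper computes $(\psi_1)_*\mathcal{H}_y$ directly by pushing forward the specific line $y=\infty$ (i.e.\ the representative $H_y-F_5-F_6$), identifying its image as the $(2,1)$-curve $\gamma$ of class $2\overline{\mathcal{H}}_x+\overline{\mathcal{H}}_y-\overline{\mathcal{F}}_{123456}$, and then adding back $(\psi_1)_*\mathcal{F}_5+(\psi_1)_*\mathcal{F}_6$. You instead impose $(\psi_1)_*\mathcal{K}=\overline{\mathcal{K}}$ and solve for $(\psi_1)_*\mathcal{H}_y$. Both are valid; your shortcut is cleaner bookkeeping, while the paper's direct computation has the side benefit of exhibiting the curve $\gamma$ explicitly, which is exactly the $-2$-component $2H_x+H_y-F_{123456}$ of the anti-canonical divisor used in the next section to determine the surface type. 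One small caveat: for $\psi_2$ you assert that regularity across the cascade implies $\mathcal{F}_i\mapsto\underline{\mathcal{F}}_i$ for $i=5,\dots,8$; this is true but not automatic from regularity alone---you still need to check (e.g.\ in charts) that $\psi_2$ actually carries $q_5$ to $\underline{q}_5$ and so on up the tower, rather than just being defined there.
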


\begin{proof} This is a standard computation in charts that we illustrate by a few examples for the forward mapping $\psi_{1}$.
	First, since $\psi_{1}(x) = x$, we see that $(\psi_{1})_{*} (\mathcal{H}_{x}) = \overline{\mathcal{H}}_{x}$. 	
	To find $(\psi_{1})_{*} (F_{1})$ we restrict the mapping \eqref{eq:mapF1f} to $u_{1} = 0$ to get
	\begin{equation*}
		\psi_{1}(u_{1}=0,v_{1}) = \left(1,\frac{(1-\alpha)(1-\beta)(1-\gamma)}{c(v_{1} + \alpha + \beta + n - 2)} + 
		(1 - \alpha)(1-\beta) - (n+1)\right), 
	\end{equation*}
	so $\bar{y}$ is just a fractional linear transformation of the parameter $v_{1}$ and thus the exceptional divisor $F_{1}$ maps, 
	parametrically, to the line $\overline{x}=1$ in the affine chart $(\overline{x},\overline{y})$ of the 
	range. This line lifts, as a proper transform, to the divisor $\overline{H}_{x} - \overline{F}_{1}$ on the 
	surface $\overline{\mathcal{X}}$, which, after passing to divisor classes, 
	gives the required map on $\operatorname{Pic}(\mathcal{X})$. Note that this also implies that 
	$(\psi_{1})_{*}(H_{x}-F_{1}) = \overline{F}_{1}$, i.e., $\psi_{1}(x=1) = (1,(1-\alpha)(1-\beta) - (n+1)) = \overline{q}_{1}$, and so 
	we see that under $\psi_{1}$ the step $n$ evolves to $n+1$. The computations for $F_{2},\ldots,F_{4}$ are very similar.
	
	This computation gets slightly more complicated in the degeneration cascade. For example, 
	to find $(\psi_{1})_{*}(F_{5})$ we need to compute $\psi_{1}$ in the chart $(u_{5},v_{5})$ and restrict to $u_{5}=0$. However, since 
	there is a base point $q_{6}$ on $F_{5}$, the mapping $\psi_{1}(u_{5}=0,v_{5})$ corresponds to mapping, parametrically, the \emph{proper transform}
	$F_{5} - F_{6}$ of $F_{5}$ on $\mathcal{X}$. We compute $\psi_{1}(0,v_{5}) = (\infty,\infty)$, so $F_{5} - F_{6}$ collapses onto the base point 
	$\overline{q}_{5}$ on $\mathbb{P}^{1} \times \mathbb{P}^{1}$. Switching to coordinates $(\overline{u}_{5},\overline{v}_{5})$ in the range, we get
	$(\overline{u}_{5},\overline{v}_{5})(0,v_{5}) = (0,0)$. Thus, we have further collapse to $\overline{q}_{6}$ and need to do the computation 
	in the chart $(\overline{u}_{6},\overline{v}_{6})$. We get $(\overline{u}_{6},\overline{v}_{6})(0,v_{5}) = \overline{q}_{7}$ and finally,
	\begin{equation*}
		(\overline{u}_{7},\overline{v}_{7})(0,v_{5}) = 
		\left(0, \frac{c + c (-1 + n + c (1 + n + \alpha + \beta) -\gamma)v_{5}}{(c-1)^{2}v_{5}}\right),
	\end{equation*}
	and so $(\psi_{1})_{*} (F_{5} - F_{6}) = \overline{F}_{7} - \overline{F}_{8}$. Note that the mappings $(\psi_{i})_{*}$ should preserve the 
	intersection form, and thus the self-intersection index. Indeed $(F_{5} - F_{6})^{2} = (\overline{F}_{7} - \overline{F}_{8})^{2} = -2$, as it should be.
	Other computations in the degeneration cascade are similar and result in
	\begin{equation*}
		(\psi_{1})_{*} (F_{6} - F_{7}) = \overline{F}_{6} - \overline{F}_{7},\quad 
		(\psi_{1})_{*} (F_{7} - F_{8}) = \overline{F}_{5} - \overline{F}_{6},\quad 
		(\psi_{1})_{*} (F_{8}) = \overline{H}_{x} - \overline{F}_{5}.
	\end{equation*}
	Passing to classes, we get
	\begin{equation*}
		(\psi_{1})_{*} (\mathcal{F}_{7} )  = (\psi_{1})_{*} (\mathcal{F}_{7} - \mathcal{F}_{8}) + (\psi_{1})_{*} (\mathcal{F}_{8}) 
		= (\overline{\mathcal{F}}_{5} - \overline{\mathcal{F}}_{6}) + (\overline{\mathcal{H}}_{x} - 
		\overline{\mathcal{F}}_{5}) = \overline{\mathcal{H}}_{x} - \overline{\mathcal{F}}_{6},
	\end{equation*}
	and so on.
	
	To find $(\psi_{1})_{*}(\mathcal{H}_{y})$, it is  convenient to choose a vertical line passing through some base points. For example
	$H_{y} - F_{5} - F_{6}$ is the proper transform of the line $y = \infty$, and we get
	\begin{equation*}
		(\overline{x},\overline{y})(Y=0) = (x,x^{2} + \alpha \beta - x(n+1 + \alpha + \beta)),		
	\end{equation*}
	which parameterizes the $(2,1)$-curve $\overline{\gamma}$ in the range given by the equation 
	$\overline{x}^{2} + \alpha \beta - \overline{x}(n+1 + \alpha + \beta) - \overline{y}=0$. 
	Taking into account the evolution of parameters we note that this curve corresponds to the $(2,1)$-curve $\gamma$ in the domain given by
	\begin{equation}\label{eq:gamma-curve}
		\gamma:\quad 	x^{2} + \alpha \beta - x(n + \alpha + \beta) - y=0.
	\end{equation}
	Since the self-intersection index of $\overline{\gamma}$ should be $-2$, we expect it to pass through $6$ (or fewer, in case of multiplicities) 
	base points in the range. Indeed, we can check that it passes through $\overline{q}_{1},\ldots,\overline{q}_{6}$, each with multiplicity one, and so 
	\begin{equation*}
		(\psi_{1})_{*}(\mathcal{H}_{y}) =  
		(\psi_{1})_{*}(\mathcal{H}_{y} - \mathcal{F}_{5} - \mathcal{F}_{6}) +  
		(\psi_{1})_{*}(\mathcal{F}_{5}) + (\psi_{1})_{*}(\mathcal{F}_{6})= (2 \overline{\mathcal{H}}_{x} + \overline{\mathcal{H}}_{y}
		- \overline{F}_{123456}) + (\overline{\mathcal{H}}_{x} - \overline{\mathcal{F}}_{8}) + (\overline{\mathcal{H}}_{x} - \overline{F}_{7}).
	\end{equation*}
	This completes the computation for the forward mapping $\psi_{1}$. The computation for the backward mapping $\psi_{2}$ is similar, and the computation
	for the composed mapping $\psi^{(n)}$ immediately follows.
\end{proof}

\subsection{The Surface Type} 
\label{sub:the_surface_type}

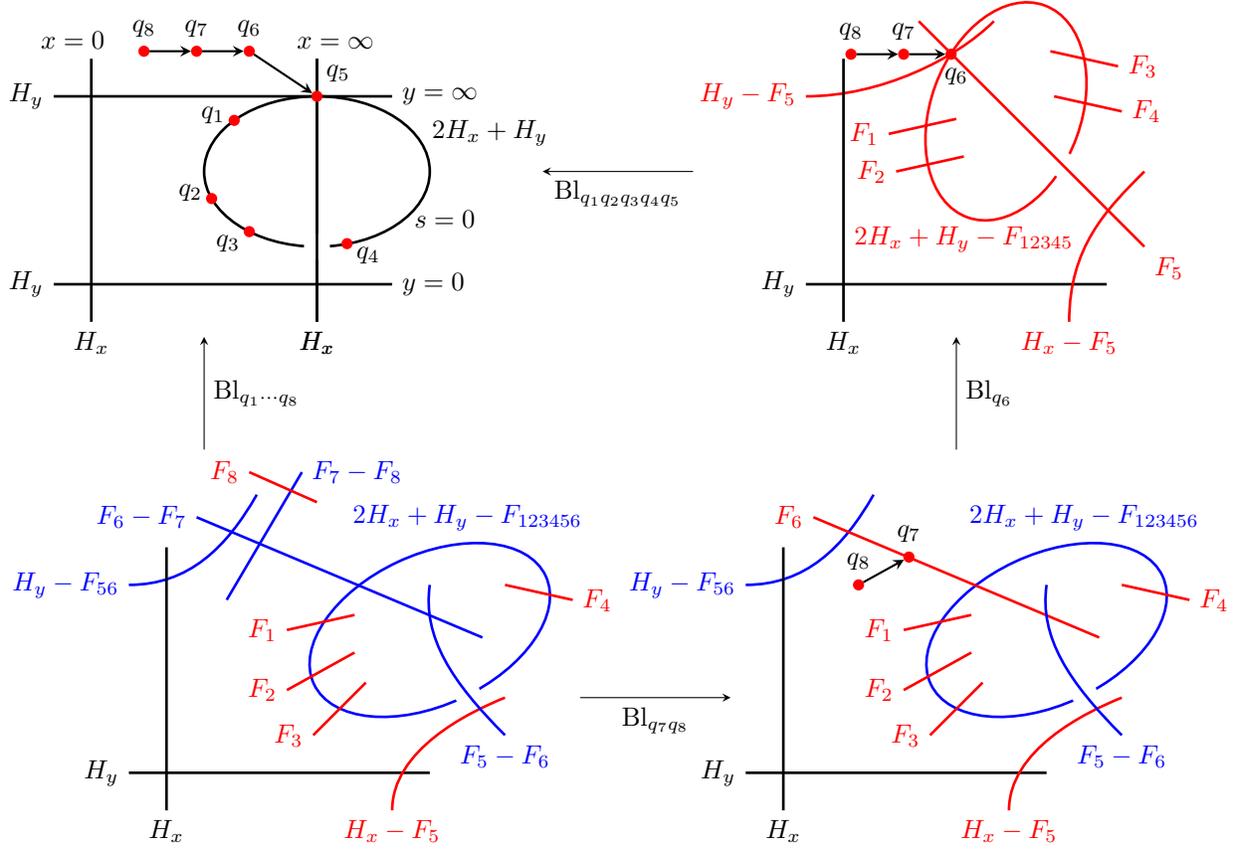
\begin{figure}[ht]
	\begin{tikzpicture}[>=stealth,basept/.style={circle, draw=red!100, fill=red!100, thick, inner sep=0pt,minimum size=1.2mm}]
		\begin{scope}[xshift = -1cm]
			\draw [black, line width = 1pt]  	(4,0) 	-- (-0.5,0) 		node [left] {$H_{y}$} node[pos=0, right] {$y=0$};
			\draw [black, line width = 1pt] 	(4,2.5) 	-- (-0.5,2.5)	node [left] {$H_{y}$} node[pos=0, right] {$y=\infty$};
			\draw [black, line width = 1pt] 	(0,3) -- (0,-0.5)			node [below] {$H_{x}$} node[pos=0, above, xshift=-7pt] {$x=0$};
			\draw [black, line width = 1pt] 	(3,3) -- (3,-0.5)			node [below] {$H_{x}$} node[pos=0, above, xshift=7pt] {$x=\infty$};

			\draw[line width = 1pt] (3,1.5) circle [x radius=1.5cm, y radius = 1cm];
			\node at (3,0.5) 	[circle, draw=white!100, fill=white!100, minimum size=1mm] {};
			\draw [black, line width = 1pt] 	(3,3) -- (3,-0.5)			node [below] {$H_{x}$};

			\node (p1) at (1.9,2.18) [basept,label={[left] $q_{1}$}] {};
			\node (p2) at (1.6,1.14) [basept,label={[left] $q_{2}$}] {};
			\node (p3) at (2.1,0.7) [basept,label={[below left] $q_{3}$}] {};
			\node (p4) at (3.4,0.54) [basept,label={[below right] $q_{4}$}] {};
			\node (p5) at (3,2.5) 	[basept,label={[above right] $q_{5}$}] {};
			\node (p6) at (2.1,3.1)	[basept,label={[above] 		$q_{6}$}] {};
			\node (p7) at (1.4,3.1) [basept,label={[above] 		$q_{7}$}] {};
			\node (p8) at (0.7,3.1) [basept,label={[above] 		$q_{8}$}] {};

			\node at (4.7,0.5) [label = {$s = 0$}] {};
			\node at (5.3,1.6) [label = {$2 H_{x} + H_{y}$}] {};

			\draw [line width = 0.8pt, ->]  (p8) edge (p7) (p7) edge (p6) (p6) edge (p5);
		\end{scope}

		\draw [->] (7,1.5)--(5,1.5) node[pos=0.5, below] {$\operatorname{Bl}_{q_{1}q_{2}q_{3}q_{4}q_{5}}$};

		\begin{scope}[xshift = 9cm]
			\draw [black, line width = 1pt ] (3.5,0) -- (-0.5,0) node [left] {$H_{y}$};
			\draw [red, line width = 1pt ] (2,3.5) .. controls (1.5,3) and (0.5,2.5) .. (-0.5,2.5) node [left] {$H_{y} - F_{5}$};
			\draw [black, line width = 1pt ] (0,3) -- (0,-0.5) node [below] {$H_{x}$};
			\draw [red, line width = 1pt ] (4,1.5) .. controls (3.5,1) and (3,0.5) .. (3,-0.5) node [below] {$H_{x} - F_{5}$};

			\draw[red, line width = 1pt] (2.16,2.3) circle [x radius=1.5cm, y radius = 1cm, rotate = 70];
			\node at (2.91,1.59) 	[circle, draw=white!100, fill=white!100, minimum size=0.8mm] {};

			\node at (1.6,0.6) [red] {$2H_{x} + H_{y} - F_{12345} $};

			\draw [red, line width = 1 pt] (1,3.5) -- (4,0.5)node [below right] {$F_{5}$};

			\draw [red, line width = 1 pt] (0.6,2) -- (1.5,2.2) node [pos = 0, left] {$F_{1}$};
			\draw [red, line width = 1 pt] (0.7,1.5) -- (1.6,1.7) node [pos = 0, left] {$F_{2}$};
			\draw [red, line width = 1 pt] (2.75,3.1) -- (3.65,2.9) node [right]{$F_{3}$};
			\draw [red, line width = 1 pt] (2.8,2.5) -- (3.7,2.3) node [right] {$F_{4}$};
			\node (p6) at (1.43,3.06) [basept, label={[xshift=2pt, yshift=-17pt] $q_{6}$}] {};
			\node (p7) at (0.8,3.06) [basept,label={[above] 		$q_{7}$}] {};
			\node (p8) at (0.1,3.06) [basept,label={[above] 		$q_{8}$}] {};

			\draw [line width = 0.8pt, ->]  (p8) edge (p7) (p7) edge (p6);
		\end{scope}

		\draw [->] (10.5,-2.2)--(10.5,-0.7) node[pos=0.5, right] {$\operatorname{Bl}_{q_{6}}$};

		\begin{scope}[xshift = 8.2cm, yshift = -6.5cm]
			\draw [black, line width = 1pt ] (3.5,0) -- (-0.5,0) node [left] {$H_{y}$};
			\draw [blue, line width = 1pt ] (1.2,3.7) .. controls (0.8,3) and (0.3,2.5) .. (-0.5,2.5) node [left] {$H_{y} - F_{56}$};
			\draw [black, line width = 1pt ] (0,3) -- (0,-0.5) node [below] {$H_{x}$};
			\draw [red, line width = 1pt ] (4.5,1) .. controls (4,0.8) and (3,0.3) .. (3,-0.5) node [below] {$H_{x} - F_{5}$};

			\draw[blue, line width = 1pt] (3.5,1.9) circle [x radius=1.7cm, y radius = 1cm, rotate = 25] ;
			\node at (4,1.05) 	[circle, draw=white!100, fill=white!100, minimum size=1mm] {};

			\node at (4,3.4) [blue] {$2H_{x} + H_{y} - F_{123456} $};

			\draw [red, line width = 1 pt] (1.6,1.9) -- (2.5,2.1) node [pos = 0, left] {$F_{1}$};
			\draw [red, line width = 1 pt] (1.6,1.1) -- (2.5,1.6) node [pos = 0, left] {$F_{2}$};
			\draw [red, line width = 1 pt] (1.95,0.5) -- (2.65,1.2) node [pos = 0, left] {$F_{3}$};
			\draw [red, line width = 1 pt] (4.5,2.5) -- (5.4,2.3) node [right] {$F_{4}$};

			\draw [red, line width = 1 pt] (0.4,3.4) -- (4.2,1.8) node [pos = 0, left] {$F_{6}$};
			\draw [blue, line width = 1 pt] (3.5,2.5) .. controls (3.4,1.8) and (3.8,1.2) .. (4.5,0.5) node [below] {$F_{5}-F_{6}$};
			\node (p7) at (1.67,2.87) [basept, label={[yshift=0pt] $q_{7}$}] {};
			\node (p8) at (1,2.5) [basept,label={[above] 		$q_{8}$}] {};
			\draw [line width = 0.8pt, ->]  (p8) edge (p7);
		\end{scope}

		\draw [->] (5.5,-5.5)--(7.5,-5.5) node[pos=0.5, below] {$\operatorname{Bl}_{q_{7}q_{8}}$};

		\begin{scope}[xshift = 0cm, yshift = -6.5cm]
			\draw [black, line width = 1pt ] (3.5,0) -- (-0.5,0) node [left] {$H_{y}$};
			\draw [blue, line width = 1pt ] (1.2,3.7) .. controls (0.8,3) and (0.3,2.5) .. (-0.5,2.5) node [left] {$H_{y} - F_{56}$};
			\draw [black, line width = 1pt ] (0,3) -- (0,-0.5) node [below] {$H_{x}$};
			\draw [red, line width = 1pt ] (4.5,1) .. controls (4,0.8) and (3,0.3) .. (3,-0.5) node [below] {$H_{x} - F_{5}$};

			\draw[blue, line width = 1pt] (3.5,1.9) circle [x radius=1.7cm, y radius = 1cm, rotate = 25] ;
			\node at (4,1.05) 	[circle, draw=white!100, fill=white!100, minimum size=1mm] {};

			\node at (4,3.4) [blue] {$2H_{x} + H_{y} - F_{123456} $};

			\draw [red, line width = 1 pt] (1.6,1.9) -- (2.5,2.1) node [pos = 0, left] {$F_{1}$};
			\draw [red, line width = 1 pt] (1.6,1.1) -- (2.5,1.6) node [pos = 0, left] {$F_{2}$};
			\draw [red, line width = 1 pt] (1.95,0.5) -- (2.65,1.2) node [pos = 0, left] {$F_{3}$};
			\draw [red, line width = 1 pt] (4.5,2.5) -- (5.4,2.3) node [right] {$F_{4}$};

			\draw [blue, line width = 1 pt] (0.4,3.4) -- (4.2,1.8) node [pos = 0, left] {$F_{6} - F_{7}$};
			\draw [blue, line width = 1 pt] (3.5,2.5) .. controls (3.4,1.8) and (3.8,1.2) .. (4.5,0.5) node [below] {$F_{5}-F_{6}$};
			\draw [blue, line width = 1 pt] (0.8,2.3)--(1.8,4) node [pos = 1, right] {$F_{7} - F_{8}$};
			\draw [red, line width = 1 pt] (1.1,4) -- (2,3.6) node [pos = 0, left] {$F_{8}$};


		\end{scope}

		\draw [->] (0.5,-2.2)--(0.5,-0.7) node[pos=0.5, right] {$\operatorname{Bl}_{q_{1}\cdots q_{8}}$};

	\end{tikzpicture}
	\caption{The Sakai Surface for the Hypergeometric Weight Recurrence}
	\label{fig:surface-hw}
\end{figure}

Given the base points $q_{1},\ldots,q_{8}$, the \emph{point configuration} is, essentially, the configuration of irreducible curves on which these points lie. 
These curves are in fact the irreducible components of some \emph{unique} (unless we have very special values of parameters that correspond
to an autonomous limit) bi-quadratic curve $\Gamma$. Let
the equation of $\Gamma$ in the $(X,Y)$-chart be
\begin{equation*}
	(a_{22} X^{2} + a_{12} X + a_{02}) Y^{2} + (a_{21} X^{2} + a_{11} X + a_{01}) Y + (a_{20} X^{2} + a_{10}X + a_{00}) = 0.
\end{equation*}
From the condition $q_{5}\in \Gamma$ we see that $a_{00} = 0$. To impose the condition that the infinitely close point (i.e., a point on an
exceptional divisor) $q_{6}\in \Gamma$, we rewrite this equation
in the $(u_{5},v_{5})$-chart (we should also include the $(U_{5},V_{5})$-chart, but unless it gives any new information, we omit those computations)
via the substitution $X = u_{5}$, $Y = u_{5} v_{5}$. The resulting equation factorizes,
\begin{equation*}
	u_{5}\Big( (a_{22} u_{5}^{2} + a_{12}u_{5} + a_{02}) u_{5}v_{5}^{2} + (a_{21} u_{5}^{2} + a_{11} u_{5} + a_{01})v_{5} + a_{20}u_{5} + a_{10} \Big) = 0.
\end{equation*}
This factorization corresponds to the decomposition of the \emph{total transform} of $\Gamma$ under
the blowup mapping $\operatorname{Bl}_{q_{5}}:\mathcal{X}_{q_{5}}\to \mathbb{P}^{1} \times \mathbb{P}^{1}$ into the irreducible components,
$\operatorname{Bl}_{q_{5}}^{-1}(\Gamma) = F_{5} + (\Gamma - F_{5})$, where $F_{5}$ is the central fiber of the blowup, and $\Gamma - F_{5}$
is the \emph{proper transform} of $\Gamma$. We then see that the condition $q_{6}\in \Gamma - F_{5}$ implies
$a_{10} = 0$. Continuing in this way through the degeneration cascade at $q_{5}$, as well as imposing the conditions $q_{i}\in \Gamma$ for $i=1,\ldots,4$,
we get the following equation for $\Gamma$:
\begin{equation*}
	\Gamma = V\left( Y s(X,Y)\right),\qquad \text{where}\quad s(X,Y) = X^{2} - \alpha \beta X^{2} Y + (n + \alpha + \beta) XY - Y.
\end{equation*}
The curve $\Gamma$ is in fact the \emph{pole divisor} of a \emph{symplectic form} $\omega$, and that is why $\Gamma$ is called an 
\emph{anti-canonical divisor}, $[\Gamma] = -\mathcal{K}_{\mathcal{X}}$.  Note also that the equation $s(X,Y)=0$ is nothing but the 
equation of the curve $\gamma$ in \eqref{eq:gamma-curve} written in the $(X,Y)$-chart. This is not surprising, since 
both $H_{y}-F_{5}-F_{6}$ and $\gamma$ are $-2$-curves that are irreducible components of the anti-canonical divisor, and so they 
are permuted by the mapping.

\begin{figure}[ht]
\begin{equation}\label{eq:d-roots-hw}
	\raisebox{-32.1pt}{\begin{tikzpicture}[
			elt/.style={circle,draw=black!100,thick, inner sep=0pt,minimum size=2mm}]
		\path 	(-1,1) 	node 	(d0) [elt, label={[xshift=-10pt, yshift = -10 pt] $\delta_{0}$} ] {}
		        (-1,-1) node 	(d1) [elt, label={[xshift=-10pt, yshift = -10 pt] $\delta_{1}$} ] {}
		        ( 0,0) 	node  	(d2) [elt, label={[xshift=10pt, yshift = -10 pt] $\delta_{2}$} ] {}
		        ( 1,1) 	node  	(d3) [elt, label={[xshift=10pt, yshift = -10 pt] $\delta_{3}$} ] {}
		        ( 1,-1) node 	(d4) [elt, label={[xshift=10pt, yshift = -10 pt] $\delta_{4}$} ] {};
		\draw [black,line width=1pt ] (d0) -- (d2) -- (d1)  (d3) -- (d2) -- (d4);
	\end{tikzpicture}} \qquad
			\begin{alignedat}{2}
			\delta_{0} &= \mathcal{F}_{5} - \mathcal{F}_{6}, &\qquad  \delta_{3} &= \mathcal{F}_{7} - \mathcal{F}_{8},\\
			\delta_{1} &= 2\mathcal{H}_{x} + \mathcal{H}_{y} - \mathcal{F}_{123456}, &\qquad  \delta_{4} &= \mathcal{H}_{y} - \mathcal{F}_{56}.\\
			\delta_{2} &= \mathcal{F}_{6} - \mathcal{F}_{7},
			\end{alignedat}
\end{equation}
	\caption{The Surface Root Basis for the Hypergeometric Weight Recurrence}
	\label{fig:d-roots-hw}
\end{figure}

This symplectic form $\omega$ in the affine $(X,Y)$-chart is given by
\begin{equation}\label{eq:sf-hw}
	\omega = k \frac{dX \wedge dY}{s(X,Y)\, Y} = k \frac{dX \wedge ds}{s(s - X^{2})},
\end{equation}
since
\begin{align*}
	ds &= (2X - 2 \alpha \beta XY + (n + \alpha + \beta)Y)\, dX - (\alpha \beta X^{2} - (n + \alpha + \beta)X + 1)\, dY,
	\intertext{and}
	\frac{X^{2} - s(X,Y)}{Y} &= \alpha \beta X^{2} - (n + \alpha + \beta)X + 1.
\end{align*}
This point configuration, the blowup diagram, and the decomposition of the anti-canonical divisor 
$-K_{\mathcal{X}} = \Gamma$ is shown on Figure~\ref{fig:surface-hw}.
Thus, we see that $-K_{\mathcal{X}}$ decomposes into irreducible components as follows:
\begin{equation}
	-K_{\mathcal{X}} = (2 H_{x} + H_{y} - F_{1} - F_{2} - F_{3} - F_{4} - F_{5} - F_{6}) + (H_{y} - F_{5} - F_{6}) + (F_{5} - F_{6})
	+ 2(F_{6} - F_{7}) + (F_{7} - F_{8}),
\end{equation}
whose intersection structure is given by the $D_{4}^{(1)}$ affine Dynkin diagram shown in Figure~\ref{fig:d-roots-hw}, where $\delta_{i} = [d_{i}]$.

Note that here the assignment of $d_{i}$ is arbitrary with the exception of $d_{2}$. Also, at this point we see that our equation is of
type d-$\dPain{D_{4}^{(1)}/D_{4}^{(1)}}$, and so our recurrence falls into the same family as the d-$\Pain{V}$ equation. However, to see whether
our recurrence is equivalent to d-$\Pain{V}$, we need to compare the dynamics. We describe the choice of the standard d-$\dPain{D_{4}^{(1)}/D_{4}^{(1)}}$
point configuration, choices of the root bases for the surface and the symmetry sub-lattices, and other data, in the Appendix; 
we follow \cite{KajNouYam:2017:GAOPE} in our conventions.

\subsection{Initial Geometry Identification} 
\label{sub:subsection_name}
To compare the application dynamics with the standard dynamics of d-$\Pain{V}$, we need to work with the same root bases. Thus, we begin by finding some change of 
basis of $\operatorname{Pic}(\mathcal{X})$ that will identify the surface roots between our recurrence and the standard example, and then use this change of basis 
to identify the symmetry roots and compare the translations. At this point, although we need to make some choices, we do not need to worry whether those choices
are correct, since they will be adjusted later on. 

\begin{lemma}\label{lem:change-basis-pre}
	The following change of basis of $\operatorname{Pic}(\mathcal{X})$ identifies the root bases 
	between the standard $D_{4}^{(1)}$ surface and the surface that we obtained for the hypergeometric weight recurrence:
	\begin{align*}
		\mathcal{H}_{x} & = \mathcal{H}_{g}, & \qquad 
		\mathcal{H}_{f} & = 2\mathcal{H}_{x} + \mathcal{H}_{y} - \mathcal{F}_{3} - \mathcal{F}_{4} - \mathcal{F}_{5} - \mathcal{F}_{6},\\
		\mathcal{H}_{y} & = \mathcal{H}_{f} + 2\mathcal{H}_{g}  - \mathcal{E}_{3} -\mathcal{E}_{4} - \mathcal{E}_{5} - \mathcal{E}_{6}, & 
		\qquad 	\mathcal{H}_{g} &= \mathcal{H}_{x},\\
		\mathcal{F}_{1} & = \mathcal{E}_{1}, & \qquad \mathcal{E}_{1} & = \mathcal{F}_{1},\\
		\mathcal{F}_{2} & = \mathcal{E}_{2}, & \qquad \mathcal{E}_{2} & = \mathcal{F}_{2},\\
		\mathcal{F}_{3} & = \mathcal{H}_{g} - \mathcal{E}_{6}, & \qquad \mathcal{E}_{3} & = \mathcal{H}_{x} - \mathcal{F}_{6},\\
		\mathcal{F}_{4} & = \mathcal{H}_{g} - \mathcal{E}_{5}, & \qquad \mathcal{E}_{4} & = \mathcal{H}_{x} - \mathcal{F}_{5},\\
		\mathcal{F}_{5} & = \mathcal{H}_{g} - \mathcal{E}_{4}, & \qquad \mathcal{E}_{5} & = \mathcal{H}_{x} - \mathcal{F}_{4},\\
		\mathcal{F}_{6} & = \mathcal{H}_{g} - \mathcal{E}_{3}, & \qquad \mathcal{E}_{6} & = \mathcal{H}_{x} - \mathcal{F}_{3},\\
		\mathcal{F}_{7} & = \mathcal{E}_{7}, & \qquad \mathcal{E}_{7} & = \mathcal{F}_{7},\\
		\mathcal{F}_{8} & = \mathcal{E}_{8}, & \qquad \mathcal{E}_{8} & = \mathcal{F}_{8}.
		\end{align*}
\end{lemma}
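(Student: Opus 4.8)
The goal is to verify that the prescribed map is an isometry of the Picard lattice which carries the surface root basis of the hypergeometric-weight surface to that of the standard $D_4^{(1)}$ surface. This is a finite, purely linear-algebraic verification, so the plan is to check three things in turn. First, I would confirm that the two displayed change-of-basis formulas (the one expressing $\{\mathcal{H}_x,\mathcal{H}_y,\mathcal{F}_i\}$ in terms of $\{\mathcal{H}_f,\mathcal{H}_g,\mathcal{E}_i\}$ and the one going the other way) are genuinely inverse to each other. Most rows are already self-evidently inverse ($\mathcal{F}_1=\mathcal{E}_1$, etc.), and the only nontrivial block is the $4\times 4$ "swap" among $\mathcal{F}_3,\dots,\mathcal{F}_6$ versus $\mathcal{E}_3,\dots,\mathcal{E}_6$ together with the $\mathcal{H}_x,\mathcal{H}_y$ rows; substituting $\mathcal{F}_j = \mathcal{H}_g - \mathcal{E}_{9-j}$ and $\mathcal{H}_g=\mathcal{H}_x$ into the expression for $\mathcal{H}_f$ and checking it reduces to $\mathcal{H}_f$ is a one-line computation, and similarly in reverse.

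Second, I would check that the map preserves the intersection form \eqref{eq:int-form}. Since it is linear and we have the standard intersection data on both $\{\mathcal{H}_x,\mathcal{H}_y,\mathcal{F}_i\}$ and $\{\mathcal{H}_f,\mathcal{H}_g,\mathcal{E}_i\}$, it suffices to verify the nine Gram-matrix entries on the image basis: e.g. $\mathcal{H}_f\bullet\mathcal{H}_f = (2\mathcal{H}_x+\mathcal{H}_y-\mathcal{F}_{3456})^2 = 4\mathcal{H}_x\bullet\mathcal{H}_y - (\mathcal{F}_3^2+\mathcal{F}_4^2+\mathcal{F}_5^2+\mathcal{F}_6^2) = 4 - 4 = 0$; $\mathcal{H}_f\bullet\mathcal{H}_g = (2\mathcal{H}_x+\mathcal{H}_y-\mathcal{F}_{3456})\bullet\mathcal{H}_x = 1$; $\mathcal{H}_f\bullet\mathcal{E}_j$ for $j\in\{3,4,5,6\}$ gives $-\mathcal{F}_j\bullet(\mathcal{H}_x-\mathcal{F}_{9-j}) = 0$ after matching indices, while for $j\in\{1,2,7,8\}$ it is $0$ immediately; and $\mathcal{E}_i\bullet\mathcal{E}_j = -\delta_{ij}$ follows from the same index bookkeeping. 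One must also confirm that the canonical class is preserved, $-\mathcal{K} = 2\mathcal{H}_x+2\mathcal{H}_y-\mathcal{F}_{1\cdots 8} \mapsto 2\mathcal{H}_f+2\mathcal{H}_g-\mathcal{E}_{1\cdots 8}$, which is a quick substitution.

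Third, and this is the point of the lemma, I would verify that under this change of basis the surface roots of our recurrence, namely $\delta_0,\dots,\delta_4$ from \eqref{eq:d-roots-hw}, are carried precisely to the surface roots $\{\alpha_0,\dots,\alpha_4\}$ (in the labelling collected in the Appendix) of the standard d-$\dPain{D_4^{(1)}/D_4^{(1)}}$ surface. For instance $\delta_0 = \mathcal{F}_5-\mathcal{F}_6 \mapsto (\mathcal{H}_g-\mathcal{E}_4)-(\mathcal{H}_g-\mathcal{E}_3) = \mathcal{E}_3-\mathcal{E}_4$, $\delta_3 = \mathcal{F}_7-\mathcal{F}_8\mapsto \mathcal{E}_7-\mathcal{E}_8$, $\delta_2 = \mathcal{F}_6-\mathcal{F}_7\mapsto (\mathcal{H}_g-\mathcal{E}_3)-\mathcal{E}_7 = \mathcal{H}_g-\mathcal{E}_3-\mathcal{E}_7$, and one checks $\delta_1$ and $\delta_4$ similarly map to the remaining two standard surface roots; comparing with the Appendix's list finishes the claim. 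I do not expect a genuine obstacle here: the "hard part" is merely the careful index bookkeeping in the $\mathcal{F}_j \leftrightarrow \mathcal{H}_g-\mathcal{E}_{9-j}$ swap, making sure the reversal of order $3,4,5,6 \mapsto 6,5,4,3$ is applied consistently in every formula and in the intersection checks, since an off-by-one there would be the only way the verification could fail.
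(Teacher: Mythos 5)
Your proposal is correct, and all the computations it calls for do go through: the two tables are indeed mutually inverse, the Gram matrix is preserved, the anticanonical class maps to the anticanonical class, and the five roots of \eqref{eq:d-roots-hw} land exactly on the five standard roots of Figure~\ref{fig:d-roots-d41} (e.g.\ $\delta_{1}=2\mathcal{H}_{x}+\mathcal{H}_{y}-\mathcal{F}_{123456}\mapsto \mathcal{H}_{f}-\mathcal{E}_{1}-\mathcal{E}_{2}$ and $\delta_{4}=\mathcal{H}_{y}-\mathcal{F}_{56}\mapsto \mathcal{H}_{f}-\mathcal{E}_{5}-\mathcal{E}_{6}$, completing your list). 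The route differs from the paper's in direction rather than in substance: the paper \emph{derives} the change of basis, starting from the distinguished central node, forcing $\delta_{2}=\mathcal{F}_{6}-\mathcal{F}_{7}=\mathcal{H}_{g}-\mathcal{E}_{3}-\mathcal{E}_{7}$, and then working outward node by node, making explicit (and explicitly arbitrary) choices such as $\mathcal{F}_{6}=\mathcal{H}_{g}-\mathcal{E}_{3}$, $\mathcal{F}_{7}=\mathcal{E}_{7}$, $\mathcal{F}_{1}=\mathcal{E}_{1}$, $\mathcal{E}_{5}=\mathcal{H}_{x}-\mathcal{F}_{4}$; you instead take the stated formulas as given and \emph{verify} them. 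Your version is, if anything, a more complete proof of the lemma as stated (the paper never explicitly checks isometry or the inverse), while the paper's constructive derivation has the advantage of showing where the formulas come from and which choices are free --- a point that matters later, since precisely these choices are what get adjusted by $w_{3}$ in Lemma~\ref{lem:change-basis-fin}. One cosmetic slip: in your expansion of $\mathcal{H}_{f}\bullet\mathcal{H}_{f}$ the intermediate expression should read $4\mathcal{H}_{x}\bullet\mathcal{H}_{y}+\bigl(\mathcal{F}_{3}^{2}+\mathcal{F}_{4}^{2}+\mathcal{F}_{5}^{2}+\mathcal{F}_{6}^{2}\bigr)=4-4=0$, since each $\mathcal{F}_{i}^{2}=-1$; the value $0$ you report is of course correct.
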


\begin{proof}
	Consider the surface sub-lattice root basis on Figure~\ref{fig:d-roots-hw} and compare it with the standard one on Figure~\ref{fig:d-roots-d41}. Since 
	the $D_{4}^{(1)}$ affine Dynkin diagram has the distinguished node $\delta_{2}$, we must have 
	\begin{equation*}
		\delta_{2} = \mathcal{F}_{6} - \mathcal{F}_{7} = \mathcal{H}_{g} - \mathcal{E}_{3} - \mathcal{E}_{7}.
	\end{equation*}	
	Thus, we can put $\mathcal{F}_{6} = \mathcal{H}_{g} - \mathcal{E}_{3}$ and $\mathcal{F}_{7} = \mathcal{E}_{7}$, and then, matching 
	$\mathcal{F}_{7} - \mathcal{F}_{8} = \mathcal{E}_{7} - \mathcal{E}_{8}$, we see that we can put $\mathcal{F}_{8} = \mathcal{E}_{8}$. Next, matching 
	$\mathcal{F}_{5} - \mathcal{F}_{6} = \mathcal{E}_{3} - \mathcal{E}_{4}$, we see that $\mathcal{F}_{5} = \mathcal{H}_{g} - \mathcal{E}_{4}$. Matching
	$\mathcal{H}_{y} - \mathcal{F}_{5} - \mathcal{F}_{6} =  \mathcal{H}_{f} - \mathcal{E}_{5} - \mathcal{E}_{6}$ we get 
	$\mathcal{H}_{y}$. The final node matching gives us the equation 
	$2 \mathcal{H}_{x}  - \mathcal{F}_{1234} = \mathcal{E}_{5} + \mathcal{E}_{6} - \mathcal{E}_{1} - \mathcal{E}_{2}$. Thus, we can put (again, at this point
	we do not worry about making the right choice)
	$\mathcal{F}_{1} = \mathcal{E}_{1}$, $\mathcal{F}_{2} = \mathcal{E}_{2}$, $\mathcal{E}_{5} = \mathcal{H}_{x} - \mathcal{F}_{4}$ and 
	$\mathcal{E}_{6} = \mathcal{H}_{x} - \mathcal{F}_{3}$, so that $\mathcal{H}_{x} = \mathcal{E}_{6} + \mathcal{F}_{3} = \mathcal{H}_{g}$. The inverse 
	change of basis is straightforward.	
\end{proof}


\subsection{The Symmetry Roots and the Translations} 
\label{sub:the_symmetry_roots_and_the_translations}
We are now in a position to compare the dynamics. Starting with the standard choice of the symmetry root basis \eqref{eq:a-roots-d4}			
and using the change of basis in Lemma~\ref{lem:change-basis-pre}, we get the symmetry roots for the applied problem shown on 
Figure~\ref{fig:a-roots-hw-pre}.	
\begin{figure}[ht]
\begin{equation}\label{eq:a-roots-hw-pre}			
	\raisebox{-32.1pt}{\begin{tikzpicture}[
			elt/.style={circle,draw=black!100,thick, inner sep=0pt,minimum size=2mm}]
		\path 	(-1,1) 	node 	(a0) [elt, label={[xshift=-10pt, yshift = -10 pt] $\alpha_{0}$} ] {}
		        (-1,-1) node 	(a1) [elt, label={[xshift=-10pt, yshift = -10 pt] $\alpha_{1}$} ] {}
		        ( 0,0) 	node  	(a2) [elt, label={[xshift=10pt, yshift = -10 pt] $\alpha_{2}$} ] {}
		        ( 1,1) 	node  	(a3) [elt, label={[xshift=10pt, yshift = -10 pt] $\alpha_{3}$} ] {}
		        ( 1,-1) node 	(a4) [elt, label={[xshift=10pt, yshift = -10 pt] $\alpha_{4}$} ] {};
		\draw [black,line width=1pt ] (a0) -- (a2) -- (a1)  (a3) -- (a2) -- (a4); 
	\end{tikzpicture}} \qquad
			\begin{alignedat}{2}
			\alpha_{0} &= \mathcal{H}_{y} - \mathcal{F}_{34}, &\qquad  \alpha_{3} &= 2 \mathcal{H}_{x} + \mathcal{H}_{y} - \mathcal{F}_{345678},\\
			\alpha_{1} &= \mathcal{F}_{1} - \mathcal{F}_{2}, &\qquad  \alpha_{4} &= \mathcal{F}_{3} - \mathcal{F}_{4}.\\
			\alpha_{2} &= \mathcal{F}_{4} - \mathcal{F}_{1},
			\end{alignedat}
\end{equation}
	\caption{The Symmetry Root Basis for the Hypergeometric Weight Recurrence (preliminary choice)}
	\label{fig:a-roots-hw-pre}	
\end{figure}

From the action of $\psi_{*}$ on $\operatorname{Pic}(\mathcal{X})$ given in Lemma~\ref{lem:dyn} we immediately see that the corresponding translation on the root lattice is 
\begin{equation}\label{eq:hw-transl-pre}
	\psi_{*}: \upalpha =  \langle \alpha_{0}, \alpha_{1}, \alpha_{2}, \alpha_{3}, \alpha_{4}  \rangle
	\mapsto \psi_{*}(\upalpha) = \upalpha + \langle 1,0,0,-1,0 \rangle \delta,
\end{equation}
which is \emph{different} from the standard translation vector $\langle 1,0,-1,1,0 \rangle$ given in \eqref{eq:dPv-rv-evol}. However, decomposing 
$\psi$ in terms of generators of the extended affine Weyl symmetry group, see Section~\ref{sub:the_extended_affine_weyl_symmetry_group}, 
and comparing it with the expression for $\varphi$ given in \eqref{eq:dPv-decomp},
\begin{equation}
	\psi = \sigma_{3}\sigma_{2} w_{3} w_{2} w_{4} w_{1} w_{2} w_{3},\qquad \varphi = \sigma_{3}\sigma_{2} w_{3} w_{0} w_{2} w_{4} w_{1} w_{2}, 
\end{equation}
we immediately see that $\psi = w_{3}\circ \varphi \circ w_{3}^{-1}$ (recall that $w_{3} \sigma_{3}\sigma_{2}  = \sigma_{3}\sigma_{2} w_{0}$ 
and that $w_{3}$ is an involution, $w_{3}^{-1} = w_{3}$). Thus, our dynamic is indeed equivalent to the standard d-$\Pain{V}$ equation, but the change of 
basis in Lemma~\ref{lem:change-basis-pre} needs to be adjusted by acting by $w_{3}$.

\begin{remark} At this point we verified the decompositions in \eqref{eq:hw-transl-pre} on the level of the Picard lattice. That is, if we use expressions
for symmetry roots in \eqref{eq:a-roots-hw-pre}	to define $w_{i}$ and $\sigma_{i}$ as acting on the Picard lattice, we get the expression for the mapping
$(\psi^{(n)})_{*}$ in Lemma~\ref{lem:dyn}. To obtain this decomposition on the level of actual maps first requires 
finding the change of variables that induces the change of basis in Lemma~\ref{lem:change-basis-pre} and then using it to
rewrite the birational representation in Section~\ref{sub:the_extended_affine_weyl_symmetry_group} in the application coordinates $(x,y)$. However, it is
too early to do so at this point, since the dynamic will not match; we do it later in Section~\ref{sub:the_change_of_coordinates}.
\end{remark}

\subsection{Final Geometry Identification} 
\label{sub:final_geometry_identification}

\begin{figure}[ht]
\begin{equation}\label{eq:a-roots-hw-fin}			
	\raisebox{-32.1pt}{\begin{tikzpicture}[
			elt/.style={circle,draw=black!100,thick, inner sep=0pt,minimum size=2mm}]
		\path 	(-1,1) 	node 	(a0) [elt, label={[xshift=-10pt, yshift = -10 pt] $\alpha_{0}$} ] {}
		        (-1,-1) node 	(a1) [elt, label={[xshift=-10pt, yshift = -10 pt] $\alpha_{1}$} ] {}
		        ( 0,0) 	node  	(a2) [elt, label={[xshift=10pt, yshift = -10 pt] $\alpha_{2}$} ] {}
		        ( 1,1) 	node  	(a3) [elt, label={[xshift=10pt, yshift = -10 pt] $\alpha_{3}$} ] {}
		        ( 1,-1) node 	(a4) [elt, label={[xshift=10pt, yshift = -10 pt] $\alpha_{4}$} ] {};
		\draw [black,line width=1pt ] (a0) -- (a2) -- (a1)  (a3) -- (a2) -- (a4); 
	\end{tikzpicture}} \qquad
			\begin{alignedat}{2}
			\alpha_{0} &= \mathcal{H}_{y} - \mathcal{F}_{34}, &\qquad  \alpha_{3} &= -2 \mathcal{H}_{x} - \mathcal{H}_{y} + \mathcal{F}_{345678},\\
			\alpha_{1} &= \mathcal{F}_{1} - \mathcal{F}_{2}, &\qquad  \alpha_{4} &= \mathcal{F}_{3} - \mathcal{F}_{4}.\\
			\alpha_{2} &= 2 \mathcal{H}_{x} + \mathcal{H}_{y}-\mathcal{F}_{135678},
			\end{alignedat}
\end{equation}
	\caption{The Symmetry Root Basis for the Hypergeometric Weight Recurrence (final choice)}
	\label{fig:a-roots-hw-fin}	
\end{figure}

\begin{lemma}\label{lem:change-basis-fin}
	After the change of basis of $\operatorname{Pic}(\mathcal{X})$ given by
	\begin{align*}
		\mathcal{H}_{x} & = \mathcal{H}_{f} + \mathcal{H}_{g} - \mathcal{E}_{7} - \mathcal{E}_{8}, & \quad 
		\mathcal{H}_{f} & = 2\mathcal{H}_{x} + \mathcal{H}_{y} - \mathcal{F}_{3} - \mathcal{F}_{4} - \mathcal{F}_{5} - \mathcal{F}_{6},\\
		\mathcal{H}_{y} & = 3\mathcal{H}_{f} + 2\mathcal{H}_{g}  - \mathcal{E}_{3} -\mathcal{E}_{4} - \mathcal{E}_{5} - \mathcal{E}_{6} - 
		2 \mathcal{E}_{7} - 2 \mathcal{E}_{8}, & 
		\qquad 	\mathcal{H}_{g} &= 3\mathcal{H}_{x} + \mathcal{H}_{y} - \mathcal{F}_{3} - \mathcal{F}_{4} - \mathcal{F}_{5} - \mathcal{F}_{6} - 
		\mathcal{F}_{7} - \mathcal{F}_{8},\\
		\mathcal{F}_{1} & = \mathcal{E}_{1}, & \qquad \mathcal{E}_{1} & = \mathcal{F}_{1},\\
		\mathcal{F}_{2} & = \mathcal{E}_{2}, & \qquad \mathcal{E}_{2} & = \mathcal{F}_{2},\\
		\mathcal{F}_{3} & = \mathcal{H}_{f} + \mathcal{H}_{g} - \mathcal{E}_{6} - \mathcal{E}_{7} - \mathcal{E}_{8}, & \quad 
		\mathcal{E}_{3} & = \mathcal{H}_{x} - \mathcal{F}_{6},\\
		\mathcal{F}_{4} & =  \mathcal{H}_{f} + \mathcal{H}_{g} - \mathcal{E}_{5} - \mathcal{E}_{7} - \mathcal{E}_{8}, & \quad 
		\mathcal{E}_{4} & = \mathcal{H}_{x} - \mathcal{F}_{5},\\
		\mathcal{F}_{5} & =  \mathcal{H}_{f} + \mathcal{H}_{g} - \mathcal{E}_{4} - \mathcal{E}_{7} - \mathcal{E}_{8}, & \quad 
		\mathcal{E}_{5} & = \mathcal{H}_{x} - \mathcal{F}_{4},\\
		\mathcal{F}_{6} & =  \mathcal{H}_{f} + \mathcal{H}_{g} - \mathcal{E}_{3} - \mathcal{E}_{7} - \mathcal{E}_{8}, & \quad 
		\mathcal{E}_{6} & = \mathcal{H}_{x} - \mathcal{F}_{3},\\
		\mathcal{F}_{7} & = \mathcal{H}_{f} - \mathcal{E}_{8}, & \quad 
		\mathcal{E}_{7} & = 2\mathcal{H}_{x} + \mathcal{H}_{y} - \mathcal{F}_{3} - \mathcal{F}_{4} - \mathcal{F}_{5} - \mathcal{F}_{6} - \mathcal{F}_{8},\\
		\mathcal{F}_{8} & = \mathcal{H}_{f} - \mathcal{E}_{7}, & \quad 
		\mathcal{E}_{8} & = 2\mathcal{H}_{x} + \mathcal{H}_{y} - \mathcal{F}_{3} - \mathcal{F}_{4} - \mathcal{F}_{5} - \mathcal{F}_{6} - \mathcal{F}_{7},
		\end{align*}
		the recurrence relations for variables $x_{n}$ and $y_{n}$ coincides with the standard d-$\Pain{V}$ discrete Painlev\'e equation given by
		\eqref{eq:dPv-std}. The resulting identification of the symmetry root bases (the surface root bases do not change) is shown in 
		Figure~\ref{fig:a-roots-hw-fin}.
\end{lemma}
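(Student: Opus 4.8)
The plan is to obtain the final identification by correcting the preliminary one of Lemma~\ref{lem:change-basis-pre} with the reflection $w_{3}$, as dictated by the conjugacy $\psi = w_{3}\circ\varphi\circ w_{3}^{-1}$ established in Section~\ref{sub:the_symmetry_roots_and_the_translations}. Writing $\Phi_{\mathrm{pre}}$ for the lattice isomorphism of Lemma~\ref{lem:change-basis-pre} and $w_{3}$ for the reflection in the preliminary symmetry root $\alpha_{3} = 2\mathcal{H}_{x} + \mathcal{H}_{y} - \mathcal{F}_{345678}$ of \eqref{eq:a-roots-hw-pre}, so that $w_{3}(v) = v + (v\bullet\alpha_{3})\,\alpha_{3}$, I would set $\Phi_{\mathrm{fin}} = w_{3}\circ\Phi_{\mathrm{pre}}$. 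The first step is then purely mechanical: expand $\Phi_{\mathrm{fin}}$ on the generators $\mathcal{H}_{f},\mathcal{H}_{g},\mathcal{E}_{1},\dots,\mathcal{E}_{8}$ using the formulas of Lemma~\ref{lem:change-basis-pre} together with the intersection pairings \eqref{eq:int-form}; this should reproduce the displayed table, and I would also check as a routine matter that its two columns are mutually inverse and preserve \eqref{eq:int-form}.

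Next I would verify that the surface root basis of Figure~\ref{fig:d-roots-hw} is untouched, as the statement claims. This is immediate: being a symmetry root, $\alpha_{3}$ is orthogonal to every surface root, $\alpha_{3}\bullet\delta_{j} = 0$ for $j = 0,\dots,4$ (checked directly from \eqref{eq:a-roots-hw-pre} and \eqref{eq:d-roots-hw}), so $w_{3}$ fixes the surface sub-lattice pointwise and $\Phi_{\mathrm{fin}}$ agrees with $\Phi_{\mathrm{pre}}$ there. For the symmetry roots I would instead push \eqref{eq:a-roots-hw-pre} through $w_{3}$: in the $D_{4}^{(1)}$ Dynkin diagram $\alpha_{3}$ is adjacent only to the central node $\alpha_{2}$, so $w_{3}$ fixes $\alpha_{0},\alpha_{1},\alpha_{4}$, negates $\alpha_{3}$, and sends $\alpha_{2}\mapsto\alpha_{2}+\alpha_{3}$; substituting the preliminary expressions then gives exactly the root basis displayed in Figure~\ref{fig:a-roots-hw-fin}. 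Because $\psi_{*} = w_{3}\circ\varphi_{*}\circ w_{3}^{-1}$ on $\operatorname{Pic}(\mathcal{X})$ and $w_{3}$ is an involution, conjugating the action of Lemma~\ref{lem:dyn}(c) by $\Phi_{\mathrm{fin}}$ produces precisely the action $\varphi_{*}$ of the standard d-$\Pain{V}$ map; as a consistency check I would recompute the translation this induces on the final symmetry roots and confirm it equals $\langle 1,0,-1,1,0\rangle\delta$, matching \eqref{eq:dPv-rv-evol}.

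Finally, to upgrade this lattice identification to the asserted coincidence of the recurrences at the level of actual maps, I would realize $\Phi_{\mathrm{fin}}$ by an honest birational change of variables $(x,y)\mapsto(f,g)$: the classes $\mathcal{H}_{f}$ and $\mathcal{H}_{g}$, read off from the table in the $(x,y)$-geometry of $\mathcal{X}$, are each cut out by a pencil of rational curves, and a coordinate on each pencil fixes $f$ and $g$ up to independent M\"obius transformations; those residual freedoms are pinned down by tracking a few distinguished curves among the coordinate divisors and by matching the parameters of the two problems, the latter done through the period map, which expresses the d-$\Pain{V}$ root variables in terms of $\alpha,\beta,\gamma,c$ (with $ct = 1$). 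This is precisely the computation of Section~\ref{sub:the_change_of_coordinates} leading to the explicit formulas \eqref{eq:xy2fg}--\eqref{eq:fg2xy} of Theorem~\ref{thm:coordinate-change}; substituting those into the standard system \eqref{eq:dPv-std} should return \eqref{eq:yn-evol}--\eqref{eq:xn-evol}. The conceptual part --- the $w_{3}$-correction and the bookkeeping above --- is short; the main obstacle is this last step, namely fixing the M\"obius normalizations and the parameter dictionary so that the transformed equations land exactly on \eqref{eq:dPv-std} and not merely on a projectively equivalent system, a verification best entrusted to a computer algebra system.
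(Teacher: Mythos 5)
Your proposal is correct and follows essentially the same route as the paper: the final lattice identification is exactly the preliminary one of Lemma~\ref{lem:change-basis-pre} composed with the reflection $w_{3}$ coming from the conjugacy $\psi = w_{3}\circ\varphi\circ w_{3}^{-1}$, which fixes the surface roots and produces the symmetry roots of Figure~\ref{fig:a-roots-hw-fin}, and the coordinate-level statement is then realized through the pencil/period-map computation of Section~\ref{sub:the_change_of_coordinates}. Your spot-checks (orthogonality of $\alpha_{3}$ to the surface roots, the action on $\alpha_{2}$, and the resulting translation vector $\langle 1,0,-1,1,0\rangle\delta$) are the same verifications implicit in the paper's treatment.
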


Next we need to realize this change of basis on $\operatorname{Pic}(\mathcal{X})$ by an explicit change of coordinates. For that, it is convenient to first 
match the parameters between the applied problem and the reference example. This is done with the help of the \emph{Period Map}.

\subsection{The Period Map and the Identification of Parameters} 
\label{sub:the_period_map_and_the_identification_of_parameters}

The Period Map computation is similar to the slightly simpler standard case explained in Section~\ref{sub:the_period_map_and_the_root_variables}. 
Thus, we only state the result.

\begin{lemma}
	\qquad
	
	\begin{enumerate}[(i)]
		\item The residue of the symplectic form $\omega = k \frac{dX \wedge dY}{s(X,Y)\, Y} = k \frac{dX \wedge ds}{s(s - X^{2})}$ defined in \eqref{eq:sf-hw}
		along the irreducible components of the polar divisor is given by
		\begin{alignat*}{3}
			\operatorname{res}_{d_{0}} \omega &= - k \frac{dv_{5}}{v_{5}^{2}}, &\qquad
			\operatorname{res}_{d_{2}} \omega &= - k \frac{(n + \alpha + \beta)dv_{6}}{(v_{6} - 1)^{2}}, &\qquad
			\operatorname{res}_{d_{4}} \omega &= - k \frac{dX}{X^{2}}.\\
			\operatorname{res}_{d_{1}} \omega &=  k \frac{dX}{X^{2}}, &\qquad 
			\operatorname{res}_{d_{3}} \omega &= - k \frac{(c - 1)^{2} dv_{7}}{c}, &\qquad
		\end{alignat*}
		\item The root variables are given by 
		\begin{equation}\label{eq:root-vars-hw}
			a_{0} = k(\gamma - n - \alpha),\quad a_{1}= k(\alpha-1), \quad a_{2} = k(1 + n + \beta - \gamma),\quad a_{3} = -k(n + \beta),\quad a_{4} = k(\gamma - \beta).
		\end{equation}
		The normalization condition $a_{0} + a_{1} + 2a_{2} + a_{3} + a_{4} = 1$ then implies that $k=1$, and we get the following relations between our 
		parameters and the root variables:
		\begin{equation}\label{eq:par-match-app}
			\alpha = a_{1} + 1,\quad \beta = a_{0} + a_{1} + a_{2},\quad \gamma = 1 - a_{2} - a_{3},\quad n = a_{2} + a_{4} - 1.
		\end{equation}
		Note that the root variable evolution, which is the same as given in \eqref{eq:dPv-rv-evol}, is consistent with what we expect: 
		$\overline{\alpha} = \alpha$, $\overline{\beta} = \beta$, $\overline{\gamma} = \gamma$, and $\overline{n} = n+1$. Also observe that 
		we can not yet see the relationship between parameters $t$ and $c$ in this identification. 
		After we find the actual change of coordinates in the next section, we get that $ct=1$.
	\end{enumerate}
\end{lemma}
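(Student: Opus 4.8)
The plan is to establish both parts by direct computation, treating part (i) --- the residues of $\omega$ --- as the engine that then drives the root-variable formulas in part (ii). For part (i) I would compute each $\operatorname{res}_{d_i}\omega$ as a Poincar\'e residue in the chart in which the component $d_i$ appears as a coordinate line. The two components $d_1 = 2\mathcal{H}_x + \mathcal{H}_y - \mathcal{F}_{123456}$ and $d_4 = \mathcal{H}_y - \mathcal{F}_{56}$ are already visible in the affine $(X,Y)$-chart as $\{s=0\}$ and $\{Y=0\}$, so writing $\omega$ as in \eqref{eq:sf-hw} and extracting the coefficient of $\tfrac{ds}{s}$ (resp.\ $\tfrac{dY}{Y}$) gives $\operatorname{res}_{d_1}\omega = k\,\tfrac{dX}{X^{2}}$ and $\operatorname{res}_{d_4}\omega = -k\,\tfrac{dX}{X^{2}}$ at once, using $s|_{Y=0}=X^{2}$. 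The remaining components $d_0 = \mathcal{F}_5 - \mathcal{F}_6$, $d_2 = \mathcal{F}_6 - \mathcal{F}_7$, $d_3 = \mathcal{F}_7 - \mathcal{F}_8$ live in the degeneration cascade over $q_5$, so I would pull $\omega$ back through the successive blowup charts $(u_5,v_5)$, $(u_6,v_6)$, $(u_7,v_7)$ (using $X=u_5$, $Y=u_5 v_5$, then $u_6=u_5$, $v_6=v_5/u_5$, and so on) and read off the residue along $\{u_i=0\}$.

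The key subtlety here is that $d_2$ enters the anti-canonical divisor with multiplicity two --- it is the central node, with coefficient $2$ in the null root $\delta = \delta_0+\delta_1+2\delta_2+\delta_3+\delta_4$ of $D_4^{(1)}$ --- so $\omega$ has a genuine double pole along it. In the $(u_6,v_6)$-chart one finds $\omega = k\,\tfrac{du_6\wedge dv_6}{u_6^{2}\,\hat s\,v_6}$ with $\hat s = 1 - v_6 + (n+\alpha+\beta)u_6 v_6 - \alpha\beta u_6^{2} v_6$. For such a component the relevant period datum is not the naive Poincar\'e residue but the coefficient of $\tfrac{du_6}{u_6}$ in the Laurent expansion in $u_6$; expanding $\hat s^{-1}$ to first order in $u_6$ produces exactly $\operatorname{res}_{d_2}\omega = -k\,\tfrac{(n+\alpha+\beta)\,dv_6}{(v_6-1)^{2}}$. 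The analogous (genuine simple-pole) computations in the $(u_5,v_5)$- and $(u_7,v_7)$-charts yield $\operatorname{res}_{d_0}\omega = -k\,\tfrac{dv_5}{v_5^{2}}$ and $\operatorname{res}_{d_3}\omega = -k\,\tfrac{(c-1)^{2}\,dv_7}{c}$, completing part (i).

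For part (ii) I would invoke the period map exactly as in the standard case of Section~\ref{sub:the_period_map_and_the_root_variables}: the root variable attached to a symmetry root is the period of $\omega$ over that root, and for a root of the form $\mathcal{F}_i - \mathcal{F}_j$ with $q_i,q_j$ lying on a common anti-canonical component $D$ this reduces to the integral of $\operatorname{res}_D\omega$ between the two base-point positions along $D$. For the final symmetry roots \eqref{eq:a-roots-hw-fin}, the base points $q_1,\dots,q_4$ all lie on $d_1=\{s=0\}$ at $X = 1,\tfrac1\alpha,\tfrac1\beta,\tfrac1\gamma$ (these are precisely the points of the curve $\gamma$ in \eqref{eq:gamma-curve}), so for instance $a_1 = \chi(\mathcal{F}_1 - \mathcal{F}_2) = k\int_{1/\alpha}^{1}\tfrac{dX}{X^{2}} = k(\alpha-1)$ and $a_4 = \chi(\mathcal{F}_3 - \mathcal{F}_4) = k(\gamma-\beta)$, matching \eqref{eq:root-vars-hw}. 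The roots $\alpha_0,\alpha_2,\alpha_3$ additionally involve $\mathcal{H}_x,\mathcal{H}_y$ and the cascade points $q_5,\dots,q_8$, so here I would rewrite the $\mathcal{H}$-classes using $\mathcal{H}_y = d_4 + \mathcal{F}_{56}$ and the like, and integrate the residue forms of part (i) between the cascade positions $v_5=0$, $v_6=\tfrac{c}{c-1}$, $v_7=\tfrac{c(c(\alpha+\beta+n)+n-\gamma)}{(c-1)^{2}}$ recorded in Lemma~\ref{lem:dyn}; the $c$-dependent contributions cancel, leaving the $c$-free expressions in \eqref{eq:root-vars-hw}.

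Finally, imposing the normalization $a_0 + a_1 + 2a_2 + a_3 + a_4 = 1$ --- that is, that the period over the full null root $\delta = -\mathcal{K}_{\mathcal{X}}$ equals $1$ --- forces $k=1$, and inverting the resulting linear system gives the parameter identification \eqref{eq:par-match-app}; a one-line substitution then confirms that $\overline{\alpha}=\alpha$, $\overline{\beta}=\beta$, $\overline{\gamma}=\gamma$, $\overline{n}=n+1$ is consistent with the root-variable evolution \eqref{eq:dPv-rv-evol}. I expect the main obstacle to be the cascade bookkeeping: correctly transporting $\omega$ and the base-point positions through the three successive charts, handling the double-pole datum at $d_2$ via the generalized residue rather than a naive Poincar\'e residue, and verifying that the $c$-dependence genuinely cancels in $a_2$ and $a_3$ instead of surviving as spurious parameters. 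Everything else is Poincar\'e residues and linear algebra.
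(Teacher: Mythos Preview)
Your proposal is correct and is precisely the approach the paper intends: the paper does not write out a proof of this lemma at all, but simply says ``The Period Map computation is similar to the slightly simpler standard case explained in Section~\ref{sub:the_period_map_and_the_root_variables}. Thus, we only state the result.'' What you have outlined is exactly the content of that reference, carried out in the hypergeometric-weight charts --- Poincar\'e residues along each $d_i$, the generalized residue at the multiplicity-two component $d_2$, and integration of the residue forms between base-point positions to obtain the $a_i$ --- so your plan is a faithful and more explicit version of the paper's own (omitted) argument.
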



\subsection{The Change of Coordinates} 
\label{sub:the_change_of_coordinates}
We are now ready to prove Theorem~\ref{thm:coordinate-change}, which is the main result of the paper.
%
%
\begin{proof}(\textbf{Theorem 1})

	Since equations \eqref{eq:fg2xy} are simpler, we explain how to obtain them. Equations \eqref{eq:xy2fg} can then be either obtained in the same way 
	or by finding the explicit inverse change of variables from \eqref{eq:fg2xy}.
	
	From our change of basis, we see that 
	\begin{equation*}
		\mathcal{H}_{f}  = 2\mathcal{H}_{x} + \mathcal{H}_{y} - \mathcal{F}_{3} - \mathcal{F}_{4} - \mathcal{F}_{5} - \mathcal{F}_{6}.
	\end{equation*}
	Thus, $f(x,y)$ is a projective coordinate on a pencil of $(2,1)$-curves in the $(x,y)$-plane passing through points $q_{3}$, $q_{4}$, $q_{5}$, and $q_{6}$.
	Working in the $(X,Y)$-chart, we consider a generic $(2,1)$-curve $a_{00} + a_{01}X + a_{02}X^{2} + a_{10}Y + a_{11} XY + a_{12}X^{2}Y=0$. To pass 
	through $q_{5}(0,0)$ we much have $a_{00} = 0$, and to pass through the point $q_{6}(X=0,Y/X=0)$ we must have $a_{10} = 0$. Imposing conditions at $q_{3}$
	and $q_{4}$ gives us more constraints on the coefficients, and we get
	\begin{equation*}
		a_{11} \Big(Y (1 - X \beta)(1 - X \gamma)\Big) - a_{20} 
		\Big(Y (n + \alpha - \gamma) - X^{2} (\beta Y (\alpha \beta - \beta \gamma - n \gamma) - (\beta + \gamma))\Big)= 0.
	\end{equation*}
	The expressions at the coefficients $a_{11}$ and $a_{20}$ define two basis curves in the pencil and the coordinate $f(X,Y)$ is their ratio, up to a M\"obius
	transformation. When written in the $(x,y)$-chart, we get
	\begin{equation}
		f(x,y) = \frac{A  (x - \beta)(x - \gamma) + B \Big(x^{2} (n + \alpha - \gamma) + y (\beta + \gamma) - \beta (\alpha \beta - \beta \gamma - n \gamma)\Big)}{
		C (x - \beta)(x - \gamma) + D \Big(x^{2} (n + \alpha - \gamma) + y (\beta + \gamma) - \beta (\alpha \beta - \beta \gamma - n \gamma)\Big)},
	\end{equation}
	where the coefficients $A,B,C,D$ are still to be determined. To do that, we use the information about the exceptional divisor correspondence 
	in Lemma~\ref{lem:change-basis-fin}. For example, the condition $\mathcal{E}_{2} = \mathcal{F}_{2}$ means that 
	$(f,g)(q_{2}) = (f,g)(\alpha,-n \alpha) = (\infty,b_{2}) = p_{2}$, i.e., 
	\begin{equation*}
		F(\alpha, - n \alpha) = \frac{C + D(n + \alpha + \beta)}{A + B(n + \alpha + \beta)} = 0,\qquad\text{and so}\quad C = - D(n + \alpha + \beta).
	\end{equation*}
	The condition $\mathcal{E}_{6} = \mathcal{H}_{x} - \mathcal{F}_{3}$ means that $f(\beta,y) = B/D = 0$, and so $B=0$. As a result, after some simplifications,
	we get 
	\begin{equation*}
		f(x,y) = \frac{\tilde{A}(x-\beta)(x-\gamma)}{(x - \alpha) (x - \beta) - n x - y},
	\end{equation*}
	where $\tilde{A}$ is some proportionality constant. To find $\tilde{A}$, we use the condition $\mathcal{E}_{3} - \mathcal{E}_{4} = \mathcal{F}_{5} - \mathcal{F}_{6}$,
	which means that, after doing a sequence of substitutions to express $f$ in the $(u_{5},v_{5})$-chart and then restricting to $u_{5}=0$, the image of the 
	(proper transform $\mathcal{F}_{5} - \mathcal{F}_{6}$) of the exceptional divisor $\mathcal{F}_{5}$ should collapse to the point
	$p_{3}(t,\infty)$, i.e., $f(u_{5}=0,v_{5}) = t$. This results in $A=1$. Similarly, the condition $\mathcal{E}_{7} - \mathcal{E}_{8} = \mathcal{F}_{7} - \mathcal{F}_{8}$
	results in the relationship between $c$ and $t$, $ct=1$. Computing $g(x,y)$ is done exactly along the same lines, but the equations for the basis curves
	in the $\mathcal{H}_{g}$ pencil are more complicated, and so this computation is omitted. 
\end{proof}


\subsection{Partial Decompositions and Gauge Ambiguities} 
\label{sub:partial_decompositions}
In this section we want to make the following important point. Note that equation \eqref{eq:yn-evol} is a relation between $x_{n}$, $y_{n}$, and $y_{n+1}$ that 
we used to define the forward map $\psi_{1}:(x_{n},y_{n})\to (x_{n},y_{n+1})$. Similarly, equation \eqref{eq:xn-evol} is a relation between 
$x_{n-1}$, $x_{n}$, and $y_{n}$ that we used to define the backward map $\psi_{2}:(x_{n},y_{n})\to (x_{n-1},y_{n})$. In doing so we ignored possible 
$\mathbf{PGL}_{2}(\mathbb{C})\times \mathbf{PGL}_{2}(\mathbb{C})$ gauge group actions on both the domain and the range of the mappings. Thus, the mappings 
$\psi_{i}$ may not correspond exactly to elements of the birational representation of the symmetry group, where some normalization must be imposed to 
ensure the group structure on the level of the mappings. This point is essential, since we may not see the 
correct evolution of parameters in these partial maps. If necessary, this problem can be corrected using the action of the mappings on the Picard lattice
(that does not depend on the gauge actions) and the Period Map.

This issue can already be seen in the simpler model example of the difference Painlev\'e-V equation \eqref{eq:dPv-std}. 
This mapping can also be partially decomposed, in the natural way, as $\varphi= \varphi_{2}^{-1} \circ \varphi_{1}$, where
$\varphi_{1}$ is a forward mapping $\varphi_{1}:(f,g)\mapsto (\overline{f},-g)$ and $\varphi_{2}$ is a backward mapping 
$\varphi_{2}: (f,g)\mapsto (f,-\underline{g})$. Note that the additional negative sign (which is an example of the gauge group action mentioned above) 
is essential for the mappings $\varphi_{i}$ to be representable as a composition of elementary birational maps described in Theorems~\ref{thm:bir-weyl-d4}
and \ref{thm:bir-aut-d4}, where the normalization condition that we imposed in constructing the birational representation of $\widetilde{W}\left(D_{4}^{(1)}\right)$
is given by \eqref{eq:d4-param-norm}. In fact, there are two slightly different ways to write these mappings in terms of generators;
$\varphi = \varphi_{2}^{-1} \circ \varphi_{1} = \tilde{\varphi}_{2}^{-1} \circ \tilde{\varphi_{1}}$ (this is a direct calculation):
\begin{align}\label{eq:dPv-decomp1}
	\varphi_{1} &= \sigma_{3} \sigma_{2} w_{1} w_{2} w_{4} w_{1} w_{2}:(f,g)\mapsto (\overline{f},-g);\quad 
	\overline{a}_{0} = 1 - a_{0},\  \overline{a}_{1} = a_{1},\ \overline{a}_{2} = -a_{1} -a_{2},\ \overline{a}_{3} = 1-a_{3},\ 
	\overline{a}_{4} = -a_{4};\notag\\
	\varphi_{2} &=  w_{0} w_{3} w_{4}:(f,g)\mapsto (f,-\underline{g});\quad 
	\underline{a}_{0} =- a_{0},\  \underline{a}_{1} = a_{1},\ \underline{a}_{2} =1 -a_{1} -a_{2},\ \underline{a}_{3} =-a_{3},\ 
	\underline{a}_{4} = -a_{4},
\end{align}
or
\begin{align}\label{eq:dPv-decomp2}
	\tilde{\varphi}_{1} &= \sigma_{3} \sigma_{2} w_{1} w_{2} w_{4} w_{1} w_{2} w_{1}:(f,g)\mapsto (\overline{f},-g);\quad 
	\overline{a}_{0} = 1 - a_{0},\   \overline{a}_{1} = -a_{1},\ \overline{a}_{2} = -a_{2},\ \overline{a}_{3} = 1-a_{3},\ 
	\overline{a}_{4} = -a_{4};\notag\\
	\tilde{\varphi}_{2} &=  w_{0}w_{1}w_{3}w_{4}:(f,g)\mapsto (f,-\underline{g});\quad 
	\underline{a}_{0} =- a_{0},\  \underline{a}_{1} = -a_{1},\ \underline{a}_{2} =1  -a_{2},\ \underline{a}_{3} =-a_{3},\ 
	\underline{a}_{4} = -a_{4}.
\end{align}
Looking at the action of the mappings on the root variables $a_{i}$ it is clear that the individual mappings $\varphi_{1,2}$ do not correspond 
to translations on the symmetry sub-lattice; in fact, the need for the negative sign can be clearly seen at this point. The negative 
sign disappears, which is fairly typical, when we consider complete forward or backward steps in the dynamics, since those correspond to translations
\begin{equation*}
	(\underline{f},\underline{g})\xleftarrow{\varphi_{1}^{-1}} (f,-\underline{g}) \xleftarrow{\varphi_{2}}
	(f,g)
	\xrightarrow{\varphi_{1}}  (\overline{f},-g) 
	\xrightarrow{ \varphi_{2}^{-1} }(\overline{f},\overline{g}).
\end{equation*}

The same is true for the mappings $\psi_{i}$ from Lemma~\ref{lem:dyn}. 
Looking at the action of $(\psi_{1})_{*}$ on the symmetry roots \eqref{eq:a-roots-hw-fin}, we get
\begin{equation*}
(\psi_{1})_{*}(\alpha_{0}) = \delta - \alpha_{0},\ 
(\psi_{1})_{*}(\alpha_{1}) = - \alpha_{1},\ 
(\psi_{1})_{*}(\alpha_{2}) = \delta - \alpha_{2},\ 
(\psi_{1})_{*}(\alpha_{3}) = -\delta - \alpha_{3},\ 
(\psi_{1})_{*}(\alpha_{4}) = - \alpha_{4}. 
\end{equation*}			 
This immediately gives us the decomposition of $\psi_{1}$ and the action on the root variables:
\begin{equation}\label{eq:appl-map1}
	\psi_{1} = \sigma_{3} \sigma_{2} w_{0} w_{1} w_{2} w_{4} w_{1} w_{2} w_{3} w_{1};\quad 
	\overline{a}_{0} = 1 - a_{0},\  \overline{a}_{1} = -a_{1},\ \overline{a}_{2} = 1 -a_{2},\ \overline{a}_{3} = -1-a_{3},\ 
	\overline{a}_{4} = -a_{4}.
\end{equation}
Using \eqref{eq:par-match-app} we get the evolution of parameters (that is non-physical, since weight parameters should not change)
\begin{equation*}
	\overline{\alpha} = 2 - \alpha,\quad \overline{\beta} = 2- \beta,\quad \overline{\gamma} = 2 - \gamma,\quad \overline{n} = -(n+1),
\end{equation*}
which, in turn, gives us the forward evolution of the base points, which is \emph{different} from the evolution given in Lemma~\ref{lem:dyn}(a):
\begin{alignat*}{2}
\qquad &\overline{q}_{1}(1,(\alpha-1)(\beta-1)+(n+1)),&\qquad  &\overline{q}_{2}(2-\alpha,(n+1)(2-\alpha)),\hskip1.5in\phantom{a} \\
&\overline{q}_{3}(2-\beta,(n+1)(2-\beta)), &\qquad  &\overline{q}_{4}(2-\gamma,(\alpha - \gamma)(\beta - \gamma)+(n+1)(2-\gamma)),\\
&\overline{q}_{5}(\overline{X}=0,\overline{Y}=0)\leftarrow \mathrlap{\overline{q}_{6}\left(\overline{u}_{5} = \overline{X} = 0, \overline{v}_{5}
= \frac{\overline{Y}}{\overline{X}} = 0\right) \leftarrow
\overline{q}_{7}\left(\overline{u}_{6} = \overline{u}_{5} = 0, \overline{v}_{6} = \frac{\overline{v}_{5}}{\overline{u}_{5}}= \frac{c}{c-1}\right)} \\
&\qquad \leftarrow  \mathrlap{\overline{q}_{8}\left(\overline{u}_{7} = \overline{u}_{6} = 0,
\overline{v}_{7} = \frac{(c - 1)\overline{v}_{6} - c}{(c-1)\overline{u}_{6}} =
\frac{c\Big(c(3 - \alpha - \beta - n) - 3 - n + \gamma\Big)}{(c-1)^{2}}\right)}.
\end{alignat*}
Then the correct choice of the gauge to ensure that the mapping $\psi_{1}$ comes from the birational representation of the symmetry group is 
given by $\psi_{1}(x,y) = (2 - x, \overline{y} + 2(n+1))$, where $\overline{y}$ is given by \eqref{eq:fwd}.
This can either be deduced from the evolution of the base points \eqref{eq:appl-map1} or obtained directly from the birational representation of $\psi_{1}$ 
using the change of variables (\ref{eq:xy2fg}--\ref{eq:fg2xy}).

Similarly, for $(\psi_{2})_{*}$ the action on the symmetry roots is 
\begin{equation*}
(\psi_{1})_{*}(\alpha_{0}) = - \alpha_{0},\ 
(\psi_{1})_{*}(\alpha_{1}) = - \alpha_{1},\ 
(\psi_{1})_{*}(\alpha_{2}) = 2\delta - \alpha_{2},\ 
(\psi_{1})_{*}(\alpha_{3}) = -2\delta - \alpha_{3},\ 
(\psi_{1})_{*}(\alpha_{4}) = - \alpha_{4}. 
\end{equation*}			 
The resulting decomposition of $\psi_{2}$ and the action on the root variables is
\begin{equation}\label{eq:appl-map2}
	\psi_{2} = w_{4} w_{3} w_{2} w_{1} w_{0} w_{2} w_{4} w_{2} w_{1} w_{0} w_{2} w_{3} w_{1} w_{0};\quad 
	\underline{a}_{0} = - a_{0},\  \underline{a}_{1} = -a_{1},\ \underline{a}_{2} = 2 -a_{2},\ \underline{a}_{3} = -2-a_{3},\ 
	\underline{a}_{4} = -a_{4}.
\end{equation}
Using \eqref{eq:par-match-app} we get the evolution of parameters (that is again non-physical)
\begin{equation*}
	\underline{\alpha} = 2 - \alpha,\quad \underline{\beta} = 2- \beta,\quad \underline{\gamma} = 2 - \gamma,\quad \underline{n} = -n,
\end{equation*}
which, in turn, gives us the backward evolution of the base points, which is again \emph{different} from the evolution given in Lemma~\ref{lem:dyn}(b):
\begin{alignat*}{2}
\qquad &\underline{q}_{1}(1,(\alpha-1)(\beta-1)+n),&\qquad  &\underline{q}_{2}(2-\alpha,n(2-\alpha)),\hskip2.2in\phantom{a} \\
&\underline{q}_{3}(2-\beta,n(2-\beta)), &\qquad  &\underline{q}_{4}(2-\gamma,(\alpha - \gamma)(\beta - \gamma)+n(2-\gamma)),\\
&\underline{q}_{5}(\underline{X}=0,\underline{Y}=0)\leftarrow \mathrlap{\underline{q}_{6}\left(\underline{u}_{5} = \underline{X} = 0, \underline{v}_{5}
= \frac{\underline{Y}}{\underline{X}} = 0\right) \leftarrow
\underline{q}_{7}\left(\underline{u}_{6} = \underline{u}_{5} = 0, \underline{v}_{6} = \frac{\underline{v}_{5}}{\underline{u}_{5}}= \frac{c}{c-1}\right)} \\
&\qquad \leftarrow  \mathrlap{\underline{q}_{8}\left(\underline{u}_{7} = \underline{u}_{6} = 0,
\underline{v}_{7} = \frac{(c - 1)\underline{v}_{6} - c}{(c-1)\underline{u}_{6}} =
\frac{c\Big(c(4 - \alpha - \beta - n) - 2 - n + \gamma\Big)}{(c-1)^{2}}\right)}.
\end{alignat*}
Hence the correct choice of the gauge to ensure that the mapping $\psi_{2}$ comes from the birational representation of the symmetry group is 
given by $\psi_{2}(x,y) = (2 - \underline{x}, y + 2n)$, where $\underline{x}$ is given by \eqref{eq:back}.

Note that these gauge transformations cancel each other when we consider the full step. Indeed, let us define 
$\psi_{1}^{(n)}(x_{n},y_{n}) = (2 - x_{n}, y_{n+1} + 2(n+1))$ and $\psi_{2}^{(n)}(x_{n},y_{n}) = (2 - x_{n-1}, y_{n} + n)$. Then
\begin{equation*}
	\psi^{(n)}(x_{n},y_{n}) = \left(\psi_{2}^{(n+1)}\right)^{-1}\circ \psi_{1}^{(n)}(x_{n},y_{n}) = 
	\left(\psi_{2}^{(n+1)}\right)^{-1}(2 - x_{n}, y_{n+1} + 2(n+1)) = (x_{n+1},y_{n+1}).
\end{equation*}

\begin{remark} Given that both mappings $\psi$ and $\varphi$ decompose in a natural way, and that both mappings are equivalent, 
	it is reasonable to ask whether these decompositions are equivalent individually. This, unfortunately, is not the case. 
	Indeed, as can be seen from the above decompositions, $\psi_{1} = w_{3}\circ \varphi_{1}\circ w_{3}^{-1}$, but 
	$\psi_{2} = \psi_{1}\circ \varphi^{-1}$ and this can not really be simplified much further.	
\end{remark}



\section{Conclusions} 
\label{sec:conclusions}
In this paper we illustrated a systematic procedure on determining whether a second-order non-linear non-autonomous recurrence relation 
is a discrete Painlev\'e equation, and if so, how to reduce it to the standard form. We considered in detail an example from the theory of 
discrete orthogonal polynomials, where we showed that the evolution of recurrence coefficients for these polynomials is expressed in
terms of a particular solution of the standard difference Painlev\'e-V equation. However, it is clear that this approach can be easily
adapted to a wide range of other applied problems where discrete Painlev\'e equations appear.

\appendix

\section{Standard example of \lowercase{d}-$\dPain{D_{4}^{(1)}/D_{4}^{(1)}}$} 
\label{app:standard_example_of_dp-D4(1)}
In this section we review the standard example of discrete Painlev\'e equation of type 
d-$\dPain{D_{4}^{(1)}/D_{4}^{(1)}}$, also known as the d-$\Pain{V}$ equation. Note that this equation describes B\"acklund 
transformations of the usual differential $\Pain{VI}$ equation. We follow the standard reference \cite{KajNouYam:2017:GAOPE} for the 
choice of root bases and the form of the equation.


\subsection{The Point Configuration} 
\label{sub:the_point_configuration}
We start with the root basis of the surface sub-lattice that is given by the classes $\delta_{i} = [d_{i}]$ of the irreducible 
components of the anti-canonical divisor 
\begin{equation*}
	\delta = - \mathcal{K}_{\mathcal{X}} = 2 \mathcal{H}_{f} + 2 \mathcal{H}_{g} - \mathcal{E}_{1} 
	 - \mathcal{E}_{2} - \mathcal{E}_{3} - \mathcal{E}_{4} - \mathcal{E}_{5} - \mathcal{E}_{6} - \mathcal{E}_{7} - \mathcal{E}_{8}
	 = \delta_{0} + \delta_{1} + 2 \delta_{2} + \delta_{3} + \delta_{4}.
\end{equation*}
The intersection configuration of those roots is given by the Dynkin diagram of type $D_{4}^{(1)}$, as shown on Figure~\ref{fig:d-roots-d41}.
\begin{figure}[ht]
\begin{equation}\label{eq:d-roots-d41}			
	\raisebox{-32.1pt}{\begin{tikzpicture}[
			elt/.style={circle,draw=black!100,thick, inner sep=0pt,minimum size=2mm}]
		\path 	(-1,1) 	node 	(d0) [elt, label={[xshift=-10pt, yshift = -10 pt] $\delta_{0}$} ] {}
		        (-1,-1) node 	(d1) [elt, label={[xshift=-10pt, yshift = -10 pt] $\delta_{1}$} ] {}
		        ( 0,0) 	node  	(d2) [elt, label={[xshift=10pt, yshift = -10 pt] $\delta_{2}$} ] {}
		        ( 1,1) 	node  	(d3) [elt, label={[xshift=10pt, yshift = -10 pt] $\delta_{3}$} ] {}
		        ( 1,-1) node 	(d4) [elt, label={[xshift=10pt, yshift = -10 pt] $\delta_{4}$} ] {};
		\draw [black,line width=1pt ] (d0) -- (d2) -- (d1)  (d3) -- (d2) -- (d4); 
	\end{tikzpicture}} \qquad
			\begin{alignedat}{2}			
			\delta_{0} &= \mathcal{E}_{3} - \mathcal{E}_{4}, &\qquad  \delta_{3} &= \mathcal{E}_{7} - \mathcal{E}_{8},\\
			\delta_{1} &= \mathcal{H}_{f} - \mathcal{E}_{1} - \mathcal{E}_{2}, &\qquad  \delta_{4} &= \mathcal{H}_{f} - \mathcal{E}_{5} - \mathcal{E}_{6}.\\
			\delta_{2} &= \mathcal{H}_{g} - \mathcal{E}_{3} - \mathcal{E}_{7},\\[5pt]
			\delta & = \mathrlap{\delta_{0} + \delta_{1} + 2 \delta_{2} + \delta_{3} + \delta_{4}.} 
			\end{alignedat}
\end{equation}
	\caption{The Surface Root Basis for the standard d-$\dPain{D_{4}^{(1)}}$ point configuration}
	\label{fig:d-roots-d41}	
\end{figure}

Using the action of the $\mathbf{PGL}_{2}(\mathbb{C})\times \mathbf{PGL}_{2}(\mathbb{C})$ gauge group 
(i.e., the action of a M\"obius group on each of the factors of $\mathbb{P}^{1} \times \mathbb{P}^{1}$),
we can, without loss of generality, put $d_{i}$, with $\delta_{i} = [d_{i}]$ to be
\begin{equation*}
	d_{1} = V(F) = \{f = \infty\},\qquad d_{2} = V(G) = \{g = \infty\},\qquad d_{4} = V(f) = \{f = 0\},
\end{equation*} 
which then reduces the gauge group action to that of a three-parameter subgroup, $(f,g)\mapsto (\lambda f, \mu g + \nu)$. 
The corresponding point configuration and the Sakai surface are shown on Figure~\ref{fig:surface-d4}.
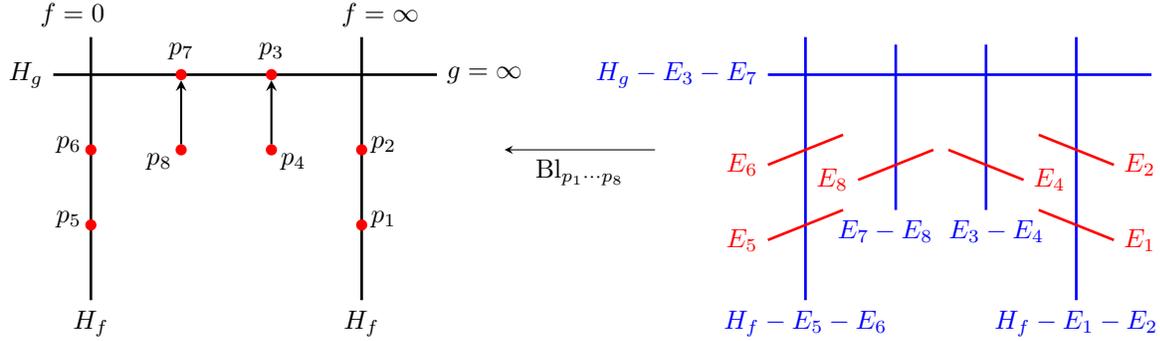
\begin{figure}[ht]
	\begin{tikzpicture}[>=stealth,basept/.style={circle, draw=red!100, fill=red!100, thick, inner sep=0pt,minimum size=1.2mm}]
		\begin{scope}[xshift = -1cm]
			\draw [black, line width = 1pt] 	(4.6,2.5) 	-- (-0.5,2.5)	node [left] {$H_{g}$} node[pos=0, right] {$g=\infty$};
			\draw [black, line width = 1pt] 	(0,3) -- (0,-0.5)			node [below] {$H_{f}$} node[pos=0, above, xshift=-7pt] {$f=0$};
			\draw [black, line width = 1pt] 	(3.6,3) -- (3.6,-0.5)		node [below] {$H_{f}$} node[pos=0, above, xshift=7pt] {$f=\infty$};
		
			\node (p1) at (3.6,0.5) [basept,label={[right] $p_{1}$}] {};
			\node (p2) at (3.6,1.5) [basept,label={[right] $p_{2}$}] {};
			\node (p3) at (2.4,2.5) [basept,label={[above] $p_{3}$}] {};
			\node (p4) at (2.4,1.5) [basept,label={[below right] $p_{4}$}] {};
			\node (p5) at (0,0.5) [basept,label={[left] $p_{5}$}] {};
			\node (p6) at (0,1.5) [basept,label={[left] $p_{6}$}] {};
			\node (p7) at (1.2,2.5) [basept,label={[above] $p_{7}$}] {};
			\node (p8) at (1.2,1.5) [basept,label={[below left] $p_{8}$}] {};
			\draw [line width = 0.8pt, ->] (p4) -- (p3);
			\draw [line width = 0.8pt, ->] (p8) -- (p7);
		\end{scope}
	
		\draw [->] (6.5,1.5)--(4.5,1.5) node[pos=0.5, below] {$\operatorname{Bl}_{p_{1}\cdots p_{8}}$};
	
		\begin{scope}[xshift = 8.5cm]
			\draw [blue, line width = 1pt] 	(4.6,2.5) 	-- (-0.5,2.5)	node [left] {$H_{g}-E_{3} - E_{7}$} node[pos=0, right] {};
			\draw [blue, line width = 1pt] 	(0,3) -- (0,-0.5)			node [below] {$H_{f}-E_{5}-E_{6}$} node[pos=0, above, xshift=-7pt] {};
			\draw [blue, line width = 1pt] 	(3.6,3) -- (3.6,-0.5)			node [below] {$H_{f}-E_{1}-E_{2}$} node[pos=0, above, xshift=7pt] {};
		
			\draw [red,line width = 1pt] (3.1,0.7) -- (4.1,0.3) node [right] {$E_{1}$};
			\draw [red,line width = 1pt] (3.1,1.7) -- (4.1,1.3) node [right] {$E_{2}$};
			\draw [red,line width = 1pt] (-0.5,0.3) -- (0.5,0.7) node [pos=0,left] {$E_{5}$};
			\draw [red,line width = 1pt] (-0.5,1.3) -- (0.5,1.7) node [pos=0,left] {$E_{6}$};
			\draw [blue,line width = 1pt] (2.4,0.7) -- (2.4,2.9) node [pos=0,below] {$\phantom{E}E_{3} - E_{4}$};
			\draw [blue,line width = 1pt] (1.2,0.7) -- (1.2,2.9) node [pos=0,below] {$E_{7} - E_{8}\phantom{E}$};
			\draw [red,line width = 1pt] (1.9,1.5) -- (2.9,1.1) node [right] {$E_{4}$};
			\draw [red,line width = 1pt] (0.7,1.1) -- (1.7,1.5) node [pos=0,left] {$E_{8}$};
		\end{scope}
	\end{tikzpicture}
	\caption{The model Sakai Surface for the d-$P\left(D_{4}^{(1)}/D_{4}^{(1)}\right)$ example}
	\label{fig:surface-d4}
\end{figure}

This point configuration can be parameterized by eight parameters $b_{1},\ldots, b_{8}$ as follows:
\begin{alignat*}{4}
	&p_{1}(\infty,b_{1}), 	&\quad &p_{2}(\infty,b_{2}), 	&\quad &p_{3}(b_{3},\infty)	&&\leftarrow p_{4}(b_{3},\infty;g(f-b_{3})=b_{4}),\\
	&p_{5}(0,b_{5}),		&\quad &p_{6}(0,b_{6}),			&\quad &p_{7}(b_{7},\infty)	&&\leftarrow p_{8}(b_{7},\infty;g(f-b_{7})=b_{8}).
\end{alignat*}
The three-parameter gauge group above acts on these configurations via
	\begin{equation}\label{eq:gauge-d4}
			\left(\begin{matrix}
				b_{1} & b_{2} & b_{3} & b_{4}\\
				b_{5} & b_{6} & b_{7} & b_{8}
			\end{matrix};  \begin{matrix}
				f \\ g
			\end{matrix}\right) \sim \left(\begin{matrix}
				\mu b_{1} + \nu & \mu b_{2} + \nu & \lambda b_{3} & \lambda \mu b_{4} \\
				\mu b_{5} + \nu & \mu b_{6} + \nu & \lambda b_{7} & \lambda \mu b_{8}
			\end{matrix}; \begin{matrix}
				\lambda f \\ \mu g +\nu
			\end{matrix}\right),\,\lambda,\mu\neq0,
	\end{equation}
and so the true number of parameters is five. The correct gauge-invariant parameterization is given by 
the \emph{root variables} that we now describe.

\subsection{The Period Map and the Root Variables} 
\label{sub:the_period_map_and_the_root_variables}
To define the root variables we begin by choosing a root basis in the 
\emph{symmetry sub-lattice} $Q = \Pi(R^{\perp}) \triangleleft \operatorname{Pic}(\mathcal{X})$ 
and defining the symplectic form $\omega$
whose polar divisor $-K_{\mathcal{X}}$ is the configuration of $-2$-curves shown on Figure~\ref{fig:surface-d4}. 
For the symmetry root basis we take the same basis as in \cite{KajNouYam:2017:GAOPE}, see Figure~\ref{fig:a-roots-d4}.

\begin{figure}[ht]
\begin{equation}\label{eq:a-roots-d4}			
	\raisebox{-32.1pt}{\begin{tikzpicture}[
			elt/.style={circle,draw=black!100,thick, inner sep=0pt,minimum size=2mm}]
		\path 	(-1,1) 	node 	(a0) [elt, label={[xshift=-10pt, yshift = -10 pt] $\alpha_{0}$} ] {}
		        (-1,-1) node 	(a1) [elt, label={[xshift=-10pt, yshift = -10 pt] $\alpha_{1}$} ] {}
		        ( 0,0) 	node  	(a2) [elt, label={[xshift=10pt, yshift = -10 pt] $\alpha_{2}$} ] {}
		        ( 1,1) 	node  	(a3) [elt, label={[xshift=10pt, yshift = -10 pt] $\alpha_{3}$} ] {}
		        ( 1,-1) node 	(a4) [elt, label={[xshift=10pt, yshift = -10 pt] $\alpha_{4}$} ] {};
		\draw [black,line width=1pt ] (a0) -- (a2) -- (a1)  (a3) -- (a2) -- (a4); 
	\end{tikzpicture}} \qquad
			\begin{alignedat}{2}
			\alpha_{0} &= \mathcal{H}_{f} - \mathcal{E}_{3} - \mathcal{E}_{4}, &\qquad  \alpha_{3} &= \mathcal{H}_{f} - \mathcal{E}_{7} - \mathcal{E}_{8},\\
			\alpha_{1} &= \mathcal{E}_{1} - \mathcal{E}_{2}, &\qquad  \alpha_{4} &= \mathcal{E}_{5} - \mathcal{E}_{6}.\\
			\alpha_{2} &= \mathcal{H}_{g} - \mathcal{E}_{1} - \mathcal{E}_{5},\\[5pt]
			\delta & = \mathrlap{\alpha_{0} + \alpha_{1} + 2 \alpha_{2} + \alpha_{3} + \alpha_{4}.} 
			\end{alignedat}
\end{equation}
	\caption{The Symmetry Root Basis for the standard d-$P\left(D_{4}^{(1)}\right)$ case}
	\label{fig:a-roots-d4}	
\end{figure}

A symplectic form $\omega\in -\mathcal{K}_{\mathcal{X}}$ such that $[\omega] = \delta_{0} + \delta_{1} + 2 \delta_{2} + \delta_{3} + \delta_{4}$ can be given in local 
coordinate charts as
\begin{equation}\label{eq:symp-form}
	\omega = k \frac{df\wedge dg}{f} = - k \frac{dF\wedge dg}{F} = - k 	\frac{df\wedge dG}{f G^{2}} = k \frac{dF \wedge dG}{F G^{2}} 
	= - k \frac{dU_{3}\wedge dV_{3}}{(b_{3} + U_{3} V_{3}) V_{3}} = - k \frac{dU_{7}\wedge d V_{7}}{(b_{7} + U_{7} V_{7}) V_{7}},
\end{equation}
where, as usual, $F = 1/f$, $G = 1/g$ are the coordinates centered at infinity, the blowup coordinates $(U_{i}, V_{i})$ at the points $p_{i}$, $i=3,7$,
are given by $f = b_{i} + U_{i}V_{i}$ and $G = V_{i}$, and $k$ is some non-zero proportionality constant that we normalize later. 
Then we have the following Lemma.

\begin{lemma}\label{lem-period_map-d4} 
	\qquad
	\begin{enumerate}[(i)]
		\item The residue of the symplectic form $\omega$ along the irreducible components of the polar divisor is given by
		\begin{equation}
			\operatorname{res}_{d_{0}} \omega = k \frac{d U_{3}}{b_{3}},\quad  
			\operatorname{res}_{d_{1}} \omega = - k dg,\quad  
			\operatorname{res}_{d_{2}} \omega = 0,\quad
			\operatorname{res}_{d_{3}} \omega = k \frac{d U_{7}}{b_{7}},\quad 
			\operatorname{res}_{d_{4}} \omega =  k dg.
		\end{equation}
		\item The root variables $a_{i}$ are given by 
		\begin{equation}\label{eq:d4-root_vars}
			a_{0} = - k \frac{b_{4}}{b_{3}},\quad a_{1} = k(b_{2} - b_{1}),\quad a_{2} = k (b_{1} - b_{5}),\quad a_{3} = - k \frac{b_{8}}{b_{7}}, \quad 
			a_{4} = k(b_{5} - b_{6}).
		\end{equation}
		It is convenient to take $k=-1$. We can then use the gauge action \eqref{eq:gauge-d4} to normalize $b_{5} = 0$, $b_{7} = 1$, and 
		$\chi(\delta) = a_{0} + a_{1} + 2a_{2} + a_{3} + a_{4} = 1$. In view of the relation of this example to $\Pain{VI}$, it is also convenient to denote 
		$b_{3}$ by $t$. Then we get the following parameterization of the point configuration in terms of root variables:
		\begin{equation}\label{eq:d4-root_var_par}
			b_{1} = - a_{2},\quad b_{2} = - a_{1} - a_{2},\quad b_{3} = t,\quad b_{4} = t a_{0},\quad b_{5} = 0,\quad b_{6} = a_{4},\quad b_{7} = 1,\quad b_{8} = a_{3}.
		\end{equation}  
		Note that if we use the notation $p_{34}\left(t(1 + \varepsilon a_{0}),1/\varepsilon\right)$, $p_{78}(1 + \varepsilon a_{3}, 1/\varepsilon)$ for the 
		degeneration cascades, we get exactly the parameterization of the point configuration in section 8.2.17 of \cite{KajNouYam:2017:GAOPE}.
	\end{enumerate}
\end{lemma}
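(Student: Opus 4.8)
The plan is to prove the two parts of the lemma in the order they are stated: part (i) computes the Poincar\'e residues of $\omega$ along the five irreducible components $d_0,\dots,d_4$ of $-\mathcal{K}_{\mathcal{X}}$, and part (ii) assembles these residue $1$-forms together with the blow-up data into the root variables $a_i=\chi(\alpha_i)$.

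For part (i) I would work in the coordinate chart adapted to each component --- exactly the charts in which $\omega$ has already been written out in \eqref{eq:symp-form}. Along $d_1=\{f=\infty\}$ the expression $\omega=-k\,dF\wedge dg/F$ is manifestly $\tfrac{dF}{F}\wedge(-k\,dg)$, so $\operatorname{res}_{d_1}\omega=-k\,dg$; along $d_4=\{f=0\}$ the expression $\omega=k\,df\wedge dg/f$ gives $\operatorname{res}_{d_4}\omega=k\,dg$ the same way. For $d_0=\mathcal{E}_3-\mathcal{E}_4$ and $d_3=\mathcal{E}_7-\mathcal{E}_8$ I would use the blow-up charts $(U_i,V_i)$ at $p_i$ ($i=3,7$), where $\omega=-k\,dU_i\wedge dV_i/((b_i+U_iV_i)V_i)$ has a simple pole along the exceptional line $V_i=0$, giving residue $k\,dU_i/(b_i+U_i\cdot 0)=k\,dU_i/b_i$; passing further up the cascade only deletes a point from this $\mathbb{P}^1$ and leaves the residue form unchanged. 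The one component needing a remark is $d_2=\mathcal{H}_g-\mathcal{E}_3-\mathcal{E}_7$, which appears with multiplicity $2$ in $-\mathcal{K}_{\mathcal{X}}$: near a generic point $\omega=-k\,df\wedge dG/(fG^2)$ is a pure second-order pole with no logarithmic part, so $\operatorname{res}_{d_2}\omega=0$, and I would also inspect the two charts near $p_3$, $p_7$ to confirm that no logarithmic term is created there.

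For part (ii) I would use the standard recipe for the period map: write each symmetry root $\alpha_i$ from \eqref{eq:a-roots-d4} as a difference $[C]-[C']$ of effective classes whose intersections with a single component $d_j$ are the relevant data, and set $\chi(\alpha_i)$ equal to the integral of $\operatorname{res}_{d_j}\omega$ along $d_j$ from $C'\cap d_j$ to $C\cap d_j$. Thus $\alpha_1=\mathcal{E}_1-\mathcal{E}_2$, with $p_1,p_2\in d_1$ at $g=b_1,b_2$, gives $a_1=k(b_2-b_1)$; $\alpha_4=\mathcal{E}_5-\mathcal{E}_6$ gives $a_4=k(b_5-b_6)$; the non-effective root $\alpha_2=\mathcal{H}_g-\mathcal{E}_1-\mathcal{E}_5$ is rewritten as $[\{g=b_5\}]-\mathcal{E}_1$ (the line $\{g=b_5\}$ passes through $p_5$ only, so its proper transform has class $\mathcal{H}_g-\mathcal{E}_5$), and since $\{g=b_5\}$ and $p_1$ both meet $d_1$ (at $g=b_5$ and $g=b_1$) one gets $a_2=k(b_1-b_5)$; for $\alpha_0=\mathcal{H}_f-\mathcal{E}_3-\mathcal{E}_4=[\{f=b_3\}]-\mathcal{E}_4$ the line $\{f=b_3\}$ meets $d_0$ at $U_3=0$ while $\mathcal{E}_4$ meets it at $U_3=b_4$, so $a_0=\int_{b_4}^{0}k\,dU_3/b_3=-k\,b_4/b_3$; and $\alpha_3$ is identical with $(U_7,b_7,b_8)$ in place of $(U_3,b_3,b_4)$. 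This produces the formulas \eqref{eq:d4-root_vars}. Inverting these relations under the choice $k=-1$, using the gauge action \eqref{eq:gauge-d4} to normalize $b_5=0$, $b_7=1$ and $\chi(\delta)=a_0+a_1+2a_2+a_3+a_4=1$, and renaming $b_3=t$, then yields the parameterization \eqref{eq:d4-root_var_par}.

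Everything here is a routine computation in charts once the bookkeeping is set up; the calculation follows \cite{KajNouYam:2017:GAOPE}. The two points requiring genuine care are (a) fixing the orientation conventions in the path integrals of the residue $1$-forms consistently, so that all the signs in \eqref{eq:d4-root_vars} come out uniformly --- I would pin these down by cross-checking against the expected root-variable evolution and against the parameterization in section 8.2.17 of \cite{KajNouYam:2017:GAOPE} quoted in the statement; and (b) interpreting ``residue'' for the multiplicity-two component $d_2$ as the (vanishing) logarithmic part of $\omega$ rather than attempting a naive Poincar\'e residue of a double pole. I expect (a) to be the main place where a slip is easy to make.
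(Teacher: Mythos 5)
Your proposal is correct and takes essentially the same route as the paper's proof: part (i) by direct residue computation in the charts in which $\omega$ is written in \eqref{eq:symp-form}, and part (ii) by the standard period-map recipe of writing each $\alpha_{i}$ as a difference of effective divisors meeting a common component $d_{k}$ of $-\mathcal{K}_{\mathcal{X}}$ and integrating $\operatorname{res}_{d_{k}}\omega$ between the two intersection points. Your individual evaluations (e.g.\ $a_{0}=\int_{b_{4}}^{0}k\,dU_{3}/b_{3}=-k\,b_{4}/b_{3}$, and the treatment of the multiplicity-two component $d_{2}$ and of the degeneration cascades) match the paper, which carries out only $d_{0}$ and $a_{0}$ explicitly and declares the remaining computations similar.
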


\begin{proof}
	Part (a) is a standard computation in local charts. For example, with $d_{0} = E_{3} - E_{4} = V(V_{3})$ in the chart $(U_{3},V_{3})$, we get 
	\begin{equation*}
		\operatorname{res}_{d_{0}} \omega = \operatorname{res}_{V_{3} = 0} \left(- k \frac{dU_{3}\wedge dV_{3}}{(b_{3} + U_{3} V_{3}) V_{3}} \right) 
		= k \frac{d U_{3}}{b_{3}}.
	\end{equation*}
	Other computations in part (a) are similar.
	
	For part (b), first recall that the \emph{Period Map} $\chi: Q\to \mathbb{C}$ is defined on the simple roots $\alpha_{i}$, where 
	$a_{i}:= \chi(\alpha_{i})$ are called the \emph{root variables}, and then extended to the full symmetry sub-lattice
	by linearity. To compute the root variables $a_{i}$,  we proceed as follows, see \cite{Sak:2001:RSAWARSGPE} for details.
	\begin{itemize}
		\item First, we represent $\alpha_{i}$ as a difference of two effective divisors, 
		$\alpha_{i} = [C_{i}^{1}] - [C_{1}^{0}]$;
		\item second, note that there exists a \emph{unique} component $d_{k}$ of $-K_{\mathcal{X}}$ such that 
		$d_{k}\bullet C_{i}^{1} = d_{k}\bullet C_{i}^{0} = 1$, put $P_{i} =  d_{k}\cap C_{i}^{0}$ and 
		$Q_{i} =  d_{k}\cap C_{i}^{1}$:
		\begin{center}
			\begin{tikzpicture}[>=stealth, 
					elt/.style={circle,draw=black!100, fill=black!100, thick, inner sep=0pt,minimum size=1.5mm}]
					\draw[black, very thick] (0,0) -- (4,0);
					\draw[blue, thick] (1,0) -- (1,0.5);
					\draw[blue,thick] (1,0) .. controls (1,-0.3) and (1,-0.6) .. (0.6,-1);
					\draw[blue, thick] (3,0) -- (3,0.5);
					\draw[blue,thick] (3,0) .. controls (3,-0.3) and (3,-0.6) .. (3.4,-1);
					\node[style=elt] (P) at (1,0) {}; 		\node [above left] at (P) {$P_{i}$};
					\node[style=elt] (Q) at (3,0) {};		\node [above right] at (Q) {$Q_{i}$};
					\node at (-0.6,0) {$d_{k}$};
					\node at (0.4,-1) {$C_{i}^{0}$}; \node at (3.7,-1) {$C_{i}^{1}$};
					\end{tikzpicture}
		\end{center}
		\item then 
		\begin{equation*}
			\chi(\alpha_{i}) = \chi\left([C_{i}^{1}] - [C_{i}^{0}]\right) = 
			\int_{P_{i}}^{Q_{i}} \frac{ 1 }{ 2 \pi \mathfrak{i} }\oint_{d_{k}} \omega
			= \int_{P_{i}}^{Q_{i}} \operatorname{res}_{d_{k}} \omega,
		\end{equation*}
		where $\omega$ is the symplectic form defined by \eqref{eq:symp-form}.
	\end{itemize}
	We illustrate this procedure by computing the root variable $a_{0}$, the other computations are similar (see also \cite{DzhTak:2018:OSAOSGTODPE} for 
	more examples of such computations). First represent
	$\alpha_{0} = \mathcal{H}_{f} - \mathcal{E}_{3} - \mathcal{E}_{4} = [H_{f} - E_{3}] - [E_{4}]$. These two curves 
	intersect with the $d_{0}$ component of $\operatorname{div}(\omega)$, and so we get
	\begin{equation*}
			\raisebox{-1in}{\begin{tikzpicture}[>=stealth, 
					elt/.style={circle,draw=black!100, fill=black!100, thick, inner sep=0pt,minimum size=1.5mm}]

			\draw [blue, line width = 1pt] 	(0,2.5) 	-- (3.5,2.5) node[right] {$H_{g}-E_{3} - E_{7}$};
			\draw [blue,line width = 1pt] (1.4,-0.7) -- (1.4,3.2) node [above] {$d_{0} = E_{3} - E_{4}$};
			\draw [magenta, line width=1.6pt, ->] (1.4,2.5) -> (2,2.5) node [above] {$\qquad\qquad  u_{3} = f - b_{3}$};
			\draw [magenta, line width=1.6pt, ->] (1.4,2.5) -> (1.4,1.9) node [right] {$v_{3}$};
			\draw [red, line width = 1pt] 		(0,-1) .. controls (0,-0.5) and (0.5,0) .. (1,0) -- (3.5,0) node[right] {$C_{0}^{1} = H_{f}-E_{3}$};
			\draw [red,line width = 1pt] (0.9,1.3) -- (1.9,1.7) node [right] {$C_{0}^{0} = E_{4}$};
			\draw [magenta, line width=1.6pt, ->] (1.4,0) -> (2,0) node [above] {$\qquad\quad V_{3} = G$};
			\draw [magenta, line width=1.6pt, ->] (1.4,0) -> (1.4,0.6) node [right] {$U_{3}$};
			
			\node[style=elt] (P) at (1.4,1.5) {}; 	\node [below right] at (P) {$P_{0}(U_{3} = b_{4}, V_{3} = 0)$};
			\node[style=elt] (Q) at (1.4,0) {}; 	\node [below right] at (Q) {$Q_{0}(U_{3} = 0, V_{3} = 0)$};
			\end{tikzpicture}}\qquad 
			\begin{aligned}
				a_{0} &= \chi(\alpha_{0}) = \int_{P_{0}}^{Q_{0}} \operatorname{res}_{d_{0}} \omega  
				= k \int_{b_{4}}^{0} \frac{dU_{3}}{b_{3}} = - k \frac{b_{4}}{b_{3}}.
			\end{aligned}			
	\end{equation*}	
\end{proof}

\subsection{The Extended Affine Weyl Symmetry Group} 
\label{sub:the_extended_affine_weyl_symmetry_group}

For completeness, we also include here the description of the birational representation of the extended
affine Weyl symmetry group $\widetilde{W}\left(D_{4}^{(1)}\right) = \operatorname{Aut}\left(D_{4}^{(1)}\right) \ltimes W\left(D_{4}^{(1)}\right)$,
which is a \emph{semi-direct product} of the usual affine Weyl group $W\left(D_{4}^{(1)}\right)$ and the group of Dynkin diagram automorphisms
$\operatorname{Aut}\left(D_{4}^{(1)}\right)$.

The affine Weyl group $W\left(D_{4}^{(1)}\right)$ is defined in terms of generators $w_{i} = w_{\alpha_{i}}$ and relations that 
are encoded by the affine Dynkin diagram $D_{4}^{(1)}$,
\begin{equation*}
	W\left(D_{4}^{(1)}\right) = W\left(\raisebox{-20pt}{\begin{tikzpicture}[
			elt/.style={circle,draw=black!100,thick, inner sep=0pt,minimum size=1.5mm}]
		\path 	(-0.5,0.5) 	node 	(a0) [elt, label={[xshift=-10pt, yshift = -10 pt] $\alpha_{0}$} ] {}
		        (-0.5,-0.5) node 	(a1) [elt, label={[xshift=-10pt, yshift = -10 pt] $\alpha_{1}$} ] {}
		        ( 0,0) 	node  	(a2) [elt, label={[xshift=10pt, yshift = -10 pt] $\alpha_{2}$} ] {}
		        ( 0.5,0.5) 	node  	(a3) [elt, label={[xshift=10pt, yshift = -10 pt] $\alpha_{3}$} ] {}
		        ( 0.5,-0.5) node 	(a4) [elt, label={[xshift=10pt, yshift = -10 pt] $\alpha_{4}$} ] {};
		\draw [black,line width=1pt ] (a0) -- (a2) -- (a1)  (a3) -- (a2) -- (a4); 
	\end{tikzpicture}} \right)
	=
	\left\langle w_{0},\dots, w_{4}\ \left|\ 
	\begin{alignedat}{2}
    w_{i}^{2} = e,\quad  w_{i}\circ w_{j} &= w_{j}\circ w_{i}& &\text{ when 
   				\raisebox{-0.08in}{\begin{tikzpicture}[
   							elt/.style={circle,draw=black!100,thick, inner sep=0pt,minimum size=1.5mm}]
   						\path   ( 0,0) 	node  	(ai) [elt] {}
   						        ( 0.5,0) 	node  	(aj) [elt] {};
   						\draw [black] (ai)  (aj);
   							\node at ($(ai.south) + (0,-0.2)$) 	{$\alpha_{i}$};
   							\node at ($(aj.south) + (0,-0.2)$)  {$\alpha_{j}$};
   							\end{tikzpicture}}}\\
    w_{i}\circ w_{j}\circ w_{i} &= w_{j}\circ w_{i}\circ w_{j}& &\text{ when 
   				\raisebox{-0.17in}{\begin{tikzpicture}[
   							elt/.style={circle,draw=black!100,thick, inner sep=0pt,minimum size=1.5mm}]
   						\path   ( 0,0) 	node  	(ai) [elt] {}
   						        ( 0.5,0) 	node  	(aj) [elt] {};
   						\draw [black] (ai) -- (aj);
   							\node at ($(ai.south) + (0,-0.2)$) 	{$\alpha_{i}$};
   							\node at ($(aj.south) + (0,-0.2)$)  {$\alpha_{j}$};
   							\end{tikzpicture}}}
	\end{alignedat}\right.\right\rangle. 
\end{equation*} 
The natural action of this group on $\operatorname{Pic}(\mathcal{X})$ is given by reflections in the 
roots $\alpha_{i}$, 
\begin{equation}\label{eq:root-refl}
	w_{i}(\mathcal{C}) = w_{\alpha_{i}}(\mathcal{C}) = \mathcal{C} - 2 
	\frac{\mathcal{C}\bullet \alpha_{i}}{\alpha_{i}\bullet \alpha_{i}}\alpha_{i}
	= \mathcal{C} + \left(\mathcal{C}\bullet \alpha_{i}\right) \alpha_{i},\qquad \mathcal{C}\in \operatorname{Pic(\mathcal{X})},
\end{equation}
which can be extended to an action on point configurations by elementary birational maps (which lifts to 
isomorphisms $w_{i}: \mathcal{X}_{\mathbf{b}}\to \mathcal{X}_{\overline{\mathbf{b}}}$ on the family of Sakai's surfaces),
this is known as a birational representation of $W\left(D_{4}^{(1)}\right)$.

\begin{remark}
\label{rem:period-action}
Recall that for an arbitrary $w\in \widetilde{W}\left(D_{4}^{(1)}\right)$, the action of $w$ on the root variables is \emph{inverse} to its action
on the roots. This is not essential for the generating reflections, that are involutions, but it is important for composed maps.
\end{remark}

\begin{theorem}\label{thm:bir-weyl-d4}
	Reflections $w_{i}$ on $\operatorname{Pic}(\mathcal{X})$ are induced by the elementary 
	birational mappings given below and also denoted by $w_{i}$, on the family $\mathcal{X}_{\mathbf{b}}$. To ensure the group structure, 
	we require that each map preserves our normalization
	\begin{equation}\label{eq:d4-param-norm}
		\left(\begin{matrix}
 			{b}_{1} & {b}_{2} & {b}_{3} & {b}_{4}\\
 			{b}_{5} & {b}_{6} & {b}_{7} & {b}_{8}
 		\end{matrix}\right)=
		\left(\begin{matrix}
 			{b}_{1} & {b}_{2} 	& t 	& {b}_{4}\\
 			0 		& {b}_{6} 	& 1		& {b}_{8}
 		\end{matrix}\right)=
		\left(\begin{matrix}
 			-a_{2} & -a_{1} - a_{2} & t & t a_{0}\\
 			0 & a_{4} & 1 & a_{3}
 		\end{matrix}\right).
	\end{equation}
	We give the action of the mappings both on parameters $b_{i}$ related to the parameterization of point configurations, and on the root 
	variables (note that the parameter $t$ can also change when we consider the Dynkin diagram automorphisms, so it is convenient to include it 
	among the root variables). 	For the initial configuration 
	\begin{equation*}
		\left(\begin{matrix}
 			{b}_{1} & {b}_{2} 	& t 	& {b}_{4}\\
 			0 		& {b}_{6} 	& 1		& {b}_{8}
 		\end{matrix}
		\ ;\   
		\begin{matrix}
 			f \\ g
		\end{matrix}\right) = \left(\begin{matrix}
 			a_{0} & a_{1} 	& a_{2}\\
 			 a_{3} & a_{4} & t
 		\end{matrix}
		\ ;\ 
		\begin{matrix}
 			f \\ g
		\end{matrix}\right),
	\end{equation*}
	the action of $w_{i}$ is given by the following expressions:
	\begin{alignat}{2}
		w_{0}&: 
		\left(\begin{matrix}
 			{b}_{1} - \frac{b_{4}}{t} & {b}_{2} - \frac{b_{4}}{t} 	& t 	& -{b}_{4}\\
 			0 		& {b}_{6} 	& 1		& {b}_{8}
		\end{matrix}\ ;\  \begin{matrix}
		 			f\\ g - \frac{b_{4} f}{t(f-t)}
		 		\end{matrix}\right) &&= \left(\begin{matrix}
 			-a_{0} & a_{1} 	& a_{0} + a_{2}\\
 			 a_{3} & a_{4} & t
 		\end{matrix}\ ;\  
		\begin{matrix}
 			f \\ g - \frac{a_{0}f}{f-t}
		\end{matrix}\right),\\
		w_{1}&: 
		\left(\begin{matrix}
 			{b}_{2} & {b}_{1} & t 	& {b}_{4}\\
 			0 		& {b}_{6} 	& 1		& {b}_{8}
		\end{matrix}\ ;\  \begin{matrix}
		 			f\\ g 
		 		\end{matrix}\right) &&= \left(\begin{matrix}
 			a_{0} & -a_{1} 	& a_{1} + a_{2}\\
 			 a_{3} & a_{4} & t
 		\end{matrix}\ ;\  
		\begin{matrix}
 			f \\ g 
		\end{matrix}\right),\\
		w_{2}&: 
		\left(\begin{matrix}
 			-{b}_{1}  & {b}_{2} - b_{1} 	& t 	& {b}_{4} - t b_{1}\\
 			0 		& {b}_{6} - b_{1} 	& 1		& {b}_{8} - b_{1}
		\end{matrix}\ ;\  \begin{matrix}
		 			f - \frac{b_{1} f}{g}\\ g - b_{1}
		 		\end{matrix}\right) &&= \left(\begin{matrix}
 		a_{0} + a_{2} & a_{1}+ a_{2} 	& - a_{2}\\
 			 a_{2} + a_{3} & a_{2} + a_{4} & t
 		\end{matrix}\ ;\  
		\begin{matrix}
 			f  + \frac{a_{2} f}{g}\\ g + a_{2}
		\end{matrix}\right),\\
		w_{3}&: 
		\left(\begin{matrix}
 			{b}_{1} - b_{8} & {b}_{2} - b_{8} 	& t 	& {b}_{4}\\
 			0 		& {b}_{6} 	& 1		& -{b}_{8}
		\end{matrix}\ ;\  \begin{matrix}
		 			f\\ g - \frac{b_{8} f}{f-1}
		 		\end{matrix}\right) &&= \left(\begin{matrix}
 			a_{0} & a_{1} 	& a_{2} + a_{3}\\
 			 -a_{3} & a_{4} & t
 		\end{matrix}\ ;\  
		\begin{matrix}
 			f \\ g - \frac{a_{3}f}{f-1}
		\end{matrix}\right),\\
		w_{4}&: 
		\left(\begin{matrix}
 			{b}_{1} - b_{6} & {b}_{2} - b_{6} 	& t 	& b_{4}\\
 			0 		& -{b}_{6} 	& 1		& {b}_{8}
		\end{matrix}\ ;\  \begin{matrix}
		 			f\\ g - b_{6}
		 		\end{matrix}\right) &&= \left(\begin{matrix}
 			a_{0} & a_{1} 	& a_{2} + a_{4}\\
 			 a_{3} & -a_{4} & t
 		\end{matrix}\ ;\  
		\begin{matrix}
 			f \\ g - a_{4}
		\end{matrix}\right).		
	\end{alignat}	
\end{theorem}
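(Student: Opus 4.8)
The plan is to verify directly, for each of the five generators $w_{i}$, that the birational self-map of $\mathbb{P}^{1}\times\mathbb{P}^{1}$ written in the coordinates $(f,g)$ lifts, after blowing up the point configuration, to an isomorphism of Sakai surfaces $\mathcal{X}_{\mathbf{b}}\xrightarrow{\ \simeq\ }\mathcal{X}_{\overline{\mathbf{b}}}$ whose induced action on $\operatorname{Pic}(\mathcal{X})$ is the reflection \eqref{eq:root-refl} in $\alpha_{i}$, and that the transformed point configuration is the one tabulated in the statement. Throughout, the action on the root variables then follows automatically: either substitute the transformed $b_{i}$ into \eqref{eq:d4-root_vars}, or invoke Remark~\ref{rem:period-action}, which says the action on the $a_{i}$ is inverse to the action on the roots.

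For the ``reducible-root'' reflections $w_{1},w_{3},w_{4}$, whose roots have the form $\mathcal{E}_{j}-\mathcal{E}_{k}$ or $\mathcal{H}_{f}-\mathcal{E}_{j}-\mathcal{E}_{k}$, the associated birational map is elementary: $w_{1}$ is the identity on $\mathbb{P}^{1}\times\mathbb{P}^{1}$ that merely interchanges the two blow-up points $p_{1}\leftrightarrow p_{2}$ lying on the common fiber $f=\infty$, while $w_{3}$ and $w_{4}$ are Möbius shears in $g$ that slide an infinitely-near or a finite point along a fiber. In each case one reads off the image of the basis $\mathcal{H}_{f},\mathcal{H}_{g},\mathcal{E}_{1},\dots,\mathcal{E}_{8}$ in a few affine charts (using blow-up coordinates at the infinitely-near points $p_{4},p_{8}$), matches it with \eqref{eq:root-refl}, and checks the transformation of $(b_{1},\dots,b_{8})$ by direct substitution. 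The reflection $w_{0}$ is only slightly harder: the map $g\mapsto g-a_{0}f/(f-t)$ moves the infinitely-near point $p_{4}$ across the exceptional curve $E_{3}$, and one verifies in the chart $(U_{3},V_{3})$ at $p_{3}$ that this realizes the reflection in $\alpha_{0}=\mathcal{H}_{f}-\mathcal{E}_{3}-\mathcal{E}_{4}$, interchanging $E_{3}\leftrightarrow H_{f}-E_{3}-E_{4}$ after the forced relabeling of exceptional divisors.

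The substantive case is $w_{2}$, whose root $\alpha_{2}=\mathcal{H}_{g}-\mathcal{E}_{1}-\mathcal{E}_{5}$ is a genuine $(-2)$-curve class and whose birational map $(f,g)\mapsto(f+a_{2}f/g,\,g+a_{2})$ is a quadratic Cremona transformation. Here one must resolve the indeterminacy locus, identify the $(-1)$-curves that are contracted and those that are created, and simultaneously track all eight base points (again passing to charts at the infinitely-near $p_{4},p_{8}$) to confirm both the claimed transformation $(b_{1},\dots,b_{8})\mapsto(\overline{b}_{1},\dots,\overline{b}_{8})$ and the reflection formula on $\operatorname{Pic}(\mathcal{X})$. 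This is the step I expect to be the main obstacle, the bookkeeping of the exceptional-divisor relabeling induced by the point permutation being the delicate part. Finally, two consistency checks complete the proof: (i) each $w_{i}$, after the unavoidable post-composition with a gauge transformation from the three-parameter subgroup $(f,g)\mapsto(\lambda f,\mu g+\nu)$, restores the normalization \eqref{eq:d4-param-norm} (that is, $b_{5}=0$, $b_{7}=1$, $b_{3}=t$, and $\chi(\delta)=1$), which is exactly what pins down the explicit normalized formulas stated above; and (ii) the $w_{i}$ satisfy the Coxeter relations of $W\!\left(D_{4}^{(1)}\right)$. Relation (ii) may be checked by composing the explicit birational maps, or deduced abstractly: the induced maps on $\operatorname{Pic}(\mathcal{X})$ manifestly satisfy these relations, and a normalized birational self-map fixing the point configuration and acting trivially on $\operatorname{Pic}(\mathcal{X})$ is the identity. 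We follow \cite{KajNouYam:2017:GAOPE} for the choice of normalization and the resulting closed forms.
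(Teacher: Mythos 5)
Your strategy is sound and would establish the theorem, but it runs in the opposite logical direction from the paper's argument. You take the closed-form maps as given and verify them: lift each $w_{i}$ to the blown-up surface, resolve the indeterminacies (nontrivial essentially only for the quadratic map $w_{2}$), track the eight base points and exceptional divisors in charts, and confirm that the induced action on $\operatorname{Pic}(\mathcal{X})$ is the reflection \eqref{eq:root-refl} and that the parameter change is as tabulated. The paper instead \emph{derives} the formulas: knowing that $(w_{i})_{*}$ must send $\mathcal{H}_{f},\mathcal{H}_{g}$ to prescribed classes, it writes $\overline{f},\overline{g}$ as coordinates on the corresponding pencils of curves (e.g.\ for $w_{0}$, the pencil of $(1,1)$-curves through the cascade $p_{3}\leftarrow p_{4}$, with basis $(f-b_{3})g-b_{4}$ and $f-b_{3}$), obtains the evolution of the $b_{i}$ from the action on the root variables via the period map (Remark~\ref{rem:period-action}), and pins down the remaining M\"obius constants from the required matching of exceptional divisors, such as $w_{0}(\overline{\mathcal{E}}_{5})=\mathcal{E}_{5}$. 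The derivation route explains where the closed forms come from and sidesteps resolving $w_{2}$ as a Cremona transformation from scratch; your verification route is more brute-force at $w_{2}$ but is self-contained and additionally records the Coxeter-relation check, which the paper leaves implicit in the normalization requirement (your abstract argument for it --- a normalized map acting trivially on $\operatorname{Pic}(\mathcal{X})$ and fixing the configuration is the identity --- is the right one).

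Two small points to fix in the write-up, neither a genuine gap. First, for $w_{0}$ the reflection exchanges the classes $\mathcal{E}_{3}\leftrightarrow\mathcal{H}_{f}-\mathcal{E}_{4}$ and $\mathcal{E}_{4}\leftrightarrow\mathcal{H}_{f}-\mathcal{E}_{3}$; the class $\mathcal{H}_{f}-\mathcal{E}_{3}-\mathcal{E}_{4}$ is the symmetry root $\alpha_{0}$ itself, generically not effective, and it is sent to its negative, so the phrase ``interchanging $E_{3}\leftrightarrow H_{f}-E_{3}-E_{4}$'' should be corrected. Second, in the verification direction no post-composition with a gauge transformation is actually needed: the stated formulas already preserve the normalization \eqref{eq:d4-param-norm} (e.g.\ they fix $b_{5}=0$, $b_{7}=1$, $b_{3}=t$), so checking this is part of the verification rather than an extra adjustment; the gauge freedom only explains why such normalized representatives exist and are unique.
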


\begin{proof}
The proof is standard, see \cite{DzhTak:2018:OSAOSGTODPE} for careful explanations, but to make this paper self-contained, we 
briefly outline one such computation. 
The reflection $w_{0}$ in the root $\alpha_{0} = \mathcal{H}_{f} - \mathcal{E}_{3} - \mathcal{E}_{4}$ acts on 
$\operatorname{Pic}(\mathcal{X})$ by
\begin{equation*}
	w_{0}(\mathcal{H}_{f}) = \mathcal{H}_{f},\ 
	w_{0}(\mathcal{H}_{g}) = \mathcal{H}_{f} + \mathcal{H}_{g} - \mathcal{E}_{3} - \mathcal{E}_{4},\ 
	w_{0}(\mathcal{E}_{3}) = \mathcal{H}_{f} - \mathcal{E}_{4},\  w_{0}(\mathcal{E}_{4}) = \mathcal{H}_{f} - \mathcal{E}_{3},
	\  w_{0}(\mathcal{E}_{i}) = \mathcal{E}_{i}, i\neq 3,4.
\end{equation*}
Thus, we are looking for a mapping $w_{0}: \mathcal{X}_{\mathbf{b}}\to \mathcal{X}_{\overline{\mathbf{b}}}$ that is given in the affine 
chart $(f,g)$ by a formula $w_{0}(f,g) = (\overline{f},\overline{g})$ so that 
\begin{equation*}
	w_{0}^{*}(\mathcal{H}_{\overline{f}}) = \mathcal{H}_{f},\qquad w_{0}^{*}(\mathcal{H}_{\overline{g}}) = 
	\mathcal{H}_{f} + \mathcal{H}_{g} - \mathcal{E}_{3} - \mathcal{E}_{4}.
\end{equation*}
Thus, up to M\"obius transformations, $\overline{f}$ coincides with $f$ and $\overline{g}$ is a coordinate on a pencil of $(1,1)$-curves
passing through the degeneration cascade $p_{3}(b_{3},\infty)\leftarrow p_{4}(b_{3},\infty;g(f-b_{3})=b_{4})$. 
Let $|H_{\overline{g}}| = \{A fg + Bf + C g + D = 0\}$. Then $p_{3}$ imposes the condition $A b_{3} + C = 0$, and so 
$|H_{\overline{g}}| = \{A (f-b_{3})g + Bf  + D = 0\}$. Point $p_{4}$ then imposes the condition $A b_{4} + B b_{3} + D = 0$
and we see that the basis of the pencil $|H_{\overline{g}}|$ is given by $ (f-b_{3})g - b_{4}$ and $f - b_{3}$.
Taking the M\"obius transformations into account, we get
\begin{equation*}
	\overline{f} = \frac{A f + B}{C f + D},\quad \overline{g} = \frac{K((f - b_{3})g - b_{4}) + L (f - b_{3})}{M((f - b_{3})g - b_{4}) + N (f - b_{3})},
\end{equation*}
where $A,\ldots,N$ are some constants to be determined. We also know that the root variables change as $\overline{a}_{0} = - a_{0}$, 
$\overline{a}_{2} = a_{0} + a_{2}$, and $\overline{a}_{i} = a_{i}$ otherwise. This then gives us the evolution of parameters $b_{i}$, e.g., 
$\overline{b}_{1} = - \overline{a}_{2} = - a_{2} - a_{0} = b_{1} - b_{4}/t$ (recall that $t = b_{3}$), and so on. The constants $A,\ldots,N$ can be determined from the action 
of $w_{0}$ on exceptional divisors. For example, $w_{0}(\overline{\mathcal{E}}_{5}) = \mathcal{E}_{5}$ is equivalent to 
\begin{equation*}
	(\overline{f},\overline{g})(0,0) = (0,0)\quad\implies\quad B = 0,\quad L = - K \frac{b_{4}}{b_{3}},
\end{equation*}
$w_{0}(\overline{\mathcal{E}}_{7}) = \mathcal{E}_{7}$ implies that $M=0$,  then $w_{0}(\overline{\mathcal{E}}_{1}) = \mathcal{E}_{1}$ gives 
\begin{equation*}
	(\overline{f},\overline{g})(\infty,b_{1}) = (\infty,\overline{b}_{1}) = (\infty, b_{1} - b_{4}/t)\quad\implies\quad C = 0,\quad K/N = 1,
\end{equation*}
and so on.
 
\end{proof}

Let us now describe the group of Dynkin diagram automorphisms. It is clear that $\operatorname{Aut}\left(D_{4}^{(1)}\right)\simeq \mathcal{S}_{4}$, so we only describe
three transpositions that generate the whole group.

\begin{theorem}\label{thm:bir-aut-d4}
	Consider the following generators $\sigma_{1},\ldots ,\sigma_{3}$ of $\operatorname{Aut}\left(D_{4}^{(1)}\right)$ 
	that act on the symmetry and the surface root bases as follows (here we use the standard cycle 
	notations for permutations):
	\begin{equation}
		\sigma_{1} = (\alpha_{3}\alpha_{4})=(\delta_{3}\delta_{4}),\qquad \sigma_{2} = (\alpha_{0}\alpha_{3})=(\delta_{0}\delta_{3}), \qquad
		\sigma_{3} = (\alpha_{1}\alpha_{4})=(\delta_{1}\delta_{4}).
	\end{equation}
	Then $\sigma_{i}$ act on the Picard lattice as
	\begin{equation*}
			\sigma_{1} = (\mathcal{E}_{6}\mathcal{E}_{8})w_{\rho},\qquad \sigma_{2} = (\mathcal{E}_{3}\mathcal{E}_{7})(\mathcal{E}_{4}\mathcal{E}_{8}), \qquad
			\sigma_{3} = (\mathcal{E}_{1}\mathcal{E}_{5})(\mathcal{E}_{2}\mathcal{E}_{6}),			
	\end{equation*}
	where $w_{\rho}$ is a reflection  \eqref{eq:root-refl} in the root $\rho = \mathcal{H}_{f} - \mathcal{E}_{5} - \mathcal{E}_{7}$ (note also that a transposition 
	$(\mathcal{E}_{i} \mathcal{E}_{j})$ is induced by a reflection in the root $\mathcal{E}_{i} - \mathcal{E}_{j}$).
	The induced elementary 	birational mappings are then given by the following expressions:
	\begin{alignat}{2}
		\sigma_{1}&: 
		\left(\begin{matrix}
 			{b}_{1} & {b}_{2} & 1-t 	& \frac{(1-t)b_{4}}{t}\\
 			0 		& {b}_{8} 	& 1		& {b}_{6}
		\end{matrix}\ ;\  \begin{matrix}
		 			1 - f\\ \frac{(f-1)g}{f}
		 		\end{matrix}\right) &&= \left(\begin{matrix}
 			a_{0} & a_{1} & a_{2}\\
 			a_{4} & a_{3} & 1 - t 
 		\end{matrix}\ ;\  
		\begin{matrix}
 			1 - f\\ \frac{(f-1)g}{f}
		\end{matrix}\right),\label{eq:sigma1}\\
		\sigma_{2}&: 
		\left(\begin{matrix}
 			{b}_{1} & {b}_{2} & \frac{1}{t} & \frac{b_{8}}{t}\\
 			0 		& {b}_{6} 	& 1		& \frac{b_{4}}{t}
		\end{matrix}\ ;\  \begin{matrix}
		 			\frac{f}{t}\\ g 
		 		\end{matrix}\right) &&= \left(\begin{matrix}
 			a_{3} & a_{1} 	& a_{2}\\
 			a_{0} & a_{4} & \frac{1}{t}
 		\end{matrix}\ ;\  
		\begin{matrix}
 			\frac{f}{t} \\ g 
		\end{matrix}\right),\\
		\sigma_{3}&: 
		\left(\begin{matrix}
 			{b}_{1}  & {b}_{1} - b_{6} 	& \frac{1}{t} 	& \frac{b_{4}}{t^{2}}\\
 			0 		& {b}_{1} - b_{2} 	& 1		& {b}_{8}
		\end{matrix}\ ;\  \begin{matrix}
		 			\frac{1}{f}\\ b_{1} - g
		 		\end{matrix}\right) &&= \left(\begin{matrix}
 		a_{0} & a_{4}	&  a_{2}\\
 		a_{3} & a_{1} & \frac{1}{t}
 		\end{matrix}\ ;\  
		\begin{matrix}
 			\frac{1}{f}\\ - g - a_{2}
		\end{matrix}\right).		
	\end{alignat}	
\end{theorem}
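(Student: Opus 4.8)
I would prove Theorem~\ref{thm:bir-aut-d4} by following the same two-step recipe used in the proof of Theorem~\ref{thm:bir-weyl-d4}: first pin down the linear action of each $\sigma_{i}$ on $\operatorname{Pic}(\mathcal{X})$, then realise that linear action by an explicit birational map on the family $\mathcal{X}_{\mathbf{b}}$, reading off along the way the induced evolution of the configuration parameters $b_{j}$, hence of the root variables $a_{i}$ and of $t$. The only structural input needed is that the three prescribed transpositions $(\delta_{3}\delta_{4})$, $(\delta_{0}\delta_{3})$, $(\delta_{1}\delta_{4})$ of the four legs $\{0,1,3,4\}$ of the $D_{4}^{(1)}$ diagram already generate $\operatorname{Aut}\left(D_{4}^{(1)}\right)\simeq\mathcal{S}_{4}$, so treating these three suffices.

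\textbf{Step 1 (action on $\operatorname{Pic}(\mathcal{X})$).} I would start from the observation that any permutation of the exceptional classes $\mathcal{E}_{1},\dots,\mathcal{E}_{8}$ is automatically an isometry of the Picard lattice fixing $\mathcal{H}_{f}$ and $\mathcal{H}_{g}$, and that a reflection $w_{\rho}$ in a $(-2)$-class acts isometrically by \eqref{eq:root-refl}; hence each of $(\mathcal{E}_{6}\mathcal{E}_{8})w_{\rho}$, $(\mathcal{E}_{3}\mathcal{E}_{7})(\mathcal{E}_{4}\mathcal{E}_{8})$, $(\mathcal{E}_{1}\mathcal{E}_{5})(\mathcal{E}_{2}\mathcal{E}_{6})$ is an isometry of $\operatorname{Pic}(\mathcal{X})$ (indeed a product of reflections, since $(\mathcal{E}_{i}\mathcal{E}_{j})=w_{\mathcal{E}_{i}-\mathcal{E}_{j}}$). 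Then, using only the intersection pairing \eqref{eq:int-form}, I would check directly that these maps carry the surface root basis \eqref{eq:d-roots-d41} and the symmetry root basis \eqref{eq:a-roots-d4} to themselves by exactly the prescribed transpositions $\sigma_{1}=(\delta_{3}\delta_{4})=(\alpha_{3}\alpha_{4})$, and so on, and in particular fix the anti-canonical class $\delta$. Since these maps send effective classes to effective classes — e.g.\ under $\sigma_{1}$ one has $\mathcal{H}_{f}\mapsto\mathcal{H}_{f}$ and $\mathcal{H}_{g}\mapsto\mathcal{H}_{f}+\mathcal{H}_{g}-\mathcal{E}_{5}-\mathcal{E}_{7}$, both effective — they are induced by honest birational automorphisms of the family of Sakai surfaces; and because $\sigma_{1},\sigma_{2},\sigma_{3}$ realise the transpositions $(34),(03),(14)$, which generate $\mathcal{S}_{4}$, the whole of $\operatorname{Aut}\left(D_{4}^{(1)}\right)$ is obtained.

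\textbf{Step 2 (birational realisation).} For each $\sigma_{i}$ I would compute $\sigma_{i}(\mathcal{H}_{f})$ and $\sigma_{i}(\mathcal{H}_{g})$ from the Step-1 formulas; up to M\"obius transformations the new coordinates $\overline{f},\overline{g}$ are projective coordinates on the pencils of curves with those classes, i.e.\ curves of the appropriate bidegree through the prescribed base points among $p_{1},\dots,p_{8}$ (the cascades $p_{3}\leftarrow p_{4}$ and $p_{7}\leftarrow p_{8}$ included). Writing a generic curve of that bidegree and imposing the incidence conditions — exactly as in the $w_{0}$ computation, where $|H_{\overline{g}}|=\{Afg+Bf+Cg+D=0\}$ was cut down to the basis $\{(f-b_{3})g-b_{4},\ f-b_{3}\}$ — produces an explicit basis of each pencil, whose ratio is the sought coordinate up to a residual $\mathbf{PGL}_{2}$. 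I would then fix that residual freedom from the exceptional-divisor correspondences recorded in Step 1, i.e.\ from the prescribed images of the base points $p_{j}$ (and of the cascade directions), together with the normalisation \eqref{eq:d4-param-norm}; comparing the resulting action on the $b_{j}$ with the parameterisation \eqref{eq:d4-root_vars}--\eqref{eq:d4-root_var_par} gives the action on the root variables, remembering that a diagram automorphism may also move $t$ (here $t\mapsto 1-t$ for $\sigma_{1}$ and $t\mapsto 1/t$ for $\sigma_{2},\sigma_{3}$; since the $\sigma_{i}$ are involutions, by Remark~\ref{rem:period-action} their action on root variables coincides with their action on roots). This reproduces the displayed formula \eqref{eq:sigma1} for $\sigma_{1}$ and the analogous formulas for $\sigma_{2},\sigma_{3}$.

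\textbf{Main obstacle.} The cases $\sigma_{2}$ and $\sigma_{3}$ are essentially bookkeeping: their lattice action is a pure permutation of exceptional classes, the pencils stay of bidegree $(1,0)$ or $(0,1)$, and the maps come out M\"obius in $f$ and affine in $g$, the real content being the relabelling of blown-up points forced by the permutation (which in turn forces the shift of $t$). The delicate case is $\sigma_{1}$: here the lattice action involves the reflection $w_{\rho}$ with $\rho=\mathcal{H}_{f}-\mathcal{E}_{5}-\mathcal{E}_{7}$, so that $\sigma_{1}(\mathcal{H}_{g})=\mathcal{H}_{f}+\mathcal{H}_{g}-\mathcal{E}_{5}-\mathcal{E}_{7}$ and the $\overline{g}$-pencil is a genuine pencil of $(1,1)$-curves through $p_{5}$ and $p_{7}$, with $p_{5},p_{7}$ themselves blown up (since $\sigma_{1}(\mathcal{E}_{5})=\mathcal{H}_{f}-\mathcal{E}_{7}$ and $\sigma_{1}(\mathcal{E}_{7})=\mathcal{H}_{f}-\mathcal{E}_{5}$). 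One must therefore trace the incidence constraints carefully through both degeneration cascades and re-impose the normalisation $b_{5}=0$, $b_{7}=1$, $\chi(\delta)=1$ after $\sigma_{1}$ has sent $t\mapsto 1-t$; verifying that the final formula $(f,g)\mapsto\bigl(1-f,\ (f-1)g/f\bigr)$ is simultaneously compatible with every exceptional-divisor correspondence is the step most prone to slips, and is best carried out with a computer algebra system, as the authors do for the analogous computations elsewhere.
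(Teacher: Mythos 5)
Your proposal is correct and follows essentially the same two-step route as the paper's own proof: determine the action on $\operatorname{Pic}(\mathcal{X})$, then realise it birationally by identifying $\overline{f},\overline{g}$ with coordinates on the pencils $|\sigma_{i}(\mathcal{H}_{f})|$, $|\sigma_{i}(\mathcal{H}_{g})|$ and fixing the residual M\"obius freedom from the exceptional-divisor correspondences, reading off the evolution of the $b_{j}$, the root variables, and $t$ (the paper likewise carries this out in detail only for $\sigma_{1}$, where the $(1,1)$-pencil through $p_{5},p_{7}$ appears, and omits $\sigma_{2},\sigma_{3}$ as routine). The only cosmetic difference is that you verify the stated lattice formulas against the root bases, while the paper deduces them from the prescribed root permutation; this does not change the substance of the argument.
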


\begin{proof}
We briefly outline the proof for $\sigma_{1}$. First, note that we define the action of $\sigma_{1}$ on the symmetry roots $\alpha_{i}$ and then 
try to deduce its action on both the surface roots $\delta_{i}$ and also on all of $\operatorname{Pic}(\mathcal{X})$. Moreover, $\sigma_{1}$ is an 
involution and from $\sigma_{1}: \alpha_{3} \leftrightarrow \alpha_{4}$  we see that it is natural to ask that 
$\sigma_{1}:\mathcal{H}_{f}-\mathcal{E}_{7}\leftrightarrow \mathcal{E}_{5}$ and $\sigma_{1}: \mathcal{E}_{6}\leftrightarrow \mathcal{E}_{8}$.
Looking at the surface roots $\delta_{i}$ we see that then $\sigma_{1}$ permutes the roots $\delta_{3}$ and $\delta_{4}$, i.e., 
$\sigma_{1}:\mathcal{H}_{f}-\mathcal{E}_{5}\leftrightarrow \mathcal{E}_{7}$. Thus, $\sigma_{1}$ fixes $\mathcal{H}_{f}$. Requiring 
that $\sigma_{1}$ fixes $\alpha_{i}$ and $\delta_{i}$ for $i=0,1,2$ implies that $\sigma_{1}$ fixes $\mathcal{E}_{i}$ for $i=1,\ldots,4$.
From this it immediately follows that $\sigma_{1}(\mathcal{H}_{g}) = \mathcal{H}_{f} + \mathcal{H}_{g} - \mathcal{E}_{5} - \mathcal{E}_{7}$.
It is now easy to see that $\sigma_{1}$ can be realized as a composition of two reflections in the roots $\mathcal{E}_{6} - \mathcal{E}_{8}$ and
$\mathcal{H}_{f} - \mathcal{E}_{5} - \mathcal{E}_{7}$, 
\begin{equation*}
	\sigma_{1} = w_{\mathcal{E}_{6}-\mathcal{E}_{8}} w_{\mathcal{H}_{f}- \mathcal{E}_{5}-\mathcal{E}_{7}}: 
	\mathcal{H}_{g}\mapsto \mathcal{H}_{f} + \mathcal{H}_{g} - \mathcal{E}_{5} - \mathcal{E}_{7},\ \mathcal{E}_{5}\mapsto \mathcal{H}_{f} - \mathcal{E}_{7},\ 
	\mathcal{E}_{6}\mapsto \mathcal{E}_{8},\ \mathcal{E}_{7}\mapsto \mathcal{H}_{f} - \mathcal{E}_{5},\ 
	\mathcal{E}_{8}\mapsto \mathcal{E}_{6},
\end{equation*}
and the remaining generators of $\operatorname{Pic}(\mathcal{X})$ are fixed. The rest of the proof is now similar to the previous Theorem. Let
$\sigma_{1}: \mathcal{X}_{\mathbf{b}}\to \mathcal{X}_{\overline{\mathbf{b}}}$ be written in the affine 
chart $(f,g)$ as $\sigma_{1}(f,g) = (\overline{f},\overline{g})$. Requiring that
\begin{equation*}
	\sigma_{1}^{*}(\mathcal{H}_{\overline{f}}) = \mathcal{H}_{f},\qquad \sigma_{1}^{*}(\mathcal{H}_{\overline{g}}) = 
	\mathcal{H}_{f} + \mathcal{H}_{g} - \mathcal{E}_{5} - \mathcal{E}_{7}
\end{equation*}
we get the mapping up to M\"obius transformation,
\begin{equation*}
	\overline{f} = \frac{A f + B}{C f + D},\quad \overline{g} = \frac{K(g(f-b_{7}) + b_{5} b_{7}) + L f}{M(g(f-b_{7}) + b_{5} b_{7}) + N f} = 
	\frac{K g (f-1) + L f}{M g(f - 1) + N f},
\end{equation*}
where we used the normalization $b_{5} =0$ and $b_{7} = 1$; as usual, $A,\ldots,N$ are some constants to be determined. 
We also know that $\overline{a}_{3} = a_{4}$, $\overline{a}_{4} = a_{3}$, 
and $\overline{a}_{i} = a_{i}$ otherwise. Thus, $\overline{b}_{1} = b_{1}$, $\overline{b}_{2} = b_{2}$, $\overline{b}_{6} = b_{8}$, and 
$\overline{b}_{8} = b_{6}$. Note that  $b_{3} = t$ \emph{is just a notation}, it can (and actually will) evolve. 
From $\sigma_{1}(\mathcal{H}_{f} - \mathcal{E}_{5}) = \mathcal{E}_{7}$ we see that 
$(\overline{f},\overline{g})(0,g) = \left(\frac{B}{D},- \frac{K}{M}\right) = (1,\infty)$ and so $B = D$ and $M=0$.
From $\sigma_{1}(\mathcal{H}_{f} - \mathcal{E}_{7}) = \mathcal{E}_{5}$ we see that 
$(\overline{f},\overline{g})(1,g) = \left(\frac{A+B}{C+D},\frac{L}{M}\right) = (0,0)$ and so $A = -B$ and $L=0$. Finally, from 
$\sigma_{1}(\mathcal{E}_{1}) = \mathcal{E}_{1}$ we see that 
$(\overline{f},\overline{g})(\infty,b_{1}) = \left(\frac{A}{C},\frac{K b_{1}}{N}\right) = (\infty,b_{1})$; so $C = 0$ and $K=N$. Thus, 
\begin{equation*}
	\overline{f} = 1 - f,\qquad \overline{g} = \frac{(f-1)g}{f}.
\end{equation*}
Finally, from $\sigma_{1}(\mathcal{E}_{3}) = \mathcal{E}_{3}$ we see that 
$(\overline{f},\overline{g})(b_{3},\infty) = (\overline{f},\overline{g})(t,\infty) = (1-t,\infty) = (\overline{t},\infty)$, and so we see that 
the parameter $t$ indeed evolves, $\overline{t} = 1-t$. This is related to the fact that elements from $\operatorname{Aut}\left(D_{4}^{(1)}\right)$
are no longer standard B\"acklund transformations of $\Pain{VI}$. This completes the proof of \eqref{eq:sigma1}.
The proof for the other $\sigma$s is similar and is omitted. 
\end{proof}

Finally, the semi-direct product structure is defined by the action of $\sigma \in \operatorname{Aut}\left(D_{4}^{(1)}\right)$ on 
$W\left(D_{4}^{(1)}\right)$ via $w_{\sigma(\alpha_{i})} = \sigma w_{\alpha_{i}} \sigma^{-1}$.



\subsection{The standard discrete d-$\Pain{V}$ Painlev\'e Equation} 
\label{sub:the_standard_discrete_d_p__text_v_painlev_e_equation}
As is well-know, there are infinitely many different discrete Painlev\'e equations of the same type, since they correspond to the non-conjugate 
translations in the affine symmetry sub-lattice $Q$. Some of these equations are special, since they either appear in applications, or have
a particularly nice form, or have degenerations to other known equations. In the d-$\dPain{D_{4}^{(1)}/D_{4}^{(1)}}$ family one such 
equation is known as a difference Painlev\'e-V equation, since it has a continuous limit to the differential Painlev\'e-V equation.

In \cite{KajNouYam:2017:GAOPE} this equation is given in the following form,
\begin{equation}\label{eq:dPv-std}
	\overline{f} f = \frac{t g (g - a_{4})}{(g + a_{2}) (g + a_{1} + a_{2})},\qquad 
	g + \underline {g} = a_{0} + a_{3} + a_{4} + \frac{a_{3}}{f - 1} + \frac{t a_{0}}{ f - t},
\end{equation}
with the root variable evolution and normalization given by 
\begin{equation}\label{eq:dPv-rv-evol}
	\overline{a}_{0} = a_{0} - 1, \quad \overline{a}_{1} = a_{1}, \quad \overline{a}_{2} = a_{2} + 1,\quad \overline{a}_{3} = a_{3} - 1,\quad
	\overline{a}_{4} = a_{4},\qquad a_{0} + a_{1} + 2 a_{2}  + a_{3} + a_{4} = 1.
\end{equation}
From the root variable evolution \eqref{eq:dPv-rv-evol} we immediately see that the corresponding translation on the root lattice is 
\begin{equation}\label{eq:dPv-transl}
	\varphi_{*}: \upalpha =  \langle \alpha_{0}, \alpha_{1}, \alpha_{2}, \alpha_{3}, \alpha_{4}  \rangle
	\mapsto \varphi_{*}(\upalpha) = \upalpha + \langle 1,0,-1,1,0 \rangle \delta.
\end{equation}
Using the standard techniques, see \cite{DzhTak:2018:OSAOSGTODPE} for a detailed example, we get the following decomposition of $\psi$ in 
terms of the generators of $\widetilde{W}\left(D_{4}^{(1)}\right)$:
\begin{equation}\label{eq:dPv-decomp}
	\varphi = \sigma_{3}\sigma_{2} w_{3} w_{0} w_{2} w_{4} w_{1} w_{2}. 
\end{equation}

\section*{Acknowledgements} 
\label{sec:acknowledgements} AD acknowledges the support of the University of Northern Colorado Spring~2019 RSCW grant and the support 
of the MIMUW grant   
to visit Warsaw in August 2019; that visit was essential for the success of the project. GF acknowledges the support of the 
National Science Center (Poland) via grant OPUS 2017/25/B/BST1/00931. 
AS was supported by a University College London Graduate Research Scholarship and Overseas Research Scholarship.

Part of this work was done at the 15th International Symposium on Orthogonal Polynomials, Special Functions and Applications (OPSFA-19) 
in Hagenberg, Austria, and AD and GF would like to thank the organizers for an interesting and stimulating Conference. 

We also thank Kenji Kajiwara, Tomoyuki Takenawa, and Yasuhiko Yamada for helpful comments and discussions.

\bibliographystyle{amsalpha}

\providecommand{\bysame}{\leavevmode\hbox to3em{\hrulefill}\thinspace}
\providecommand{\MR}{\relax\ifhmode\unskip\space\fi MR }
\providecommand{\MRhref}[2]{%
  \href{http://www.ams.org/mathscinet-getitem?mr=#1}{#2}
}
\providecommand{\href}[2]{#2}

\end{document}